\documentclass[letterpaper,11pt]{article}

\usepackage[T1]{fontenc}
\usepackage[utf8]{inputenc}
\usepackage{lmodern} %
\usepackage{authblk}
\usepackage[bookmarks,colorlinks,breaklinks]{hyperref}
\hypersetup{urlcolor=blue, colorlinks=true, citecolor=green!50!black, linkcolor=blue}
\usepackage{multirow}
\usepackage[T1]{fontenc} %
\usepackage{amsmath}
\usepackage{amsfonts}
\usepackage{amssymb}
\usepackage{amsthm}
\usepackage{thmtools}
\usepackage{pgffor}
\usepackage{xcolor}
\usepackage[lined,boxed,ruled,norelsize,algo2e,linesnumbered,noend]{algorithm2e}
\usepackage{cleveref}
\usepackage{graphicx}
\usepackage{geometry}
\usepackage{enumitem}
\usepackage{url}
\usepackage{wrapfig}
\ifdefined\DEBUG
	\usepackage{showlabels}
    \usepackage{lineno}
    \linenumbers

\else
\fi

\declaretheorem[numberwithin=section,refname={Theorem,Theorems},Refname={Theorem,Theorems}]{theorem}
\declaretheorem[numberlike=theorem]{lemma}

\declaretheorem[numberlike=theorem]{corollary}
\declaretheorem[numberlike=theorem]{definition}

\newcommand{\F}{\mathbb{F}}

\newcommand{\Z}{\mathbb{Z}}

\renewcommand{\P}{\mathbb{P}}

\renewcommand{\tilde}{\widetilde}

\newcommand{\tO}{\tilde{O}}

\DeclareMathOperator*{\MM}{MM}

\newcommand{\poly}{\operatorname{poly}}
\newcommand{\polylog}{\operatorname{polylog}}
\newcommand{\rank}{\operatorname{rank}}
\renewcommand{\mod}{\operatorname{mod}}

\foreach \x in {A,...,Z}{%
	\expandafter\xdef\csname m\x\endcsname{\noexpand\mathbf{\x}}
}

\foreach \x in {A,...,Z}{%
	\expandafter\xdef\csname om\x\endcsname{\noexpand\overline{\noexpand\mathbf{\x}}}
}

\foreach \x in {A,...,Z}{%
	\expandafter\xdef\csname c\x\endcsname{\noexpand\mathcal{\x}}
}

\newcommand{\otau}{\noexpand{\overline{\tau}}}
\newcommand{\omu}{\noexpand{\overline{\mu}}}
\newcommand{\osigma}{\noexpand{\overline{\sigma}}}%
\newcommand{\os}{\noexpand{\overline{s}}}
\newcommand{\ot}{\noexpand{\overline{t}}}
\newcommand{\og}{\noexpand{\overline{g}}}
\newcommand{\ou}{\noexpand{\overline{u}}}
\newcommand{\ov}{\noexpand{\overline{v}}}
\newcommand{\ow}{\noexpand{\overline{w}}}
\newcommand{\ox}{\noexpand{\overline{x}}}
\newcommand{\oy}{\noexpand{\overline{y}}}
\newcommand{\oz}{\noexpand{\overline{z}}}
\newcommand{\oc}{\noexpand{\overline{c}}}

\newcommand{\init}{\mathrm{(init)}}%

\makeatletter
\renewcommand{\paragraph}{%
	\@startsection{paragraph}{4}%
	{\z@}{1.25ex \@plus 1ex \@minus .2ex}{-1em}%
	{\normalfont\normalsize\bfseries}%
}
\makeatother

\title{
Dynamic Distances Without Matrix Inverse
}
\author{
Jan van den Brand\thanks{Georgia Institute of Technology. Supported by NSF-2504994 and NSF-2338816.}\quad\quad
Adam Karczmarz\thanks{University of Warsaw. Supported by the National Science Centre (NCN) grant no.~2022/47/D/ST6/02184.}\quad\quad
Aryan Patel\thanks{Georgia Institute of Technology}}
\date{}

\begin{document}

\pagenumbering{roman}
\maketitle

\begin{abstract}

We study the fully dynamic exact distance problem in unweighted directed graphs, where the graph undergoes edge insertions and deletions, and the distance between two query vertices must be reported.
Under the combinatorial Boolean matrix multiplication conjecture, any subquadratic-time algorithm for this problem must use algebraic techniques.
All prior work uses the polynomial matrix inverse as its algebraic tool, but within subquadratic time, this technique can only maintain distances up to a small-polynomial hop bound.
So larger distances must be reassembled from short-range pieces by a combinatorial layer. 
This hybrid algebraic-combinatorial construction has led to a substantial slowdown for dynamic distances compared to other dynamic graph problems relying on algebraic techniques.

We give the first subquadratic-time dynamic distance algorithm built on dynamically maintaining a different algebraic object than the matrix inverse.
By maintaining the \emph{Frobenius form} $\mA=\mK\mC\mK^{-1}$ of a dynamic generic (i.e., cyclic/non-derogatory) 
matrix $\mA$, we can query all the powers $\mA^\ell = \mK \mC^\ell \mK^{-1}$ efficiently. For adjacency matrix~$\mA$, the smallest $\ell$ with $(\mA^\ell)_{s,t}\neq 0$ is the $st$-distance in the associated graph.
So, unlike dynamic matrix inverse, we get distance information directly without hop truncation, and we no longer need to perform any combinatorial reassembly.
This purely algebraic approach to dynamic distances leads to the fastest dynamic distance algorithm with $O(n^{1.58})$ update time.

The earlier use of the Frobenius form maintained the dense change of basis $\mK$ explicitly and required $\tilde O (n^2)$ time per update \cite{KarczmarzS23}. 
Our key observation is that this change of basis is highly structured, which lets us maintain it implicitly.
This gives the first subquadratic fully dynamic algorithm for the Frobenius normal form and characteristic polynomial for generic matrices.

\end{abstract}

\clearpage
\tableofcontents

\clearpage
\pagenumbering{arabic}

\section{Introduction}

A dynamic graph algorithm maintains some property of a graph that undergoes a sequence of edge insertions and deletions, with the goal of processing each update far faster than recomputing the property from scratch. 
An algorithm is called incremental if it supports only insertions, decremental if it supports only deletions, and fully dynamic if it supports both.

Among dynamic graph problems, maintaining shortest paths is one of the most extensively studied, with a long history spanning the incremental, decremental, and fully dynamic settings in the 
$st$/arbitrary-pair query, e.g., \cite{RodittyZ11,abs-2408-14406,BrandNS19,BergamaschiHGWW21,KarczmarzS23,AlokhinaB24, BrandFN22,Brand0KLMGS24,Chuzhoy21,DoryFNV22,ChuzhoyZ23,ForsterGNS23,KyngMG24,ForsterNG23,HaeuplerLS24,Chechik18,LackiN22}, single-source, e.g., \cite{BrandK23,EvenS81,BrandN19,GutenbergWW20,KyngMG22,Liu26,BernsteinGW20,BernsteinGS21,BernsteinGS20,HenzingerKN14}, and all-pairs  \cite{King99, AusielloIMN91, BaswanaHS07, DemetrescuI04, Thorup04, AbrahamCK17, ChechikZ23, GutenbergW20b,Mao24a,EvaldFGW21,BrandN19,Bernstein16,KarczmarzL20} variants.
We focus on the fully dynamic setting on unweighted directed graphs undergoing edge updates and exact distance queries between arbitrary pairs of vertices. In particular, we aim for data structures with subquadratic worst-case update and query time. 
Subquadratic time is the natural benchmark, since a single breadth-first search on an $n$-vertex graph recomputes distances from scratch in $O(n^2)$ time, and the motivation of dynamic algorithms is to beat na\"ive recomputation. %

This quadratic barrier is natural and likely cannot be beaten with combinatorial techniques. Abboud and Vassilevska Williams \cite{AbboudW14} showed that under the combinatorial Boolean Matrix Multiplication conjecture, no combinatorial fully dynamic algorithm can achieve $n^{2-\epsilon}$ update and query time.
The only known approach to achieving subquadratic update time therefore runs through fast matrix multiplication, treating the graph as an algebraic object.
The question is then not \emph{whether} to be algebraic, but \emph{which} algebraic paradigm to maintain. 

\paragraph{Dynamic Matrix Inverse}
The algebraic approach traces back to the path-counting algorithms of King-Sagert \cite{KingS02} and Demetrescu-Italiano \cite{DemetrescuI05} for reachability, and was refined by Sankowski, who reduced dynamic reachability \cite{Sankowski04}, matching \cite{Sankowski07}, and distances \cite{Sankowski05} to a single algebraic primitive: \emph{the dynamic matrix inverse}.
Every subsequent algebraic distance algorithm is built on this matrix inversion approach \cite{BrandNS19,BrandN19,BergamaschiHGWW21,BrandFN22,BrandK23,KarczmarzS23,AlokhinaB24}.

However, maintaining distances algebraically is structurally quite different from the other dynamic graph problems.
For most dynamic algebraic graph algorithms, the maintained algebraic object is already the answer. The rank of the Tutte-matrix corresponds to the size of a maximum matching. Whether $(\mI-x\mA)^{-1}_{s,t}\neq 0$ for adjacency matrix $\mA$ corresponds to $st$-reachability.
Dynamic rank and matrix inverse are purely algebraic problems, and their dynamic algorithms are correspondingly efficient.
Dynamic distance is different in kind and requires the combination of both algebraic and combinatorial techniques, resulting in hybrid algorithms.
When solving distances via matrix inversion, the matrix inverse\footnote{One inverts a \emph{polynomial} matrix modulo $x^{h+1}$, i.e., each entry of the matrix inverse is a polynomial of degree $h$.} only returns distances up to a hop bound $h \ll n$.
To extend the algorithm to larger distances, one must reassemble hop-bounded path pieces. This reassembly is the combinatorial layer that the entire line of work makes use of. 
The most prevalent combinatorial method is the hitting-set trick~\cite{UllmanY91}\footnote{The same idea is used in static algebraic all-pairs shortest paths algorithms~\cite{AlonGM97,Zwick02} and numerous combinatorial dynamic shortest paths algorithms, e.g.,~\cite{King99,BaswanaHS07,AbrahamCK17}.}.
Trading off between the algebraic layer (inefficient for \emph{large} hop-bound $h$) and the combinatorial layer (inefficient for \emph{small} hop-bound $h$) results in a substantial slowdown for dynamic distances compared to the other dynamic graph problems.
The algebraic-combinatorial hybrid can be made efficient, but only by relaxing the problem to approximation and undirected graphs. For example, using spanners or emulators for the combinatorial layer \cite{BergamaschiHGWW21,BrandFN22} is efficient but these results are $(1+\epsilon)$-approximate and hold only for undirected graphs. 
Needing an additional combinatorial layer thus always weakened the algorithm substantially, either slowing it down or confining it to a much weaker model.

\paragraph{Dynamic Frobenius Form}
We completely remove the combinatorial part by maintaining a different algebraic primitive than matrix inversion.
The Frobenius normal form (also called the rational canonical form) is a type of matrix decomposition where we express a given matrix $\mA$ as $\mA = \mK \mC \mK^{-1}$. Here, the matrix $\mC$ has a special structure that allows computing $\mC^\ell$ efficiently, allowing us to efficiently query entries of $\mA^\ell=\mK\mC^\ell\mK^{-1}$. To see why this is useful for distance problems, recall that for adjacency matrix $\mA$ of an unweighted graph, $(\mA^\ell)_{s,t}$ equals the number of $st$-walks using $\ell$ steps, so the smallest $\ell$ with $(\mA^\ell)_{s,t}\neq 0$ is the $st$-distance. Hence, a dynamic algorithm maintaining the Frobenius form implies a dynamic distance algorithm for graphs.

This Frobenius form has been reached for before in static~\cite{SankowskiW19,KarczmarzS23} and dynamic~\cite{KarczmarzS23} distance problems but never maintained in subquadratic time. 
Every prior algebraic data structure that achieves subquadratic update time for exact distance uses dynamic inverse maintenance to some extent.
Even the one work \cite{KarczmarzS23} that turns to the Frobenius form does not escape the inverse, but rather shows how firmly it holds.
In the state-of-the-art data structure, \cite[Section~9.1]{KarczmarzS23} uses the Frobenius form only as a static subroutine, accelerating periodic rebuilds of the full polynomial matrix inverse. %
Their second data structure~\cite[Section~9.2]{KarczmarzS23} that avoids full matrix inverse maintenance leverages Frobenius form recomputation (from scratch) combined with matrix inverse arguments for maintaining selected entries of the $h$ first powers of the adjacency matrix~$\mA$.
Summing up, for the current subquadratic distance algorithms, the Frobenius form never becomes the dynamic object but merely a scaffolding inside the matrix inverse machinery, and is always a part of an algebraic-combinatorial hybrid algorithm.

For distance problems, it suffices to focus on Frobenius forms of so-called %
\emph{generic} matrices\footnote{Also called cyclic matrices. Genericity means that some vector $v$ exists where $v,\mA v,\mA^2 v,...,\mA^{n-1} v$ form a basis.
It is not a strong restriction as adjacency matrices can be made generic via randomization (\Cref{thm:weighted-adjacency-matrix}).}. Generic matrices allow for a particularly simple shape of $\mC$ and a transform $\mK$ built from iterates~$\mA^i v$ for some $v$, see, e.g.,~\cite{Keller-Gehrig85}.
The only prior approaches for maintaining the Frobenius form in a dynamic setting resulted in $\tilde O(n^2)$ update time \cite{KarczmarzS23,FrandsenS11} in the generic case, and even higher bounds in the non-generic case~\cite{FrandsenS11}.
The reason is structural: 
an update to matrix~$\mA$ produces a new form $\mK'\mC'\mK'^{-1}$ 
and the new basis-change matrices $\mK',\mK'^{-1}$ are dense $n\times n$ matrices that can differ from $\mK,\mK^{-1}$ significantly\footnote{
We have $\mK' = \mK\mT$ for some transform $\mT$ and original basis-change $\mK$. However, $\mT$ is a dense full-rank matrix, so implicit maintenance of $\mK'$ by explicitly storing $\mK$ and $\mT$ does not work either.}, 
so naively the form cannot be repaired in subquadratic time. 
This raises the question whether a truly dynamic Frobenius form, 
fast enough to beat the inverse, is achievable at all, or whether the dense update to change-of-basis $\mK$ is a barrier.

We show that maintenance of the Frobenius form in subquadratic time is possible.
We prove that the update to $\mK'=\mK\mT$ for some transform matrix $\mT$ is a so-called ``Toeplitz-like'' matrix~\cite{kailath1979displacement, kailath1995displacement}. Toeplitz-like matrices $\mT$ allow implicit representation via smaller $n\times k$ ($k \ll n$) dimensional matrices $\mU,\mV$, called generators. By maintaining these generators of $\mT$ explicitly, we implicitly maintain $\mK' = \mK\mT$ and thus the Frobenius form.
This allows us to achieve $O(n^{1.58})$ update time and query time\footnote{There is a tradeoff parameter that can be chosen to reduce query complexity.} for dynamic distances, resulting in the fastest algorithm yet.
Unlike all prior work, our improvement does not come from a better dynamic matrix inverse data structure or a better tradeoff with the combinatorial layer. Instead, we maintain a completely different algebraic object that yields distances directly, removing the combinatorial layer entirely.

\subsection{Our Results}
\label{sec:intro:results}

As our algorithms are algebraic and use fast matrix multiplication, we use $\MM(m,n,k)$ to denote the complexity of multiplying an $m\times n$ matrix by an $n\times k$ matrix. 
We state our results both in generality (using $\MM(\cdot)$) and for current upper bounds on fast rectangular matrix multiplication~\cite{AlmanDWXXZ25,WilliamsXXZ24}.
We use $\tilde O(\cdot)$ to hide $\polylog(n)$.

\paragraph{Dynamic Distance} Our main result is the following dynamic distance algorithm.

\begin{restatable}{theorem}{dynamicdistances}\label{thm:dynamic-distances}
    Let $G=(V,E)$ be a directed unweighted graph.
    There is a Monte Carlo randomized dynamic data structure maintaining $G$ subject to edge insertions and deletions, and supporting \linebreak $st$-distance queries for any $s,t\in V$. For a tradeoff parameter $1\le k\le n$, the update time is $\tilde O(nk + \MM(n,k,k)n/k^2)$ and query time is $\tilde O(nk)$. The produced answers are correct with high probability\footnote{I.e., with probability at least $1-1/n^\tau$ over $q=\poly(n)$ updates for a chosen constant $\tau>0$ (dependent on $q$).}.
    The data structure uses $O(n^2)$ space.
\end{restatable}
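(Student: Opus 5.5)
We reduce distance queries to the Frobenius form of a randomized weighted adjacency matrix and maintain that form under rank-one updates, with the change of basis kept only implicitly.

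\textbf{Algebraic setup.} Work over a prime field $\F_p$ with $p=\poly(n)$. Using \Cref{thm:weighted-adjacency-matrix}, replace the adjacency matrix of $G$ by a random weighted version $\mA$ that is generic with high probability. Each nonzero entry of $\mA^\ell$ is a polynomial in the random weights, namely a sum over $\ell$-step walks. By Schwartz--Zippel, $(\mA^\ell)_{s,t}\neq 0$ holds w.h.p.\ exactly when an $st$-walk of length $\ell$ exists. So the $st$-distance is the smallest $\ell<n$ with $e_s^\top \mA^\ell e_t\neq 0$.

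\textbf{Representation.} Write $\mA=\mK\mC\mK^{-1}$ with $\mC$ a companion matrix. Then the Krylov sequence $e_s^\top\mA^\ell e_t=(e_s^\top\mK)\,\mC^\ell\,(\mK^{-1}e_t)$, for $\ell=0,\dots,n-1$, is a linearly recurrent sequence governed by the characteristic polynomial. All $n$ terms follow from the first $n$ terms of the recurrence. Computing them amounts to a polynomial product/division with $\mC$'s polynomial, costing $\tilde O(n)$ once $u=e_s^\top\mK$ and $w=\mK^{-1}e_t$ are known. Concretely, $u^\top \mC^\ell w$ is the coefficient extraction of $u(x)\,x^\ell\, w$ modulo $\chi(x)$, which can be done for all $\ell$ via one transposed multipoint/power-series computation.

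The query therefore reduces to obtaining one row of $\mK$ and one column of $\mK^{-1}$. Maintain $\mK=\mK_0\mT$, where $\mK_0$ and $\mK_0^{-1}$ are explicit from the last rebuild and $\mT$ is a product of at most $k$ accumulated transforms. The key structural lemma, promised in the introduction, is that each single-entry update changes $\mK$ by $\mK'=\mK\mT_i$ with $\mT_i$ Toeplitz-like of displacement rank $O(1)$ with respect to the companion structure. Products of $j$ such matrices, and their inverses, are Toeplitz-like of displacement rank $O(j)$. Keep generators $\mU,\mV\in\F^{n\times O(k)}$ for $\mT$ and for $\mT^{-1}$. A row of $\mK$ is $(e_s^\top\mK_0)\mT$, and applying a displacement-rank-$r$ matrix to a vector costs $\tilde O(nr)$. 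This gives $\tilde O(nk)$ query time.

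\textbf{Updates.} For an update $\mA'=\mA+ab^\top$, conjugate into the current basis: $\mC+(\mK^{-1}a)(b^\top\mK)$. The vectors $\mK^{-1}a$ and $b^\top\mK$ each cost $\tilde O(nk)$ via the generators. The new form of a companion matrix plus a rank-one term can then be computed directly in $\tilde O(n)$ with polynomial arithmetic: the new characteristic polynomial comes from a determinant identity, and the new cyclic vector's Krylov basis is expressed through structured transforms. This yields $\mC'$ and the $O(1)$ generators of $\mT_i$.

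Composing generators naively grows the rank, so after each update we recompress them to rank $O(j)$ via the standard Toeplitz-like product formulas, at cost $\tilde O(nk)$. Every $k$ updates we rebuild: form $\mK_0\mT$ and $(\mT^{-1})\mK_0^{-1}$ explicitly, and reset $\mT=\mI$. Multiplying an $n\times n$ matrix by a Toeplitz-like matrix with $k$ generators is a sum of $k$ products with triangular Toeplitz matrices. Performing it in blocks of size $k$ across $n/k$ column blocks takes $\MM(n,k,k)\cdot n/k$ time per factor (up to logs), hence $\tilde O(\MM(n,k,k)n/k)$ total per rebuild. Spread over $k$ updates with standard worst-case de-amortization (two copies), this gives $\tilde O(\MM(n,k,k)n/k^2)$ per update. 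Space is $O(n^2)$ for $\mK_0$ and $\mK_0^{-1}$.

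\textbf{Main obstacle.} The crux is the structural lemma: showing that the rank-one perturbation changes the cyclic-vector Krylov basis by a low-displacement-rank transform, and that genericity is preserved w.h.p. through updates so a cyclic vector exists. Genericity can be ensured by rerandomizing the weights at rebuilds and taking a union bound.

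My first attempt would take the new Krylov matrix $\mK'=[v,\mA'v,\dots]$. Since $\mA'^iv=\mA^iv+\sum_{j<i}\mA^{j}a\,(b^\top\mA'^{i-1-j}v)$, we get $\mK^{-1}\mK'=\mI+(\text{Krylov of }\mC\text{ on }\mK^{-1}a)\cdot(\text{Toeplitz of scalars } b^\top\mA'^m v)$. The Krylov matrix of a companion matrix is itself Toeplitz-like. Hence $\mT_i$ has constant displacement rank, and the scalars $b^\top\mA'^mv$ are again a linear recurrence computable in $\tilde O(n)$.
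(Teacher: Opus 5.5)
Your architecture matches the paper's. You use the weighted adjacency matrix of \Cref{thm:weighted-adjacency-matrix} and the implicit form $\mK_0\mT\mC\mT^{-1}\mK_0^{-1}$ with $O(k)$-size generators. Each update costs an $\tilde O(n)$ companion rank-one step plus $\tilde O(nk)$ of generator products. Queries use one row, one column and \Cref{lem:companion_powers}, and every $k$ updates you rebuild by forming $\mK_0\mT$ and $\mT^{-1}\mK_0^{-1}$ explicitly, as in the paper's overview. Your fixed-cyclic-vector identity $\mK^{-1}\mK'=\mI+\mathcal{K}(\mC,\mK^{-1}a)\,\mH$ is a correct alternative to the paper's computation of $\nabla(\mK')$ for a fresh random $z$. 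Here $\mathcal{K}(\mC,\cdot)$ is a companion Krylov matrix of displacement rank $\le 2$ and $\mH$ is strictly upper triangular Toeplitz.

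The genuine gap is the cost of the rebuild. Write $\mT=\sum_i L(u_i)U(v_i)$ (lower times upper triangular Toeplitz). Applying these $2k$ factors to $\mK_0$ row by row costs $\tilde O(kn^2)$. Splitting $\mK_0$ into $n/k$ column blocks of width $k$ and using the known Bostan--Jeannerod--Schost $\tilde O(k^{\omega-1}n)$ algorithm on each block gives $\tilde O(k^{\omega-2}n^2)=\tilde O((n/k)^2\MM(k,k,k))$. This is never below $\MM(n,k,k)\,n/k$, and under current rectangular bounds it is polynomially above it: about $n^{2.22}$ versus at most $n^{2.16}$ at $k=n^{0.58}$. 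The paper itself notes that blocking is too slow. Nothing in your sketch gives a per-block cost of $\MM(n,k,k)$. As written, you obtain $\tilde O(nk+k^{\omega-3}n^2)$ amortized, roughly $n^{1.61}$, not the stated bound. The paper's formal reset via \Cref{thm:lowrank_frobenius_update} also ends in a dense-times-displacement-rank-$O(k)$ product, so it needs the same ingredient.

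That ingredient is \Cref{thm:toeplitz-general-matrix-product}. By \Cref{lem:toeplitz-via-polynomial} the product equals $u(x)^\top\big(v(x)b(x)^\top \bmod x^n\big)$. The intermediate truncation is what prevents batching the $k$ generators, and \Cref{lem:remove-modulo} removes it. Each dyadic level $s$ then becomes one polynomial-matrix product $\mN\mV^{(s)}\mB^{(s)}$ of shape $(n/2^s\times k)(k\times n/2^s)(n/2^s\times n)$ with degree-$2^s$ entries. This product involves all $n$ columns of the multiplicand at once. For $n/2^s\ge k$ you evaluate it right-to-left at cost $\tilde O(\MM(n/2^s,k,n)2^s)\le\tilde O(\MM(k,k,n)n/k)$. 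The remaining levels go left-to-right and are bounded by convexity of the rectangular exponent. It is this single very rectangular product, not column blocking, that yields $\MM(n,k,k)n/k$.

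Three smaller points:
\begin{itemize}
\item You need an algorithm for the generators of $\mT_i^{-1}$, not just a rank bound; \Cref{lem:generator_inverse} provides one.
\item Calling the scalars $b^\top\mA'^m v$ ``a linear recurrence'' does not produce its first $n$ terms. Use Sherman--Morrison modulo $x^n$ as in the proof of \Cref{lem:companion_update}.
\item Drop the idea of rerandomizing the weights at rebuilds. A fresh sample changes every nonzero entry of $\mA$, invalidates $\mK_0\mT$, and forces an $\tilde O(n^\omega)$ recomputation. The paper keeps one sample and takes a union bound over all $q$ versions with $|\F|=\Omega(qn^{4+c})$. This is sound because reported distances are deterministic, so the updates are independent of the randomness.
\end{itemize}
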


\begin{wrapfigure}{r}{0.4\textwidth}
    \vspace{-28pt}
    \begin{tabular}{c|c|c|c}
        Reference & Update & Query & Space \\
        \hline
        \cite{Sankowski05} & $n^{1.896}$ & $n^{1.528}$ & $n^{2.37}$  \\
        \cite{BrandNS19} & $n^{1.764}$ & $n^{1.764}$ & $n^{2.24}$\\
        \cite{KarczmarzS23} & $n^{1.673}$ & $n^{1.673}$ & $n^{2.33}$ \\
        \textbf{Ours} & $n^{1.58}$ & $n^{1.58}$ & $n^2$\\
        \textbf{Ours} & $n^{1.626}$ & $n^{1.5}$ & $n^2$\\
    \end{tabular}
    \caption{Subquadratic upper bounds for exact distances \label{fig:complexities}}
    \vspace{-10pt}
\end{wrapfigure}

Choosing the parameter $k\approx n^{0.58}$ to minimize the update time results in $O(n^{1.58})$ update and query time \cite{Complexity}.
This improves upon all prior subquadratic update times; see \Cref{fig:complexities}.
We remark that \cite{Sankowski05} has comparably small query time but our update-query tradeoff is better; e.g., we can choose a different $k$ such as $n^{0.5}$ and get $O(n^{1.626})$ update and $O(n^{1.5})$ query time, beating both the update and query\footnote{In \cite{Sankowski05} there is a typo, leading to an incorrectly stated $n^{1.288}$ query time. On page~469, the equation $O(n^{1-\mu+\epsilon}+n^{2-2\mu})$ should have been $O(n^{1-\mu+\epsilon}\cdot n^\mu+n^{2-2\mu})$. The additional $\cdot n^\mu$ overhead for a polynomial matrix of degree $n^\mu$ was missed.} times of \cite{Sankowski05}.

In addition to being much faster than prior work, an additional benefit of our purely algebraic approach is improved space complexity.
\Cref{thm:dynamic-distances} uses $O(n^2)$ space (i.e., linear for dense graphs) whereas the prior hybrid data structures used super-quadratic space $\Theta(n^2h)$ where $h$ is their hop-bound after which longer paths are reassembled via the combinatorial layer.

The Frobenius form approach to distance problems is tailored to unweighted digraphs.
However, a simple reduction (see, e.g.,~\cite{KarczmarzS23}) shows that dynamic distances with weights $\{1,\ldots,W\}$ can be reduced to dynamic distances in unweighted digraphs with $O(nW)$ vertices.
Consequently,~\Cref{thm:dynamic-distances} yields $O((Wn)^{1.58})$ update and query time in that case.
It should be noted that the combination of polynomial matrix inverse and hitting sets, as used in, e.g., \cite{Sankowski05,BrandNS19,BergamaschiHGWW21}, can still be advantageous for graphs with small nonnegative integer weights: unlike the reduction above, it can handle zero-weight edges, %
and the obtained dependence on $W$ can be linear (but is still pseudo-polynomial).

We obtain our dynamic distance result~(\Cref{thm:dynamic-distances}) via the following algebraic data structure providing fast access to powers of a dynamic matrix, which may be of independent interest.
In the following, let $\F$ be a field of size $\geq n^\gamma$, where $\gamma$ is a sufficiently large constant controlling success probability.

\begin{restatable}{theorem}{thmdynamicmatrixpower}\label{thm:dynamic-powers}
    Let $\mA\in\F^{n\times n}$.
    There is a Monte Carlo randomized dynamic data structure maintaining $\mA$ subject to single-element updates and supporting queries that return, for any $i,j$, all $(\mA^\ell)_{i,j}$ for $\ell=1,...,n$. Provided the matrix remains generic throughout all updates, the update time is $\tilde O(nk + \MM(n,k,k)n/k^2)$ and query time is $\tilde O(nk)$, where $k\in [1,n]$ can be chosen freely. The produced answers are correct with high probability.
    The data structure uses $O(n^2)$ space.
\end{restatable}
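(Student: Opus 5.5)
The plan is to maintain the Frobenius form $\mA=\mK\mC\mK^{-1}$ of the generic matrix $\mA$ implicitly. Here $\mK=[v,\mA v,\dots,\mA^{n-1}v]$ is the Krylov matrix of a random vector $v$, and $\mC$ is the companion matrix of the characteristic polynomial $p_\mA$. Since $\mA^\ell\mK=\mK\mC^\ell$, a query for $(\mA^\ell)_{i,j}$ over all $\ell\le n$ reduces as follows. Let $e_i^\top\mK$ be the length-$n$ sequence $(e_i^\top \mA^m v)_m$ and let $w=\mK^{-1}e_j$. Then $(\mA^\ell)_{i,j}=e_i^\top\mK\mC^\ell w$. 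Multiplying by $\mC^\ell$ corresponds to multiplying by $x^\ell$ modulo $p_\mA$ in the coordinate space $\F[x]/(p_\mA)$. So all $n$ values are read off from one polynomial product and a linear-recurrence extension of the sequence $e_i^\top\mA^m v$ via $p_\mA$. Equivalently, the sequence $(e_i^\top\mA^\ell\mK w)$ is computed by $\tilde O(n)$ polynomial arithmetic once we know the row $e_i^\top\mK$, the column $\mK^{-1}e_j$, and $p_\mA$. Query time $\tilde O(nk)$ therefore amounts to producing one row of $\mK$ and one column of $\mK^{-1}$ in that time.

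The core structural step: after a rank-one update $\mA'=\mA+\delta e_ae_b^\top$, the new Krylov matrix satisfies $\mK'=\mK\mT$, and I would show $\mT$ has displacement rank $O(1)$ with respect to the shift and companion operators. Concretely, $\mK'=\mK\mT$ with $\mT$ the coordinates of $\mA'^m v$ in the old basis. Using $\mA'\mK'=\mK'\mC'$ and $\mA\mK=\mK\mC$, we get $\mC\mT-\mT\mC'=\mK^{-1}(\mA-\mA')\mK'$, which is rank one. So $\mT$ satisfies a Sylvester displacement equation with rank-one right-hand side between two companion matrices. Such matrices are Toeplitz-like, with generators computable from $\mK^{-1}e_a$, $e_b^\top\mK'$ and the polynomials $p_\mA,p_{\mA'}$. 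After $k$ updates, $\mK_{\mathrm{cur}}=\mK_0\mT$ with $\mT$ of displacement rank $O(k)$, stored by $n\times O(k)$ generators. Its inverse $\mT^{-1}$ is Toeplitz-like of the same rank, with generators obtainable by structured-matrix algorithms. Rows of $\mK_0\mT$ and columns of $\mT^{-1}\mK_0^{-1}$ then cost $\tilde O(nk)$ each, given explicit $\mK_0,\mK_0^{-1}$ and fast Toeplitz-like vector products. The new characteristic polynomial $p_{\mA'}$ also follows in $\tilde O(n)$: $p_{\mA'}(x)=\det(x\mI-\mA')$ differs from $p_\mA$ by the rank-one term $\delta\, e_b^\top\operatorname{adj}(x\mI-\mA)e_a$. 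This term can be expressed through $\mC$, the row $e_b^\top\mK$ and the column $\mK^{-1}e_a$ by polynomial arithmetic modulo $p_\mA$.

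Each update needs $\mK^{-1}e_a$ and $e_b^\top\mK'$, one column and one row, each in $\tilde O(nk)$. Updating the generators naively would grow the displacement rank by $O(1)$ per update. After $k$ updates I would rebuild: recompute $\mK_0$ and $\mK_0^{-1}$ explicitly and reset $\mT=\mI$. Done from scratch this costs $O(n^\omega)$, which amortizes to $n^\omega/k$, too slow. Instead I would fold the generators into $\mK_0$ by multiplying the $n\times n$ matrix $\mK_0$ against the Toeplitz-like $\mT$. A product with a displacement-rank-$k$ matrix is a sum of $k$ products of Toeplitz/Hankel-type factors. Batching these over blocks yields the term $\MM(n,k,k)\,n/k^2$, i.e.\ $(n/k)^2$ rectangular $k\times k$ products per rebuild, each applied to $n/k$ row blocks. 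The same folding gives $\mK^{-1}$ via the generators of $\mT^{-1}$. Standard background rebuilding de-amortizes this to worst-case bounds. Genericity holds with high probability for a random $v$ over $|\F|\ge n^\gamma$ by Schwartz–Zippel, given that $\mA$ stays generic, and with an oblivious adversary a union bound covers all updates.

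The main obstacle is the structural lemma that $\mT$, and hence $\mT^{-1}$, is Toeplitz-like with explicitly computable small generators relative to the two different companion matrices $\mC,\mC'$. Composing $k$ such transforms means $\mT$ is a product of $k$ factors, and a product of $k$ rank-one-displacement matrices could have displacement rank $O(k)$ only if all factors share compatible operators. My first attempt would use the telescoping Sylvester equation $\mC_0\mT-\mT\mC_{\mathrm{cur}}=\mK_0^{-1}(\mA_0-\mA_{\mathrm{cur}})\mK_{\mathrm{cur}}$, whose right side has rank at most $k$ directly. This equation involves only the initial and current companion matrices, which sidesteps the composition issue. Inverting the displacement operator for two companion matrices with coprime characteristic polynomials is the classical step I would cite. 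A random shift can keep them coprime, which needs care.
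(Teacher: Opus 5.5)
Your skeleton matches the paper's: implicit $\mK_0\mT$, displacement rank growing by $O(1)$ per update, and folding into $\mK_0$ every $k$ updates. The gap is at the step you yourself call the main obstacle, namely representing and updating $\mT$ and $\mT^{-1}$ across $k$ updates in $\tO(nk)$ per update.

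Your chosen route inverts $\mX\mapsto\mC_0\mX-\mX\mC_{\mathrm{cur}}$ in the telescoping Sylvester equation. That needs $\gcd(p_{\mA_0},p_{\mA_{\mathrm{cur}}})=1$, and genericity does not give it.
\begin{itemize}
\item Inserting an entry and then deleting it gives $\mC_{\mathrm{cur}}=\mC_0$. The operator's kernel is then the $n$-dimensional commutant of $\mC_0$.
\item In the graph application, every version of a DAG has the same characteristic polynomial $\prod_v(x-x_{v,v}y_v)$.
\item A shift $\mA\mapsto\mA+s\mI$ moves both spectra together, so it cannot create coprimality. You leave any other ``random shift'' unspecified.
\end{itemize}
Even granting invertibility, two costs are unaccounted for. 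The right-hand side $\mK_0^{-1}(\mA_0-\mA_{\mathrm{cur}})\mK_{\mathrm{cur}}$ contains the rows $e_{b_t}^\top\mK_{\mathrm{cur}}$ of \emph{all} past updates, and all of them change at every step; you do not say how to refresh them in $\tO(nk)$. Also, recomputing generators of $\mT^{-1}$ from those of $\mT$ via \Cref{lem:generator_inverse} costs $\tO(k^{\omega-1}n)$ per update. At $k=n^{0.58}$ that is about $n^{1.8}$, against the target $n^{1.58}$.

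The missing idea is that composition is not a problem at all. Every per-update factor is Toeplitz-like for the \emph{same} operator $\nabla(\mM)=\mM-\mZ\mM\mZ^\top$, whatever the companion matrices are.
\begin{itemize}
\item With your fixed $v$, the factor is $\mT_t=\mK_{t-1}^{-1}\mK_t=[e_1|\tilde\mC e_1|\cdots|\tilde\mC^{n-1}e_1]$, where $\tilde\mC=\mC_{t-1}+uw^\top$, $u=\delta\mK_{t-1}^{-1}e_a$ and $w^\top=e_b^\top\mK_{t-1}$. This is exactly the matrix of \Cref{lem:companion_update} with $z=e_1$; the paper uses a fresh random $z$, which changes nothing essential.
\item Writing $\mC_{t-1}=\mZ+ce_n^\top$, column $j\ge2$ of $\nabla(\mT_t)$ is $(ce_n^\top+uw^\top)\tilde\mC^{j-2}e_1$. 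So $\nabla(\mT_t)$ has rank at most $3$, with generators $[e_1|c|u]$ plus two scalar sequences computable in $\tO(n)$ via Sherman--Morrison and \Cref{lem:companion_powers}.
\item The inverse factor has rank at most $5$ by \Cref{lem:generator_inverse}, also in $\tO(n)$.
\end{itemize}
The paper then composes $\mT\gets\mT\mT_t$ and $\mT^{-1}\gets\mT_t^{-1}\mT^{-1}$ with \Cref{thm:toplitz-topelitz-product} in $\tO(nk)$, so the rank grows by $O(1)$ per update. Every $k$ updates it folds $\mT,\mT^{-1}$ into $\mK_0,\mK_0^{-1}$ via \Cref{thm:toeplitz-general-matrix-product}. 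That theorem is a rectangular-multiplication reanalysis built on \Cref{lem:remove-modulo}, and it is what actually delivers the $\MM(n,k,k)n/k$ rebuild cost; your block-counting sketch does not establish it.

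If you want to keep the Sylvester formulation, the normalization $\mT e_1=e_1$ removes the kernel, because $e_1$ is cyclic for $\mC_{\mathrm{cur}}$. It turns $\mC_0\mT-\mT\mC_{\mathrm{cur}}=\mR$ into $\mT-\mC_0\mT\mZ^\top=e_1e_1^\top-\mR\mZ^\top$, whose operator is always invertible. You would still need the incremental maintenance above. Your fixed-$v$ choice and your recurrence-based query are otherwise sound.
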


\paragraph{Low-Rank Update to Frobenius Form}

Given the inverse $\mA^{-1}$, the Woodbury identity\footnote{The Woodbury identity states $(\mA+\mU\mV^\top)^{-1} = \mA^{-1} - \mA^{-1}\mU(\mI+\mV^\top\mA^{-1}\mU)^{-1}\mV^\top\mA^{-1}$.}
computes the inverse of $\mA+\mU\mV^\top$ for $n\times k$ sized $\mU,\mV$ in $O(\MM(n,k,n))$ time.
This is also referred to as a rank-$k$ update since we add a rank-$k$ matrix $\mU\mV^\top$ to $\mA$, and it can be interpreted as a single batch of $k$ smaller updates.
This is a crucial algorithmic tool in dynamic graph algorithms \cite{DemetrescuI05,Sankowski05,Sankowski07,BrandNS19,BergamaschiHGWW21,KarczmarzS23,BrandK23,KarczmarzMS22,AlokhinaB24}, failure sensitive distance oracles \cite{BrandS19,KarczmarzS23,GuR21}, and static optimization algorithms (e.g., \cite{LeeS15,Brand20,CohenLS21,JiangKLPS20,Brand21}).
No equivalent update is known for the Frobenius form. 
The only route to updating it under a rank-$k$ change was to apply $k$ separate rank-$1$ updates in sequence without fast matrix multiplication, each costing $\tilde O(n^2)$, for a total of $\tilde O(kn^2)$.
In particular, for any $k\ge n^{0.372}$ one may as well just recompute the Frobenius form from scratch in $\tO(n^\omega)$ time\footnote{Here, $\omega$ is the matrix multiplication exponent, i.e., $\MM(n,n,n)=O(n^\omega)$. It is known that $\omega\leq 2.372$~\cite{AlmanDWXXZ25}.}~\cite{Keller-Gehrig85}.
Indeed, this was a path taken in~\cite[Section~9.2]{KarczmarzS23}, and a more efficient rank-$k$ update algorithm could immediately improve their dynamic distance data structure (though not as much as our \Cref{thm:dynamic-distances}). We show that a non-trivial rank-$k$ update is indeed possible, and the algorithm may be of independent interest.

\begin{theorem}[Rank-$k$ update to Frobenius form]\label{thm:lowrankfrobenius}
    Given are a Frobenius form $\mA=\mK\mC\mK^{-1}$ of a generic matrix $\mA\in\F^{n\times n}$ and a rank-$k$ update $\mA+\mU\mV^\top$ such that $\mA+\sum_{i=1}^\ell \mU e_i e_i^\top \mV^\top$ (i.e., using only the first $\ell$ columns of $\mU,\mV$) is generic for all $\ell=1,...,k$.
    There is a Monte Carlo randomized algorithm computing w.h.p.~the Frobenius form $\mA+\mU\mV^\top = \mK'\mC'\mK'^{-1}$ in \linebreak $\tilde O(\MM(k,k,n)n/k)$ operations.
\end{theorem}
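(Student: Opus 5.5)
Since $\mA$ is generic, the Frobenius form is given by a Krylov basis: $\mK = [v, \mA v, \dots, \mA^{n-1}v]$ for a random vector $v$, and $\mC$ is the companion matrix of the characteristic polynomial $p_\mA$. The plan is to avoid recomputing $\mK'$ densely. Instead we will show that $\mK' = \mK\mT$ with $\mT$ Toeplitz-like of displacement rank $O(k)$, and represent $\mT$ by $n\times O(k)$ generators. The output is then $\mK$ (given), the generators of $\mT$, and the coefficients of $p_{\mA'}$ for $\mA' = \mA + \mU\mV^\top$.

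\textbf{Step 1: reduce to the companion basis.} Set $\mB = \mK^{-1}\mA'\mK = \mC + \tilde\mU\tilde\mV^\top$, where $\tilde\mU = \mK^{-1}\mU$ and $\tilde\mV = \mK^\top\mV$. Computing $\tilde\mU$ and $\tilde\mV$ would cost $\MM(n,n,k)$, which is too much. So we keep $\mK$ fixed and use the same starting vector $v$, i.e.\ $e_1$ in the new coordinates. The problem becomes computing the Krylov matrix $\mT = [e_1, \mB e_1, \dots, \mB^{n-1}e_1]$ of the companion matrix plus a rank-$k$ term, together with $p_\mB = p_{\mA'}$. Then $\mK' = \mK\mT$.

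Since $\mC$ is a shift plus a last column, $\mB = \mZ + \mX\mY^\top$ with $\mZ$ the down-shift and $\mX\mY^\top$ of rank at most $k+1$. Hence
\[
\mT - \mZ\mT\mZ^\top = e_1e_1^\top + \mX\mY^\top\mT\mZ^\top,
\]
using $\mT e_{j+1} = \mB\,\mT e_j$. So $\mT$ has $(\mZ,\mZ^\top)$-displacement rank $O(k)$, with generators $\mX$ and $\mZ\mT^\top\mY$. All the work therefore lies in computing the $k\times n$ matrix $\mY^\top\mT$, i.e.\ the sequences $y_i^\top\mB^j e_1$ for $j<n$.

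\textbf{Step 2: compute the sequences $y_i^\top\mB^je_1$.} This is the main obstacle. We would attack it with the block Keller-Gehrig / baby-step giant-step scheme on the structured matrix $\mB$. One step of multiplying an $n\times k$ block by $\mB$ costs $\tilde O(nk)$, since the $\mZ$ part is a shift and the low-rank part is a $k\times k$ product. Rather than a length-$n$ iteration, we would run a generating-function approach. Because $\mB = \mZ + \mX\mY^\top$, the Woodbury identity over $\F[[x]]$ gives
\[
\mY^\top(\mI - x\mB)^{-1} e_1 = \bigl(\mI - \mY^\top(\mI-x\mZ)^{-1}\mX\,x\bigr)^{-1}\mY^\top(\mI-x\mZ)^{-1}e_1 .
\]
Here $(\mI - x\mZ)^{-1}$ is Toeplitz, so $\mY^\top(\mI - x\mZ)^{-1}\mX$ is a $k\times k$ matrix of polynomials of degree less than $n$. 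Its entries are correlations of columns and are computable by FFT in $\tilde O(nk^2)$, or better by grouping into $\tilde O(\MM(k,k,n)n/k)$-type products after cutting the degree-$n$ polynomials into $n/k$ blocks.

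Inverting a $k\times k$ polynomial matrix modulo $x^n$ then uses Newton iteration or high-order lifting in $\tilde O(\MM(k,k,k)\cdot n/k)$-ish time. The $\MM(k,k,n)n/k$ budget arises exactly from multiplying $k\times k$ polynomial matrices of degree $n$, each viewed as a $k\times k\cdot(n/k)$ block structure. The genericity of the intermediate matrices stated in the theorem ensures that the relevant constant terms and Krylov matrices remain invertible, so the iteration and the final Krylov basis are valid w.h.p.

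\textbf{Step 3: finish.} The characteristic polynomial follows from the matrix determinant lemma:
\[
p_{\mA'}(x) = p_\mA(x)\det\bigl(\mI - \mY^\top(x\mI - \mZ')^{-1}\mX\bigr),
\]
which reduces to the same $k\times k$ polynomial matrix, whose determinant is computed in the stated time. Alternatively, one reads off $p_{\mA'}$ from $\mB^n e_1$ expressed in the basis $\mT$. Together with the generators of $\mT$, this gives $\mK'$ and $\mC'$. By Schwartz–Zippel the random choices keep $\mT$ nonsingular w.h.p.
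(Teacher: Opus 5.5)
Your proposal has a genuine gap in the running time of Step~2. Your route explicitly forms the $(k+1)\times(k+1)$ polynomial matrix $\mY^\top(\mI-x\mZ)^{-1}\mX \bmod x^n$, which has $\Theta(k^2n)$ coefficients. It then inverts, or solves with, $\mI-x\,\mY^\top(\mI-x\mZ)^{-1}\mX$ modulo $x^n$.

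\begin{itemize}
\item Newton iteration for this costs $\tO(\MM(k,k,k)\,n)=\tO(k^\omega n)$, not $\tO(\MM(k,k,k)\,n/k)$. For $\omega<3$ the latter is smaller than the $k^2n$ coefficients of the input, so it cannot hold. The determinant in Step~3 is no cheaper.
\item Since $\MM(k,k,n)\,n/k\le k^{\omega-2}n^2$, your cost exceeds the claimed bound by a polynomial factor once $k$ is polynomially larger than $\sqrt n$. At $k=n$, merely writing the matrix down costs $n^3$ against a budget of $\tO(n^\omega)$.
\item Your sentence that the budget ``arises exactly from multiplying $k\times k$ polynomial matrices of degree $n$'' is where the accounting breaks: such a product costs $\tO(k^\omega n)$.
\end{itemize}

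In the paper the companion-level work costs only $\tO(k^{\omega-1}n)$. \Cref{cor:companion_update} splits the rank-$k$ update recursively into two rank-$k/2$ updates, with \Cref{lem:companion_update} as the rank-$1$ base case. It composes the resulting transforms purely through their $n\times O(k)$ generators, using \Cref{thm:toplitz-topelitz-product} and \Cref{thm:toeplitz-general-matrix-product}. So no $k\times k$ degree-$n$ transfer matrix is ever formed; this recursion on the rank is the idea your proposal is missing.

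The $\MM(k,k,n)\,n/k$ term actually comes from steps you call too expensive or omit:
\begin{itemize}
\item Computing $\mK^{-1}\mU$ and $\mV^\top\mK$ costs $\MM(n,n,k)\le\MM(n,k,k)\,n/k$, so it is within budget. Your Step~2 needs these products anyway, since $\mX=[c\,|\,\tilde\mU]$ and $\mY=[e_n\,|\,\tilde\mV]$.
\item The explicit products $\mK\mK'$ and $\mK'^{-1}\mK^{-1}$ are computed via \Cref{thm:toeplitz-general-matrix-product}.
\end{itemize}

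There are two smaller problems.
\begin{itemize}
\item Reusing $e_1$, i.e., the old start vector $v=\mK e_1$, involves no randomness, yet $\mU,\mV$ are arbitrary inputs given together with $\mK$. They can make $v$ non-cyclic for $\mA'$ (e.g., $\mA'v=0$ while $\mA'$ remains generic). Then $\mT$ is singular deterministically, and your appeal to Schwartz--Zippel has nothing to act on. Use a fresh random $z$, as the paper does in each base case. Your identity becomes $\mT-\mZ\mT\mZ^\top=ze_1^\top+\mX\mY^\top\mT\mZ^\top$, and nothing else changes.
\item You never produce $\mK'^{-1}$ in any form. You need generators of $\mT^{-1}$ via \Cref{lem:generator_inverse}, and the explicit products, which the paper's version outputs.
\end{itemize}

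Once $z$ is fresh, your one-shot Woodbury formulation is correct, and it needs only genericity of $\mA'$ rather than of every prefix. Without the recursive generator arithmetic, however, it meets the stated bound only for small $k$.
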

Note that the above bound is $\tO(n^2k^{\omega-2})$ in particular, so it improves upon the aforementioned naive bound $\tO(\min(kn^2,n^\omega))$ for any $k=n^{\alpha}$ with $\alpha\in (0,1)$ (unless $\omega=2$). 

We remark that the techniques for computing a single rank-$k$ update extend to sensitivity distance oracles as studied in, e.g., \cite{BernsteinK09,DuanP09,WeimannY13,DuanZ17,BrandS19,GrandoniW20,ChechikC20,GuR21,Ren22,KarczmarzS23,BiloCCCFS24,DeyG24}.
These are data structures that, after preprocessing an input graph, receive a single batch of edge updates (failures) and a vertex-pair query $(s,t)$, and then answer the $st$-distance subject to the edge failures. 
A batch of $k$ edge updates/failures can be modeled as a rank-$k$ update to the adjacency matrix. Using the techniques that yield \Cref{thm:lowrankfrobenius} in such a way that they support query access rather than explicitly computing the entire Frobenius form, we obtain the following sensitivity distance oracle for directed graphs and edge failure batches.

\begin{restatable}{theorem}{thmsensitivity}\label{thm:sensitivity-distances}
Let $G$ be a directed unweighted graph.
    We can preprocess $G=(V,E)$ in $\tO(n^\omega)$ time to enable the following.
    Given a batch of $k\leq n$ failed edges $F\subseteq E$, we can process $F$ in $\tilde O(k^{\omega-1}n)$ time, so that queries about $st$-distances in $G\setminus F$ are supported in $\tO(nk)$ time for any $s,t\in V$.
    The produced answers are correct with high probability.
    The data structure uses $O(n^2)$ space.
\end{restatable}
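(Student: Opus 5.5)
The plan is to reduce the sensitivity oracle to a static rank-$k$ update of the Frobenius form, but kept in implicit (query-access) form rather than expanded into dense matrices.

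\textbf{Preprocessing.} We first apply \Cref{thm:weighted-adjacency-matrix}. This replaces the adjacency matrix $\mA$ of $G$ by a randomly weighted version over a field $\F$ of size $\poly(n)$. We need two properties w.h.p., both following from a Schwartz--Zippel argument on the characteristic-polynomial discriminant:
\begin{itemize}
\item $\mA$ is generic, as are all the relevant prefixes of any failure batch (see below);
\item nonzero walk counts survive, so that for $\ell\le n$ we have $(\mA^\ell)_{s,t}\neq 0$ iff an $s$-$t$ walk of length exactly $\ell$ exists.
\end{itemize}
For the first property we order each batch $F$ canonically, say by index. A failure batch $F$ is then a rank-$k$ update $\mA-\mU\mV^\top$ with $\mU=[w_e e_u]_{e\in F}$ and $\mV=[e_v]_{e\in F}$. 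Since the prefixes are determined by $F$, a union bound makes sense per query batch, over the $\poly(n)$ batches asked. We then compute the Frobenius form $\mA=\mK\mC\mK^{-1}$ statically in $\tO(n^\omega)$ time~\cite{Keller-Gehrig85}, storing $\mK$ and $\mK^{-1}$ explicitly. This uses $O(n^2)$ space.

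\textbf{Processing a batch.} We run the machinery behind \Cref{thm:lowrankfrobenius}, stopping before anything dense is formed. The key structural fact is that the new transform is $\mK'=\mK\mT$, where $\mT$ is Toeplitz-like with displacement rank $O(k)$. We need $n\times O(k)$ generators of $\mT$ and $\mT^{-1}$, together with the new companion matrix $\mC'$, i.e.\ the coefficients of the new characteristic polynomial. The plan is to obtain these from the projected data $\mK^{-1}\mU$ and $\mV^\top\mK$, which are only $n\times k$.

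Because $\mU$ and $\mV$ consist of scaled standard basis vectors, these projections are just $k$ columns of $\mK^{-1}$ and $k$ rows of $\mK$. They can therefore be read off in $O(nk)$ time, with no $\MM(n,n,k)$ product. The remaining work is on $k$-column objects:
\begin{itemize}
\item form the $k\times k$ polynomial/structured Schur complement;
\item compute the new characteristic polynomial;
\item compute generators for $\mT^{\pm1}$.
\end{itemize}
I would aim to do this using batched structured (Toeplitz-like times $n\times k$) multiplications and polynomial arithmetic. The target cost is $\tO(k^{\omega-1}n)$, i.e.\ $\tO(\MM(k,k,n))$. This is where the bound must improve on the $\MM(k,k,n)n/k$ of \Cref{thm:lowrankfrobenius}: that bound pays the extra $n/k$ factor for expanding $\mK'$ densely, which we skip.

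\textbf{Query.} Given $s,t$, we need $(\mA'^\ell)_{s,t}=e_s^\top\mK\mT\,\mC'^\ell\,\mT^{-1}\mK^{-1}e_t$ for all $\ell\le n$. We proceed in three steps:
\begin{enumerate}
\item Compute $x^\top=e_s^\top\mK\mT$ and $y=\mT^{-1}\mK^{-1}e_t$. Each is a stored row or column of $\mK^{\pm1}$ multiplied by a Toeplitz-like matrix with $O(k)$ generators, which costs $\tO(nk)$ via FFT-based Toeplitz products.
\item Obtain all the values $x^\top\mC'^\ell y$. Since $\mC'$ is a companion matrix, these are a linearly recurrent sequence whose minimal polynomial is the characteristic polynomial of $\mC'$. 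Equivalently, they are coefficients of a rational power series determined by $x$, $y$ and that polynomial. They can be computed in $\tO(n)$ time by power-series division, computing $x^\top\mC'^{j}y$ directly for $j<n$ via the shift structure.
\item Return the smallest $\ell\le n$ with a nonzero value as the $st$-distance in $G\setminus F$; if there is none, return $\infty$.
\end{enumerate}

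\textbf{Main obstacle.} The hardest step is proving that $\mT$ is Toeplitz-like with displacement rank $O(k)$ when the $k$ updates are composed, and computing its generators within $\tO(k^{\omega-1}n)$. I expect composing the per-edge rank-1 transforms to add $O(1)$ displacement rank each, with the per-step genericity assumption guaranteeing that each intermediate transform exists. I would then batch these transforms via a divide-and-conquer on the $k$ updates, so that every level is a $k$-wide matrix product over length-$n$ polynomials.
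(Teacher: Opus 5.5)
Your proposal follows the paper's route. The paper proves this theorem by combining \Cref{thm:weighted-adjacency-matrix} (with the prefixes of an arbitrarily ordered batch viewed as intermediate graph versions) with \Cref{thm:sensitivity-powers}, whose batch step is what you describe: read $k$ columns and rows of $\mK^{\mp 1}$ in $O(nk)$ time, then apply \Cref{cor:companion_update}, skipping the dense products $\mK\mK'$. \Cref{cor:companion_update} is exactly the divide-and-conquer you flag as the main obstacle: its rank-1 base case \Cref{lem:companion_update} gives displacement rank $\le 3$ (resp.\ $\le 5$ for the inverse) because $\tilde{\mC}-\mZ=ce_n^\top+uv^\top$ has rank $2$, products add $k_a+k_b+1$ by \Cref{thm:toplitz-topelitz-product}, and the total cost is $\tO(k^{\omega-1}n)$.

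One caution: do not form the $k\times k$ polynomial Schur complement explicitly. It already has about $k^2n>k^{\omega-1}n$ coefficients, and its determinant costs $\tO(k^{\omega}n)$. Instead, the new characteristic polynomial comes for free as the last column of the final companion matrix produced by the recursion.
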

This matches the state-of-the-art bounds for large failures $F$ obtained by~\cite{KarczmarzS23}, who used a more complex hybrid algebraic-combinatorial data structure relying on dynamic matrix inverse methods and the small-hop/large-hop distinction.

\paragraph{Dynamic Characteristic Polynomial}

The characteristic polynomial of a matrix $\mA$ is \linebreak ${\det(x\mI-\mA)}$. It is computable in~$\tO(n^\omega)$ time~\cite{Keller-Gehrig85}. Prior work \cite{FrandsenS11} gave a dynamic algorithm maintaining the characteristic polynomial of generic matrices under rank-1 updates (i.e., $\mA \gets \mA + uv^\top$) in $\tilde O(n^2)$ time.
They also provided an $\Omega(n^2)$ lower bound for rank-1 updates. When the updates change only single entries of $\mA$, then the lower bound becomes $\Omega(n)$. It was explicitly asked in \cite{FrandsenS11} if subquadratic update time is possible when supporting only entry updates. We answer this question positively for generic matrices.
The Frobenius form and the characteristic polynomial are closely related (see \Cref{def:frobenius} for details). In particular, one can read the polynomial directly from the Frobenius form, leading to \Cref{thm:dynamiccharpoly}.

\begin{theorem}\label{thm:dynamiccharpoly}
Let $\mA\in\F^{n\times n}$.
    There is a Monte Carlo randomized dynamic algorithm that maintains (with high probability)~the characteristic polynomial of $\mA$ subject to element updates, provided that $\mA$ remains generic. Each update takes $\tilde O(\min_k\{nk + \MM(n,k,k)n/k^2\}) = O(n^{1.58})$ operations.
\end{theorem}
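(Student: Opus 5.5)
The plan is to derive \Cref{thm:dynamiccharpoly} as a corollary of the dynamic Frobenius form maintenance that underlies \Cref{thm:dynamic-powers}, using that for a generic matrix the Frobenius form consists of a single companion block.

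If $\mA$ is generic, then $\mA=\mK\mC\mK^{-1}$ with $\mK=[v,\mA v,\dots,\mA^{n-1}v]$ a Krylov basis. Here $\mC$ is the companion matrix of the minimal polynomial, which coincides with the characteristic polynomial $\det(x\mI-\mA)$. Its last column stores the coefficients $-c_0,\dots,-c_{n-1}$, where $\mA^n v=\sum_i c_i\mA^i v$. Hence the characteristic polynomial is read off from $\mC$ alone, in $O(n)$ time, with no access to $\mK$ or $\mK^{-1}$ required. Since $\mC$ is similar to $\mA$, the read-off is invariant under the choice of $v$. The only randomized ingredient is the genericity certificate $v$, and a random $v$ works with high probability over a field of size $n^\gamma$.

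Concretely, I would run the data structure of \Cref{thm:dynamic-powers} with parameter $k$ and observe that it explicitly maintains the companion matrix $\mC'$ after every element update. The planned mechanism is as follows. An entry update $\mA\gets\mA+\lambda e_ie_j^\top$ is a rank-one update. The new form is $\mK'=\mK\mT$, where $\mT$ is Toeplitz-like and is represented by generators of small displacement rank. The new $\mC'$ is obtained from the new Krylov relation $(\mA')^n v$ expressed in the new basis, and this computation is performed within the update time. If $\mC'$ is instead stored only implicitly, I would extract its last column: it equals $\mK'^{-1}(\mA')^n v$. Equivalently, it can be recovered as the solution of the linear recurrence satisfied by the scalar sequence $u^\top(\mA')^\ell v$ for $\ell<2n$, via Berlekamp--Massey, using a random $u$. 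This sequence is obtainable from a single power query of the kind supported by the structure, costing $\tO(nk)$. Berlekamp--Massey itself takes $\tO(n)$ time. With high probability over $u$, the minimal recurrence equals the minimal polynomial, which is the characteristic polynomial by genericity.

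Each update therefore costs the update time $\tO(nk+\MM(n,k,k)n/k^2)$ plus an additive $\tO(nk)$ read-off, which is absorbed. Minimizing over $k$ yields the stated $\tO(\min_k\{nk+\MM(n,k,k)n/k^2\})$ bound, and $k\approx n^{0.58}$ gives $O(n^{1.58})$ under current rectangular matrix multiplication bounds. Correctness holds with high probability by a union bound over polynomially many updates.

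The main obstacle is not this corollary itself but the fact it relies on: that the maintained structure exposes $\mC'$, or alternatively power-sequence access along fixed vectors $u,v$, under the Toeplitz-like implicit representation of $\mK$. I would first verify that the update procedure of \Cref{thm:dynamic-powers} recomputes the companion coefficients explicitly. Only if it does not would I fall back on the Berlekamp--Massey extraction described above.
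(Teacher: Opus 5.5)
Your proposal is correct and matches the paper's argument: the data structure of \Cref{thm:dynamic-powers} explicitly recomputes the companion matrix at every update (via the matrix determinant lemma applied to $\mC+uv^\top$ inside \Cref{lem:companion_update}) and sets $\mC\gets\mC'$, so the characteristic polynomial is read off the last column of $\mC$ within the stated update time. Your Berlekamp--Massey fallback is therefore never needed. There is also a harmless sign slip: with $\mA^n v=\sum_i c_i\mA^i v$, the last column of $\mC$ is $(c_0,\dots,c_{n-1})^\top$ itself, not its negation.
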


Our algebraic techniques currently apply to generic (cyclic) matrices, which suffice for the distance application and permit simpler and more elegant algorithms. At the same time, we believe our results constitute the essential first step towards dynamic Frobenius form and characteristic polynomial algorithms for general matrices over fields.

\subsection{Further Related Work}

Our work focuses on exact distances, which likely require algebraic techniques \cite{AbboudW14}. This also holds for $(1+\epsilon)$-approximate distances. However, when allowing for large $\ge2$ approximation errors on undirected graphs, then one can use purely combinatorial methods. E.g., subquadratic time is achieved by maintaining a dynamic spanner such as \cite{BernsteinFH21,AbrahamDKKP16} and running BFS to answer queries. There are also other faster dynamic distance algorithms for large approximation errors on undirected graphs, such as \cite{AbrahamDKKP16,HaeuplerLS24,%
ForsterGH21,%
AbrahamCT14,%
ForsterGNS23,%
ChuzhoyZ23,%
Bernstein09,%
KyngMG24}.

On the lower bound side, there is an $\Omega(n)$ conditional lower bound for reachability and distances, conditional on the OMv-conjecture \cite{HenzingerKNS15}.
Conditional on triangle detection, there is also an $\Omega(n^{\omega-1})$ lower bound \cite{AbboudW14}.
Our work focuses on unweighted graphs, because subquadratic-time dynamic algorithms for weighted\footnote{Allowing integral weights in $[1,n^c]$ for some sufficiently large $c>0$.} graphs are ruled out by the APSP conjecture~\cite{AbboudW14,RodittyZ11}.

\section{Preliminaries}

\paragraph{Fields and Complexity}
Let $\F$ denote a finite field. All our algebraic lemmas for $n\times n$ matrices undergoing $q=\poly(n)$ updates state their complexity as the number of arithmetic operations performed over $\F$ of size $\geq n^\gamma$, where $\gamma> 3+\log_n{q}$ is a constant controlling the success probability. For the graph application, we will use the finite field $\Z_p$ for a prime $p = \poly(n)$. In the standard Word-RAM model, each arithmetic operation over $\Z_p$ for $p=\poly(n)$ takes $O(1)$~time.

For the matrix product of an $n\times m$ and $m\times k$ matrix, we write $\MM(n,m,k)$ for the complexity of matrix multiplication.
Here we have $\MM(n,m,k)=\MM(n,k,m)=...=\MM(k,m,n)$, i.e., the arguments can be permuted, and $\MM(ab,c,d) \le \MM(a,c,d)\cdot b$ because one can split a matrix product into several smaller ones.
Precise upper bounds are given in \cite{AlmanDWXXZ25,WilliamsXXZ24}.

We write $\F[x]$ for the ring of polynomials with coefficients in $\F$. Two polynomials $f,g\in\F[x]$ of degree $\le d$ can be multiplied in $\tO(d)$ operations via fast Fourier transform-type algorithms~\cite{CK91}. Other polynomial operations (addition, subtraction, division) are all bounded by $\tO(d)$ as well.

\paragraph{Shift-Matrix}
We use $e_i$ to denote the $i$-th standard unit vector. We use $\mZ\in\F^{n\times n}$ to denote the \emph{shift-down matrix}, i.e., $\mZ e_i = e_{i+1}$ for all $i<n$ and $\mZ e_n = 0$. 
In particular, for any vector $v$, the product $\mZ v$ is the vector $v$ shifted down by one, 
i.e., $v = (v_1,v_2,...,v_n)^\top$ becomes $\mZ v = (0, v_1,..., v_{n-1})^\top$.

\paragraph{Frobenius Form and Generic Matrices}

Following~\cite{KarczmarzS23}, we call a matrix $\mA\in\F^{n\times n}$ \emph{generic}\footnote{In the literature, such matrices are also called cyclic~\cite{neumann1995cyclic}, single-invariant-factor, or non-derogatory~\cite{horn2012matrix}.} if there exists a vector $v\in\F^n$ such that $v,\mA v,\mA^2 v,...,\mA^{n-1} v$ are linearly independent, i.e., form a basis.
In particular, the matrix $\mK = [v|\mA v|...|\mA^{n-1} v]\in\F^{n\times n}$ (where we stack these vectors next to each other) is invertible.
Using these concepts, we define the \emph{Frobenius form} as follows. (We give more discussion in the overview \cref{sec:overview:frobenius} to build further intuition.)

\begin{definition}[Frobenius form of a generic matrix]\label{def:frobenius}
    For a generic
    matrix $\mA$, let $c\in\F^n$ be the negated coefficients of the characteristic polynomial, i.e., $x^n - \sum_{i=0}^{n-1} x^i\cdot c_{i+1} = \det(x\mI-\mA)$. Then for $\mC = \mZ + c e_n^\top$ and invertible $\mK\in\F^{n\times n}$ we call $\mA = \mK \mC \mK^{-1}$ the \emph{Frobenius form} and $\mC$ the \emph{companion matrix}. The basis-transform $\mK$ is not necessarily unique. A valid choice is ${\mK = [v|\mA v|...|\mA^{n-1} v]}$ for any $v\in\F^n$ such that $\mK$ is invertible. We call such $v$ \emph{cyclic}.
\end{definition}

The Frobenius form also exists for non-generic matrices (see, e.g., textbooks \cite{gantmakher2000theory,friedberg2018linear}), but then the structure of matrix $\mC$ is more complicated and consists of multiple blocks. Matrix $\mK$ also needs to be constructed differently.
The simplified structure for generic matrices makes algorithms simpler and more efficient.
Further, for the application of dynamic distances, the matrices we study in this work can be assumed generic (see \Cref{thm:weighted-adjacency-matrix}). That is why we use the simplified definition of Frobenius form, restricted to generic matrices.

Finding an appropriate $v$ to construct $\mK$ is simple, as we can just pick a random vector:

\begin{lemma}\label{lem:sample_cyclic_vector}
    Let $\mA$ be a generic matrix and $v\in\F^n$ be a uniformly random vector. Then with probability at least $1-n/|\F|$, matrix $\mK:=[v|\mA v|\mA^2 v|...|\mA^{n-1}v] \in \F^{n\times n}$ is invertible, i.e., $v$ is cyclic.
\end{lemma}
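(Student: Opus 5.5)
Since $\mA$ is generic, fix a cyclic vector $w\in\F^n$, so $\mK_w=[w|\mA w|\dots|\mA^{n-1}w]$ is invertible. The plan is to show that $\det[v|\mA v|\dots|\mA^{n-1}v]$, viewed as a polynomial in the entries of $v$, is a nonzero polynomial of total degree at most $n$. The claim then follows from the Schwartz--Zippel lemma.

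\textbf{Degree.} Let $v=(v_1,\dots,v_n)^\top$ be a vector of indeterminates. Column $j$ of $\mK$ is $\mA^{j-1}v$, whose entries are linear forms in $v_1,\dots,v_n$ with coefficients in $\F$. Hence $P(v):=\det\mK$ is a homogeneous polynomial of degree $n$ in $v_1,\dots,v_n$: every term of the Leibniz expansion is a product of $n$ linear forms, one from each column.

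\textbf{Nonvanishing.} Evaluate at $v=w$. Then $P(w)=\det\mK_w\neq 0$ by the choice of $w$. Therefore $P$ is not the zero polynomial. This is the only place genericity is used, and it is exactly the definition, so it poses no real obstacle.

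\textbf{Conclusion.} By the Schwartz--Zippel lemma, if $v$ is drawn uniformly from $\F^n$, then
\[
\Pr[P(v)=0]\le \frac{\deg P}{|\F|}=\frac{n}{|\F|}.
\]
So with probability at least $1-n/|\F|$ we have $\det\mK\neq 0$, meaning $\mK$ is invertible and $v$ is cyclic.

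The argument contains no real obstacle. The one point that needs care is that $P$ is a polynomial over $\F$ in the coordinates of $v$, not something that depends on $\mA$ symbolically. Here $\mA$ is fixed, so all coefficients are field elements and Schwartz--Zippel applies directly.
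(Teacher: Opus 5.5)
Your proof is correct and matches the paper's argument: $\det[v|\mA v|\dots|\mA^{n-1}v]$ is a degree-$n$ polynomial in the entries of $v$. It is nonzero because a cyclic vector exists by genericity, so Schwartz--Zippel gives failure probability at most $n/|\F|$.
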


\begin{proof}
    The polynomial $f(v) = \det([v|\mA v|...|\mA^{n-1}v])$ is of degree $n$. Matrix $\mA$ is generic, so there exists $v$ with $f(v)\neq 0$, i.e., $f$ is a non-zero polynomial. By the Schwartz-Zippel lemma~\cite{Schwartz80,Zippel79} we thus have for random $v$ that $\P[f(v)\neq 0] \ge 1 - n/|\F|$.
\end{proof}

Since in all our algorithms (including the graph applications) we use a field $\F$ of size $\poly(n)$, the matrix $\mK$ is invertible with high probability $\ge 1-1/\poly(n)$.

\section{Technical Overview}

In this section we outline our main dynamic distance result \Cref{thm:dynamic-distances}.
The main difficulty in understanding the algorithm is that it builds heavily on machinery from computer algebra and structured linear algebra, which readers from the dynamic graph community may not be familiar with. As such, a large part of our overview focuses on explaining these algebraic tools.
\Cref{sec:overview:frobenius} goes further into the Frobenius form and the difficulties for maintaining it dynamically.
\Cref{sec:overview:toeplitzlike} explains the concept of Toeplitz-like matrices.
Equipped with these algebraic tools, \Cref{sec:overview:dynamicfrobenius} shows how to use them to maintain the Frobenius form (and thus distances) in a dynamic setting.

\subsection{Main Technical Barriers and Ideas}
\label{sec:overview:frobenius}
\paragraph{Distances via Matrix Powers:}
Given the adjacency matrix $\mA\in\Z^{n\times n}$ of a directed graph $G=(V,E)$, $(\mA^\ell)_{s,t}$ is the number of walks from $s$ to $t$ using $\ell$ steps, i.e., $(\mA^\ell)_{s,t}\neq 0$ if and only if there exists an $st$-walk with $\ell$ steps. Thus, the smallest $\ell$ such that $(\mA^\ell)_{s,t}\neq 0$ is precisely the $st$-distance in graph $G$.
Maintaining distances in graph $G$ thus reduces to the data structure task where we update matrix $\mA$ (an edge update changes one entry in adjacency matrix $\mA$) and support queries where for any given $s,t$, we return all $\mA_{s,t},(\mA^2)_{s,t},...,(\mA^{n-1})_{s,t}$.

In general, $\mA$ may not be generic and $\mA^\ell$ may contain $\tO(\ell)$-bit numbers, but via randomization (see \Cref{thm:weighted-adjacency-matrix}) we can perform computation in a finite field $\Z_p$, i.e., compute everything modulo some prime $p\le\poly(n)$, and still focus (w.h.p.) on finding the smallest $\ell$ with $(\mA^\ell)_{s,t}\neq0$.
Hence, for simplicity of this overview, one may think of $\mA$ as the classical adjacency matrix.

\paragraph{Matrix Powers via Frobenius Form:}
A common approach for computing powers is matrix decomposition, e.g., writing $\mA = \mK\mC\mK^{-1}$ for some basis transform $\mK$ such that matrix $\mC$ allows for efficient computation of $\mC^\ell$ and thus $\mA^\ell = \mK\mC\mK^{-1}\mK\mC\mK^{-1} ... =\mK \mC^\ell \mK^{-1}$.
A well-known example would be diagonalization (e.g., eigendecomposition) because taking powers of a diagonal matrix is trivial.
However, maintaining this particular decomposition is difficult since even just maintaining the eigenvalues or eigenvectors is not known in subquadratic time (see \cite{AdilS25} for a discussion of the state-of-the-art). 
Instead, we make use of the \emph{Frobenius form} (\Cref{def:frobenius}) where for a generic $\mA$, matrix $\mC$ is of the form $\mZ + c e_n^\top$, i.e., $\mC$ has 1s directly below the diagonal (due to $\mZ$), and one dense column on the right (due to $c e_n^\top$).
\[
\mA = \mK
\underbrace{\begin{bmatrix}
0&&& c_1\\
1&\ddots&& c_2\vspace{-4pt}\\
&\ddots&0&\vdots\\
&&1&c_n
\end{bmatrix}}_{=\mC}
\mK^{-1}
\]
This almost diagonal shape of $\mC$ makes it easy to compute powers $\mC^\ell$ (proven in \Cref{sec:frobenius}):
\begin{restatable}{lemma}{lemcompanionpowers}\label{lem:companion_powers}
    Given $c,u,v \in \F^n$, let $\mC = \mZ + c e_n^\top \in \F^{n\times n}$ where $\mZ$ is the shift-down matrix.
    We can compute $u^\top \mC^\ell v$ for all $\ell=0,...,n$ in $\tO(n)$ operations.
\end{restatable}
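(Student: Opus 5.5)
We identify $\F^n$ with polynomials of degree $<n$ modulo the characteristic polynomial $p(x) = x^n - \sum_{i=0}^{n-1} c_{i+1} x^i$. Under this identification, the basis vector $e_{i+1}$ corresponds to the monomial $x^i$. The companion matrix $\mC = \mZ + c e_n^\top$ is then exactly multiplication by $x$ modulo $p$. Indeed, for $i<n-1$, $x\cdot x^i = x^{i+1}$ matches $\mC e_{i+1} = e_{i+2}$. For the last basis vector, $x \cdot x^{n-1} = x^n \equiv \sum_i c_{i+1}x^i$, which matches $\mC e_n = c$. Hence $\mC^\ell v$ corresponds to $x^\ell v(x) \bmod p(x)$, where $v(x)=\sum_i v_{i+1}x^i$.

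The quantity $u^\top \mC^\ell v$ is therefore a fixed linear functional $u^\top$ applied to $x^\ell v(x) \bmod p$. Computing each power separately would cost $\tO(n)$ apiece, so $\tO(n^2)$ in total, which is too slow. Instead we use the linear recurrence. The sequence $s_\ell = u^\top \mC^\ell v$ satisfies $s_{\ell+n} = \sum_{i=0}^{n-1} c_{i+1} s_{\ell+i}$, by Cayley–Hamilton ($p(\mC)=0$). We only need $\ell \le n$, so it suffices to compute the first $n$ terms $s_0,\dots,s_{n-1}$ quickly; the term $s_n$ then follows from one application of the recurrence in $O(n)$.

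To get $s_0,\dots,s_{n-1}$ in $\tO(n)$, I would use the standard generating-function trick. Let $S(x) = \sum_{\ell\ge 0} s_\ell x^{-\ell-1} = u^\top (x\mI-\mC)^{-1} v$, a rational function. Its denominator is $\det(x\mI-\mC)=p(x)$, and its numerator $N(x)$ has degree $<n$, so $S(x) = N(x)/p(x)$. The plan has three steps:
\begin{itemize}
\item[(i)] Compute $N(x)$ in $\tO(n)$ using the structure of $\mC$.
\item[(ii)] Expand $N/p$ as a power series in $x^{-1}$ to $n+1$ terms. Equivalently, reverse the polynomials and perform one power-series inversion of $\mathrm{rev}(p)$ modulo $x^{n+1}$ followed by one multiplication. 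This takes $\tO(n)$ via Newton iteration and FFT.
\item[(iii)] Read off $s_0,\dots,s_n$ from the coefficients.
\end{itemize}

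For step (i) there is a cleaner way to obtain $N$. We have $u^\top (x\mI - \mC)^{-1} v \cdot p(x) = u^\top \operatorname{adj}(x\mI-\mC) v$, and for the companion matrix the adjugate has an explicit polynomial form. It is simpler, though, to argue via the polynomial model. Write $s_\ell = \lambda(x^\ell v \bmod p)$ with $\lambda(f) = \sum_i u_{i+1} f_i$. Then $\sum_\ell s_\ell x^{-\ell-1}$ equals the "transposed product" of $\lambda$ with $v(x)/p(x)$. Concretely, $N(x)$ is the polynomial part (degree $<n$) of the product $p(x)\cdot \tilde u(x)$, where $\tilde u(x) = \sum_{i} u_{i+1} x^{-i-1}$, multiplied by $v(x)$ and reduced modulo $p$.

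Instead of relying on this formula, I would fall back on a safe alternative. First compute $t_\ell = u^\top \mC^\ell e_1$ for $\ell<2n$: the sequence $t$ satisfies the same recurrence, and its first $n$ terms are easy to obtain. Then use linearity in $v(x)=\sum_j v_{j+1}x^j$. Since $e_{j+1} = \mC^j e_1$ for $j<n$, we get $s_\ell = \sum_j v_{j+1} t_{\ell+j}$, which is a single correlation (convolution) of $v$ with $t$, computable by FFT in $\tO(n)$. It remains to obtain $t_0,\dots,t_{2n-1}$.

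The first terms are $t_\ell = u^\top e_{\ell+1} = u_{\ell+1}$ for $\ell<n$. The remaining $n$ terms come from the recurrence. Extending a linear recurrence by $n$ terms in $\tO(n)$ is the usual step: the generating function $T(y)=\sum t_\ell y^\ell$ equals $Q(y)/\mathrm{rev}(p)(y)$, where $Q = (T \bmod y^n)\cdot \mathrm{rev}(p) \bmod y^n$. One power-series inversion modulo $y^{2n}$ and one multiplication then give all terms.

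The main obstacle I expect is getting the index conventions of this last step correct: the reversal of $p$, and the alignment of the correlation between $v$ and $t$. The complexity itself is routine once the problem is reduced to $O(1)$ polynomial multiplications and inversions.
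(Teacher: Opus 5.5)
Your proposal is correct. The part that carries the proof, your "safe alternative", takes a different route from the paper.

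\textbf{The paper's route.} It computes $u^\top(\mI-x\mC)^{-1}v \bmod x^{n+1}$ by solving $(\mI-x\mC)y=v$ directly over $\F[x]/\langle x^{n+1}\rangle$. The shift structure gives $y_i=v_i+x(y_{i-1}+c_iy_n)$. Unrolling this isolates the single unknown $y_n$, which is found with one power-series division. Then $u^\top y$ reduces to two correlations ($u$ with $v$, and $u$ with $c$) plus one multiplication by $y_n$.

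\textbf{Your route.} You use that $e_1$ is cyclic with $\mC^je_1=e_{j+1}$ for $j<n$. This gives $u^\top\mC^\ell v=\sum_{j=0}^{n-1}v_{j+1}t_{\ell+j}$ with $t_\ell=u^\top\mC^\ell e_1$, so the answer is a Hankel matrix--vector product.
\begin{itemize}
\item The Hankel symbol starts from $u$ itself, since $t_\ell=u_{\ell+1}$ for $\ell<n$.
\item It is extended to indices $<2n$ by the recurrence $t_{\ell+n}=\sum_i c_{i+1}t_{\ell+i}$, using one inversion of $\mathrm{rev}(p)$ modulo $y^{2n}$. The recurrence follows directly from $\mC^ne_1=\mC e_n=c$, so Cayley--Hamilton is not needed.
\item The indices match: $\ell\le n$ and $j\le n-1$ require exactly $t_0,\dots,t_{2n-1}$.
\end{itemize}

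\textbf{Comparison.} Both routes divide by the same series. The paper's denominator $1-\sum_{j=1}^n c_jx^{n-j+1}$ is exactly $\mathrm{rev}(p)$, so both cost one power-series division plus $O(1)$ FFT products. The paper reaches this denominator by elimination in a triangular-plus-rank-one system, so it never argues about characteristic polynomials or recurrences. Yours reaches it through the recurrence. That makes your argument more modular: it reduces to two textbook primitives (extending a linear recurrence, and a correlation) and makes the ``multiplication by $x$ modulo $p$'' structure explicit.

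Your closed form for the numerator in step (i), $N=((p\tilde u)_+\cdot v)\bmod p$, is also correct; it is the standard transposed-multiplication identity. You state it without justification, but your fallback does not depend on it, so the proof is complete without it.
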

We have $(\mA^\ell)_{s,t} = u^\top \mC^\ell v$ when picking $u^\top := (e_s^\top \mK)$ as the $s$-th row of $\mK$ and $v := \mK^{-1} e_t$ as the $t$-th column of $\mK^{-1}$.
This allows us to compute $(\mA^\ell)_{s,t}$ for all $\ell=1,...,n$ via \Cref{lem:companion_powers}. 
Recall that the smallest $\ell$ where $(\mA^\ell)_{s,t}\neq 0$ is the sought $st$-distance in our graph. So given the Frobenius form $\mA=\mK\mC\mK^{-1}$, we can query the distance between any vertex pair in $\tO(n)$ time.

\paragraph{Bottleneck for Dynamic Setting:}
For the purpose of dynamic distances under edge updates, the remaining task is to maintain this Frobenius form while supporting entry updates to matrix~$\mA$.
Here the bottleneck is computing basis-transforms $\mK$ and $\mK^{-1}$, which are dense $n\times n$ matrices. As stated in \Cref{def:frobenius}, $\mK=[v|\mA v|...|\mA^{n-1}v]$ is given by repeated matrix vector products with~$\mA$.
In the dynamic setting, when an entry of $\mA$ changes, then in general $\mA^\ell v$ changes in all $O(n)$ entries for all $\ell \ge 2$. In particular, matrix $\mK$ is also densely changing in $O(n^2)$ entries.
Maintaining this new $\mK$ is the main bottleneck of prior work \cite{KarczmarzS23,FrandsenS11} and is the reason why dynamically maintaining the Frobenius form ran only in quadratic time.

\paragraph{Idea for new implicit maintenance:}
Consider matrix $\mA'$ which is $\mA$ after a sequence of $k$ entry updates (e.g., $k$ edge updates).
Assume we have the Frobenius form of the original matrix $\mA= \mK\mC\mK^{-1}$ so we now need to compute/maintain the new Frobenius form $\mA' = \mK' \mC' \mK'^{-1}$.
The main bottleneck is the computation of the new $\mK'$.
There is a matrix $\mT$ representing the transformation $\mK'=\mK\mT$. In general, this $\mT$ is a dense and full-rank matrix, so working with it directly takes $\Omega(n^2)$ time.
However, we show that $\mT$ has a special structure and is a so-called ``\emph{Toeplitz-like matrix}'' (details in the next subsection).
Such Toeplitz-like matrices of size $n\times n$ can be represented in just $O(nk)$ space using so-called ``\emph{generators}'' which are $n\times k$ matrices $\mU,\mV$ for $k\ll n$.
So rather than updating the Frobenius form explicitly, we maintain transform $\mT$ implicitly by maintaining its generators $\mU,\mV$. This then gives query access to $\mK'=\mK\mT$.
Likewise, we have access to $\mK'^{-1}=\mT^{-1}\mK^{-1}$ by maintaining generators of $\mT^{-1}$.
Before we can explain this in detail, we must first define Toeplitz-like matrices and their generators.

\subsection{Toeplitz-like Matrices, Generators, and Displacement Rank (\Cref{sec:toeplitz-like})}
\label{sec:overview:toeplitzlike}

Before defining Toeplitz-like matrices, let us first recap the structurally simpler Toeplitz matrices.
For Toeplitz matrices, the entries are constant in the diagonal direction~\cite{golub2013matrix}.
For example,
\[
\begin{bmatrix}
    a_{1} & a_{2} & a_3 & a_4 \\
          & a_{1} & a_2 & a_{3} \\
          & & a_1 & a_2 \\
          &&&a_1
\end{bmatrix}
\text{ or }
\begin{bmatrix}
    a_{4} & a_{5} & a_6 & a_{7} \\
    a_{3}      & a_{4} & a_5 & a_{6} \\
    a_2      & a_3&a_4 & a_5\\
    a_1      & a_2 & a_3 &a_4
\end{bmatrix}.
\]
Since every row is a shift of the preceding row, a matrix vector product corresponds to a convolution and can be performed in $\tO(n)$ time via fast Fourier transform~\cite{golub2013matrix,CK91}.
Also, rather than writing down all $n^2$ entries of the matrix, one can represent these matrices via an $O(n)$ dimensional vector.
Thus, Toeplitz matrices can be handled much more efficiently than general matrices.

To generalize Toeplitz matrices, in structured linear algebra and computer algebra \linebreak (e.g.,~\cite{kailath1979displacement}), one defines
the displacement operator $\nabla(\cdot)$:
\[\nabla(\mM) := \mM - \hspace{-75pt}\underbrace{~~\mZ\mM\mZ^\top}_{\text{Matrix $\mM$, shifted one position down and to the right.}}\]
where $\mZ$ shifts the matrix down, and $\mZ^\top$ shifts it to the right, i.e., $\nabla(\mM)$ takes matrix $\mM$, shifts it down and right, and then subtracts it from itself.
Since a Toeplitz matrix $\mM$ is the same row repeatedly shifted, the matrix $\nabla(\mM)$ is mostly 0 and of rank at most 2; see \Cref{fig:outline:displacementrank} for an example.

\begin{figure}[t]
    \begin{align*}
        \small
        \nabla\left(\begin{bmatrix}
            a_{1} & a_{2} & ... & a_n \\
                  & a_{1} & ... & a_{n-1} \\
                  & &\ddots & \vdots \\
                  &&&a_1
        \end{bmatrix}\right)
        =&~
        \small
        \begin{bmatrix}
            a_{1} & a_{2} & ... & a_n \\
                  & a_{1} & ... & a_{n-1} \\
                  & &\ddots & \vdots \\
                  &&&a_1
        \end{bmatrix}
        -
        \begin{bmatrix}
            0 & ... & ... & 0\\
                  & a_{1} & ... & a_{n-1} \\
                  & &\ddots & \vdots \\
                  &&&a_1
        \end{bmatrix}
        =
        \begin{bmatrix}
            a_{1} & a_{2} & ... & a_n \\
            0      & ... & ... & 0 \\
            \vdots      & & & \vdots \\
            0      &...&...&0
        \end{bmatrix}
        =
        \begin{bmatrix}1\\0\\\vdots\\0\end{bmatrix} \begin{bmatrix}a_1\\a_2\\\vdots\\a_n\end{bmatrix}^\top
    \end{align*}
    \vspace{-20pt}
    \caption{Upper triangular Toeplitz matrices have displacement rank 1. In general, Toeplitz matrices have displacement rank at most 2.}
    \label{fig:outline:displacementrank}
\end{figure}

\paragraph{Toeplitz-like matrices:}
A \emph{Toeplitz-like} matrix is a matrix where $\nabla(\mM)$ has low rank, i.e., we can write $\nabla(\mM) = \mU\mV^\top$ for some $n\times k$ matrices $\mU,\mV$ for $k \ll n$. We call such $\mU,\mV$ generators of~$\mM$, and say that $\mM$ has \emph{displacement rank} $k$ when $\nabla(\mM)$ has rank $k$.

\begin{definition}[Toeplitz-like and displacement rank~\cite{kailath1979displacement}]\label{def:toeplitzlike}
Matrix $\mM \in \F^{n \times n}$ is said to be \emph{Toeplitz-like} of displacement rank $k$ if
$
\rank\big(\nabla(\mM)\big) = k.
$
We call any pair $\mU,\mV$ with $\mU\mV^\top = \nabla(\mM)$ \emph{generators of $\mM$.} When $\mU,\mV$ are $n\times k$, then they are a \emph{minimal} generator pair.
\end{definition}

Due to the higher rank of $\nabla(\mM)$, Toeplitz-like matrices can be structurally a lot more complicated than Toeplitz matrices. When $k=n$, any matrix can be represented this way.
However, as long as a matrix $\mM$ has a lower displacement rank $k < n$, we can represent $\mM$ via generators $\mU,\mV\in\F^{n\times k}$ with $O(nk)$ numbers rather than storing the dense $O(n^2)$ entries of $\mM$.

\paragraph{Computation with Toeplitz-like matrices:}

For efficiency, we want to never write down any Toeplitz-like matrix, and instead, we work with its generators exclusively. 
First, given a Toeplitz-like matrix $\mM$ via generators, we want to construct generators of $\mM^{-1}$ without constructing $\mM$ or $\mM^{-1}$ explicitly. \cite{BostanJS08} gave such a procedure for the purpose of solving structured linear systems.

\begin{lemma}[{\cite[Lemma 14 \& Proposition 5]{BostanJS08}}]\label{lem:generator_inverse}
    Let $|\F|\ge n^\gamma$.
    Given generators $\mU,\mV \in \F^{n\times k}$ of an invertible Toeplitz-like $\mM$, there is a randomized algorithm that with probability $\ge 1- 1/n^{\gamma-2}$ 
    computes generators $\mU',\mV'\in \F^{n\times (k+2)}$ of $\mM^{-1}$ in $\tilde O(k^{\omega-1}n)$ operations. 
\end{lemma}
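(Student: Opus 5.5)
The plan is to pass through the Sylvester displacement $\mM\mZ-\mZ\mM$, where inversion is transparent. This reduces the lemma to two explicit rank-one corrections plus one structured multi-right-hand-side linear solve. The $+2$ in the generator size comes from these two corrections, one per conversion between the Stein operator $\nabla$ and the Sylvester operator.

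\emph{Step 1 (Stein to Sylvester).} Start from $\mM-\mZ\mM\mZ^\top=\mU\mV^\top$ and multiply on the right by $\mZ$. Since $\mZ^\top\mZ=\mI-e_ne_n^\top$, this gives
\[
\mM\mZ-\mZ\mM=\mU\mV^\top\mZ-(\mZ\mM e_n)e_n^\top .
\]
So the Sylvester displacement of $\mM$ has generators $\mG=[\mU\,|\,-\mZ\mM e_n]$ and $\mH=[\mZ^\top\mV\,|\,e_n]$, both $n\times(k+1)$. The only new quantity is the column $\mM e_n$. I will compute it from the generators without forming $\mM$, using the identity $\mM=\sum_{j\ge0}\mZ^j\mU\mV^\top(\mZ^\top)^j$. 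This identity writes $\mM$ as $\sum_i \mL(u_i)\mL(v_i)^\top$ with lower-triangular Toeplitz factors $\mL(\cdot)$, so one matrix-vector product with $\mM$ costs $\tO(nk)$ via FFT.

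\emph{Step 2 (invert).} Let $\mN=\mM^{-1}$. Then $\mN(\mM\mZ-\mZ\mM)\mN=\mZ\mN-\mN\mZ$, hence
\[
\mN\mZ-\mZ\mN=-(\mN\mG)(\mN^\top\mH)^\top .
\]
This is the key structural fact: the Sylvester rank is preserved under inversion. Computationally, it requires solving $\mM\mX=\mG$ and $\mM^\top\mY=\mH$, each with $k+1$ right-hand sides. Note that $\mM^\top$ is again Toeplitz-like, with generators obtained by transposing and reversing.

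\emph{Step 3 (Sylvester back to Stein).} Multiply $\mZ\mN-\mN\mZ=\mX\mY^\top$ on the right by $\mZ^\top$, using $\mZ\mZ^\top=\mI-e_1e_1^\top$. This gives
\[
\mN-\mZ\mN\mZ^\top=-\mX\mY^\top\mZ^\top+(\mN e_1)e_1^\top .
\]
Hence $\mU'=[-\mX\,|\,\mN e_1]$ and $\mV'=[\mZ\mY\,|\,e_1]$ have $k+2$ columns, which matches the claimed size. The column $\mN e_1$ costs one further solve. All the bookkeeping beyond the solves, namely the shifts and the matrix-vector products with $\mM$, takes $\tO(nk)$ time.

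\emph{Main obstacle: the structured solver.} What remains is to solve $\mM\mX=\mB$ for $\mB\in\F^{n\times O(k)}$ in $\tO(k^{\omega-1}n)$ operations, given only the generators. I would follow the randomized Kaltofen/Morf--Bitmead--Anderson approach.
\begin{enumerate}
\item First precondition: replace $\mM$ by $\mM'=\mL_1\mM\mL_2$, where $\mL_1,\mL_2$ are random unit triangular Toeplitz matrices. This keeps the displacement rank $O(k)$, with generators computable in $\tO(nk)$. By Schwartz--Zippel, $\mM'$ has all leading principal minors nonzero with probability $\ge 1-n^2/|\F|\ge 1-1/n^{\gamma-2}$. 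This is where the failure probability comes from.
\item Then run divide and conquer on $\mM'$. Split it into $2\times2$ blocks and recursively compute generators of the inverse of the leading block. Form generators of the Schur complement and recurse on it. Combine the results through the block-inverse formula.
\item At every recursion level, each product of the form (Toeplitz-like matrix)$\times$(generator block) is a product of polynomial matrices of size $O(k)\times O(k)$ and degree $O(n)$. With fast matrix multiplication this costs $\tO(k^{\omega-1}n)$. Before recursing, I would re-compress the generators back to $O(k)$ columns by a rank-revealing step, which costs $O(\MM(n,k,k))$.
\end{enumerate}
The recurrence $T(n)=2T(n/2)+\tO(k^{\omega-1}n)$ then gives $\tO(k^{\omega-1}n)$ in total. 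I expect the delicate part to be keeping the generator length bounded through the recursion, together with the polynomial-matrix arithmetic behind the $k^{\omega-1}$ factor. This is exactly the content of \cite[Lemma 14 \& Proposition 5]{BostanJS08}, and I would invoke that result for this step rather than re-derive it.
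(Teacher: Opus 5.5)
Your proposal is correct and consistent with the paper, which does not prove this lemma itself but imports it from \cite[Lemma 14 \& Proposition 5]{BostanJS08}. Your Stein$\to$Sylvester$\to$Stein computation is a valid explicit derivation of the $k+2$ generator size, and you defer the $\tO(k^{\omega-1}n)$ randomized structured inversion/solve to that same source. One small slip in your sketch of that solver: products of $O(k)\times O(k)$ polynomial matrices of degree $O(n)$ cost $\tO(k^{\omega}n)$, not $\tO(k^{\omega-1}n)$; the $k^{\omega-1}$ factor only appears after reshaping into blocks of degree $O(n/k)$, as in \Cref{lem:remove-modulo} and \Cref{thm:toeplitz-general-matrix-product}.
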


Next, in addition to inversion, we must also be able to multiply Toeplitz-like matrices efficiently.
In~\Cref{sec:toeplitz-like} we show the following algorithm, which makes use of fast matrix multiplication.

\begin{restatable*}[Product of Toeplitz-like matrices]{theorem}{thmtoeplitzproduct}\label{thm:toplitz-topelitz-product}
Given generators of matrices $\mA,\mB \in \F^{n\times n}$ of displacement ranks $k_a,k_b$, respectively, the product $\mA\mB$ has displacement rank at most $k_a+k_b+1$ and we can compute the generators of $\mA\mB$ in $\tilde O(\MM(k_a,k_b,k_b)n/k_b + \MM(k_b,k_a,k_a)n/k_a)$ operations.

In particular, (i) when $k_a,k_b \le k$, then this is $\tilde O(k^{\omega-1}n)$ operations; (ii) when using combinatorial matrix multiplication, then this is $\tilde O(k_ak_bn)$ operations.
\end{restatable*}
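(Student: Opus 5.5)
We plan to prove the statement in two parts: a short algebraic identity for the displacement rank, followed by an algorithm that computes the product generators from the given ones.

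\textbf{Displacement rank of the product.} Write $\nabla(\mA)=\mA-\mZ\mA\mZ^\top=\mU_a\mV_a^\top$ and $\nabla(\mB)=\mU_b\mV_b^\top$. Since $\mZ^\top\mZ=\mI-e_ne_n^\top$, we have
\[
\mZ\mA\mB\mZ^\top=\mZ\mA\mZ^\top\mZ\mB\mZ^\top+\mZ\mA e_ne_n^\top\mB\mZ^\top .
\]
Substituting $\mZ\mA\mZ^\top=\mA-\mU_a\mV_a^\top$ and $\mZ\mB\mZ^\top=\mB-\mU_b\mV_b^\top$ gives
\[
\nabla(\mA\mB)=\mA\mU_b\mV_b^\top+\mU_a\mV_a^\top\mB-\mU_a\mV_a^\top\mU_b\mV_b^\top-(\mZ\mA e_n)(\mZ^\top\mB^\top e_n)^\top .
\]
The first and third terms share the right factor $\mV_b^\top$, so we merge them into $(\mA\mU_b-\mU_a(\mV_a^\top\mU_b))\mV_b^\top$. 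The remaining terms are $\mU_a(\mB^\top\mV_a)^\top$ and a rank-one term. We therefore take the generators
\[
\mU=[\mA\mU_b-\mU_a\mV_a^\top\mU_b \mid \mU_a \mid -\mZ\mA e_n],\qquad
\mV=[\mV_b\mid \mB^\top\mV_a\mid \mZ^\top\mB^\top e_n].
\]
These have $k_a+k_b+1$ columns, which proves the rank bound.

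\textbf{Computing the generators.} Four pieces are needed.
\begin{itemize}
\item $\mV_a^\top\mU_b$: this is a $k_a\times k_b$ product with inner dimension $n$. We split it into $n/k_b$ blocks, at cost $\MM(k_a,k_b,k_b)n/k_b$.
\item $\mU_a(\mV_a^\top\mU_b)$: this costs $\MM(n,k_a,k_b)$, which is within the same bound.
\item $\mA e_n$ and $\mB^\top e_n$: these are single Toeplitz-like matrix-vector products.
\item $\mA\mU_b$ (a Toeplitz-like matrix times $k_b$ vectors) and $\mB^\top\mV_a$ (a Toeplitz-like matrix times $k_a$ vectors, noting that $\mB^\top$ has generators $\mV_b,\mU_b$ up to the appropriate transposed displacement).
\end{itemize}
For the matrix-vector products we use the standard representation obtained by unrolling $\mM=\sum_{j\ge 0}\mZ^j\nabla(\mM)\mZ^{\top j}$, which gives
\[
\mM=\sum_{i=1}^{k} L(u_i)L(v_i)^\top,
\]
where $L(x)$ denotes the lower triangular Toeplitz matrix with first column $x$. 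Thus $\mA\mU_b=\sum_{i\le k_a}L(u_i)\,L(v_i)^\top\mU_b$.

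\textbf{Main obstacle.} The naive evaluation of $\mA\mU_b$ costs $\tO(k_ak_bn)$ via FFT, which already gives the combinatorial bound (ii). Improving this to $\MM(k_a,k_b,k_b)n/k_b$ is the step I expect to be hardest. I would reduce it to polynomial matrix multiplication: translate each Toeplitz product into a polynomial product truncated modulo $x^n$, so the computation becomes a product of a $1\times k_a$ by $k_a\times k_b$ polynomial matrix pattern. The idea is the following:
\begin{itemize}
\item Cut the length-$n$ polynomials into $n/k_b$ chunks of degree $k_b$.
\item Rewrite the bilinear sum as polynomial matrix products of dimensions $k_a\times k_b$ by $k_b\times k_b$ with degree $O(n/k_b)$ entries. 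Evaluating at $O(n/k_b)$ points (or using Kronecker substitution) costs $\tO(\MM(k_a,k_b,k_b)n/k_b)$.
\end{itemize}
Alternatively, I would use the Bostan–Jeannerod–Schost technique of regrouping $\sum_i L(u_i)L(v_i)^\top\mU_b$ into block Toeplitz products of size $k_b$. The transpose-type term $L(v_i)^\top$ needs care, since it is upper triangular; I would handle it by reversal, using $\mJ L(x)^\top\mJ=L(x)$ for the reversal matrix $\mJ$. Symmetrically, $\mB^\top\mV_a$ costs $\MM(k_b,k_a,k_a)n/k_a$.

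Claims (i) and (ii) then follow by specializing to $k_a,k_b\le k$, where $\MM(k,k,k)n/k=k^{\omega-1}n$, and by using the naive cost $\MM(a,b,c)=abc$.
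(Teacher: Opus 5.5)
Your displacement identity is right up to one slip, and it gives exactly the paper's generators. Since $\mZ\mA\mZ^\top=\mA-\mU_a\mV_a^\top$, your block $\mA\mU_b-\mU_a(\mV_a^\top\mU_b)$ equals $\mZ\mA\mZ^\top\mU_b$. The paper computes that directly as $\mZ\mA(\mZ^\top\mU_b)$, which avoids your two extra dense products (they are within budget anyway). The slip is that $(e_n^\top\mB\mZ^\top)^\top=\mZ\mB^\top e_n$, not $\mZ^\top\mB^\top e_n$, so the last column of your $\mV$ is wrong as written. Also, $\nabla(\mB^\top)=\nabla(\mB)^\top=\mV_b\mU_b^\top$ holds exactly, with no caveat needed.

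The genuine gap is the running time, which carries all of the statement's content beyond the generator identity. Everything reduces to multiplying a displacement-rank-$k$ matrix by an $n\times m$ matrix in $\tO(\MM(k,k,m)n/k+\MM(k,m,m)n/m)$. The paper proves this separately (\Cref{thm:toeplitz-general-matrix-product}) and simply invokes it here; your sketch does not establish it.

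\emph{Why your chunking sketch fails.} Chunking at a single scale does not get you there. In $u(x)^\top\bigl(v(x)b(x)^\top\bmod x^n\bigr)$, the truncation is exactly what forbids reassociating to $(u^\top v)b^\top$. Cut $v_i,b_j$ into length-$d$ chunks $v^{[a]},b^{[c]}$. You get $\sum_{a+c\le n/d-2}x^{(a+c)d}v^{[a]}b^{[c]}$ plus an anti-diagonal sum truncated mod $x^d$. That is the same obstruction at a coarser scale, so no $k_a\times k_b$ by $k_b\times k_b$ polynomial matrix product actually emerges.

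\emph{The missing idea.} You need the dyadic identity of \Cref{lem:remove-modulo}, which removes the modulo at all $\log n$ scales at once. Each scale then becomes a genuine product $u^\top\mV^{(s)}\mB^{(s)}$ with inner dimension $n/2^s$. One multiplies right-to-left when $n/2^s\ge\min(k,m)$ and left-to-right otherwise, bounding the latter by convexity of the rectangular exponent $\omega(\cdot,\cdot,\cdot)$. Invoking BJS as published gives only $\tO(k^{\omega-1}n)$ for $m=k$, hence by blocking $\tO(\max(k_a,k_b)\min(k_a,k_b)^{\omega-2}n)$. That is enough for (i) but weaker than the stated rectangular bound.

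\emph{Per-product accounting.} This method costs $\MM(\max(k,m),\min(k,m),\min(k,m))\,n/\min(k,m)$ per product. So for $k_b>k_a$ it gives $\mA\mU_b$ in $\MM(k_b,k_a,k_a)n/k_a$, not the $\MM(k_a,k_b,k_b)n/k_b$ you assign. This is harmless because the stated bound contains both terms, but each product should be charged the full sum.
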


One can show that an $n\times m$ matrix also has displacement rank at most $O(m)$. So the above \Cref{thm:toplitz-topelitz-product} extends to products with arbitrary matrices, e.g., an $n\times n$ displacement rank-$k$ matrix can be multiplied with a general $n\times m$ matrix in $\tO(\MM(k,m,m)n/m + \MM(m,k,k)n/k)$ operations.
In particular, multiplying a Toeplitz-like $n\times n$ matrix with a vector (i.e., $m=1$) takes $\tilde O(kn)$ time, and multiplying it with a general $n\times n$ matrix takes $\tilde O(\MM(k,k,n)n/k)$ time.

\begin{corollary}\label{cor:overview:product}
    Given generators of matrix $\mA \in \F^{n\times n}$ with displacement rank $k$, we can compute $\mA\mB$ and $\mB\mA$ for any $\mB\in\F^{n\times n}$ in $\tilde O(\MM(k,k,n)n/k)$ operations. We can compute $\mA v$ and $v^\top \mA$ for any $v \in \F^n$ in $\tilde O(kn)$ operations.
\end{corollary}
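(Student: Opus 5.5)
The plan is to derive \Cref{cor:overview:product} from \Cref{thm:toplitz-topelitz-product}, after first checking that an arbitrary dense matrix $\mB\in\F^{n\times n}$ has displacement rank at most $n$. For $\mB$ we take the trivial generator pair $\mU_B=\nabla(\mB)$ and $\mV_B=\mI$. This pair is \emph{not} used directly, though, because \Cref{thm:toplitz-topelitz-product} applied with $k_b=n$ would cost $\tilde O(\MM(k,n,n)+\MM(n,k,k)n/k)$, which is too much.

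Instead, split $\mB$ into $n/k$ column blocks $\mB_1,\dots,\mB_{n/k}$, each of size $n\times k$. Then $\mA\mB=[\mA\mB_1|\dots|\mA\mB_{n/k}]$. To handle a rectangular block, pad it to the $n\times n$ matrix $\tilde\mB_j=[\mB_j|0]$. Its displacement is
\[
\nabla(\tilde\mB_j)=\tilde\mB_j-\mZ\tilde\mB_j\mZ^\top .
\]
This matrix is supported on the first $k$ columns and on the diagonal shifts, so its rank is $O(k)$. Explicit generators of size $n\times O(k)$ follow by writing $\tilde\mB_j=\sum_{i\le k}\mB_je_ie_i^\top$ and computing $\nabla$ column by column in $O(nk)$ time.

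Applying \Cref{thm:toplitz-topelitz-product} with $k_a=k$ and $k_b=O(k)$ gives generators of $\mA\tilde\mB_j$ in $\tilde O(\MM(k,k,k)n/k)$ operations. From generators we then recover the $k$ nonzero columns of $\mA\tilde\mB_j$ explicitly. This works because $\nabla$ is invertible: $\mM=\sum_{i\ge0}\mZ^i\mU\mV^\top(\mZ^\top)^i$. Each column $\mM e_j$ equals a sum of $O(k)$ Toeplitz-vector products, $\sum_{r}L(u_r)\cdot(\text{shifted }v_r)$, where $L(u)$ is the lower-triangular Toeplitz matrix with first column $u$. So extracting $k$ columns reduces to multiplying the lower-triangular Toeplitz matrices $L(u_r)$ by an $n\times k$ matrix. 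That is exactly the structured-times-dense case, and it costs $\tilde O(\MM(k,k,n)n/k\cdot k/n)$ per block when batched through the same polynomial-multiplication routine that proves \Cref{thm:toplitz-topelitz-product}.

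Summing over the $n/k$ blocks gives $\tilde O(\MM(k,k,k)n^2/k^2)\le\tilde O(\MM(k,k,n)n/k)$, using $\MM(k,k,n)\ge (n/k)\MM(k,k,k)$ up to the splitting inequality in the preliminaries.

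\textbf{Main obstacle.} I expect the hard step to be explicit extraction with fast matrix multiplication, rather than the naive $\tilde O(k^2 n)$ per block. The first approach I would try is to view $\sum_r L(u_r)W_r$ as polynomial products $\sum_r u_r(x)\,w_{r,j}(x)$ modulo $x^n$. Chunking the polynomials into degree-$k$ pieces turns this into $\MM(k,k,\cdot)$ products, the same trick used inside \Cref{thm:toplitz-topelitz-product}.

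\textbf{The product $\mB\mA$.} Transpose everything: $\nabla(\mA^\top)=\mV\mU^\top$, so $\mA^\top$ has generators $(\mV,\mU)$, and $\mB\mA=(\mA^\top\mB^\top)^\top$.

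\textbf{The vector products.} Take $k_b=1$ with a single block of width one. The vector $v$ padded to $[v|0]$ has displacement rank at most $2$, and the cost is $\tilde O(\MM(k,1,1)n+\MM(1,k,k)n/k)=\tilde O(kn)$. Equivalently, directly compute $\mA v=\sum_{r\le k}L(u_r)\,(\text{Toeplitz}(v_r)\, v)$, which is $2k$ FFT-based Toeplitz-vector products of $\tilde O(n)$ each. The case $v^\top\mA$ follows by transposition, as above.
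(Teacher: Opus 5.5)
Your vector products (the $2k$ Toeplitz--vector products give $\tilde O(kn)$ directly) and your handling of $\mB\mA$ by transposition are fine. The $n\times n$ case, however, has a genuine gap: your closing inequality points the wrong way. The splitting inequality in the preliminaries, $\MM(ab,c,d)\le b\cdot\MM(a,c,d)$, gives $\MM(k,k,n)\le (n/k)\MM(k,k,k)$, which is the reverse of what you use.

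So your block-by-block total $\tilde O(\MM(k,k,k)n^2/k^2)$ is \emph{at least} the target $\tilde O(\MM(k,k,n)n/k)$. With current bounds it is polynomially larger. For $n=k^2$, the paper's cited $\omega(2,1,1)\le 3.250035$ gives $\MM(k,k,n)\le k^{3.251}$, whereas your blocks cost $k\cdot\MM(k,k,k)=k^{1+\omega}\approx k^{3.37}$. Cutting $\mB$ into $n/k$ independent width-$k$ blocks is exactly the blocking argument the paper mentions after \Cref{thm:toeplitz-general-matrix-product}. It only recovers the $\tilde O(k^{\omega-2}n^2)$ bound of Bostan, Jeannerod and Schost, which is slower because the rectangular exponent is convex.

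Your per-block extraction cost $\tilde O(\MM(k,k,n))$ is also unsupported. Reading $k$ columns off rank-$O(k)$ generators is itself a Toeplitz-like times $n\times k$ product, though the generator step already breaks the bound on its own. The same reversed reading led you to discard the direct route. In fact $\MM(k,n,n)\le (n/k)\MM(k,n,k)=(n/k)\MM(k,k,n)$, so the term you called ``too much'' is within budget.

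The missing idea is to process all $n$ columns of $\mB$ together, so that the dense work is done by rectangular products rather than by $n/k$ square ones. That is what \Cref{thm:toeplitz-general-matrix-product} does. \Cref{lem:toeplitz-via-polynomial} and \Cref{cor:no_modulo} reduce $\mA\mB$ to the products $u^\top\mV^{(s)}\mB^{(s)}$ over levels $s$. Each level is evaluated right-to-left when $n/2^s\ge\min(m,k)$ and left-to-right otherwise. Monotonicity in $s$ and convexity of the rectangular exponent then bound every level by $\tilde O(\MM(k,k,m)n/k+\MM(k,m,m)n/m)$.

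The corollary follows at once from that theorem:
\begin{itemize}
\item $m=n$ gives $\tilde O(\MM(k,k,n)n/k+\MM(k,n,n))=\tilde O(\MM(k,k,n)n/k)$, by the splitting inequality above.
\item $m=1$ gives $\tilde O(k^2\cdot n/k+kn)=\tilde O(kn)$.
\item $\mB\mA$ is the $\mB^\top\mA$ part of that theorem applied to $\mB^\top$.
\end{itemize}
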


\subsection{Dynamic Frobenius Form (\Cref{sec:frobenius,sec:dynamic})}
\label{sec:overview:dynamicfrobenius}

Now equipped with tools for Toeplitz-like matrices, let us see how to use them for maintaining the Frobenius form.
For simplicity, we will start by giving an algorithm that maintains the Frobenius form $\mA = \mK \mC \mK^{-1}$ explicitly in $\tilde O(n^2)$ time per entry update to $\mA$.
Afterward, we explain how to modify it so it runs in subquadratic time per update.

\subsubsection{Warm-up: Quadratic Time Updates}
\label{sec:overview:quadratic}
Let $\mA = \mK\mC\mK^{-1}$ be the Frobenius form of the original matrix $\mA$ computed during initialization.

Assume we perform an entry update $\mA+e_i e_j^\top$. Note that $e_i e_j^\top$ is all zero except for the entry at position $(i,j)$, so this update adds $1$ to $\mA_{i,j}$, which corresponds to an edge $(i,j)$ insertion to the adjacency matrix. Deletions are handled the same way via subtraction.
Then we have
\[
\mA+e_i e_j^\top = \mK\mC\mK^{-1} + e_i e_j^\top = \mK(\mC+\mK^{-1}e_i e_j^\top \mK)\mK^{-1} = \mK(\mC+uv^\top)\mK^{-1}.
\]
Here we defined $u := \mK^{-1}e_i$ to be the $i$-th column of $\mK^{-1}$ and $v^\top := e_j^\top \mK$ as the $j$-th row of $\mK$.
Since we know the original Frobenius form of $\mA$, we have $\mK,\mK^{-1}$ explicitly stored in memory and thus we can just read these vectors $u,v$ in $O(n)$ time.

Next, let us compute the Frobenius form of $\mC+uv^\top$, i.e., $\mC+uv^\top = \mK'\mC'\mK'^{-1}$. Then we have
\[
\mA+e_i e_j^\top = \mK(\mC+uv^\top)\mK^{-1} = \mK \mK' \mC' \mK'^{-1}\mK^{-1}. 
\]
In particular, computing and assigning $\mK \gets \mK\mK'$, $\mK^{-1}\gets\mK'^{-1}\mK^{-1}$ and $\mC \gets \mC'$ gives the new Frobenius form of $\mA+e_i e_j^\top = \mK \mC \mK^{-1}$.
Here, the computational bottlenecks are (i) computation of the Frobenius form $\mC+uv^\top = \mK'\mC'\mK'^{-1}$, and (ii) computation of the products $\mK \mK'$ and $\mK'^{-1}\mK^{-1}$.

Part (i) is given by \Cref{lem:companion_update} below. That lemma also states that $\mK',\mK'^{-1}$ are Toeplitz-like with displacement rank $O(1)$, so we can compute (ii) in $\tilde O(n^2)$ time via \Cref{cor:overview:product}, yielding the new Frobenius form for $\mA+e_i e_j^\top$ in $\tilde O(n^2)$ time.
\begin{restatable}[Frobenius Form of Companion Matrix + rank-1]{lemma}{lemcompanionupdate}\label{lem:companion_update}
Let $\mC = \mZ + ce_n^\top$ be a companion matrix 
and let $u,v\in\F^{n}$ be such that $\tilde\mC:=\mC+uv^\top$ is generic with $z\in\F^n$ cyclic (e.g., a uniformly random vector $z$ by \Cref{lem:sample_cyclic_vector}). 
Then we can compute w.h.p.~the Frobenius form
\[
\tilde \mC = \mK' \mC' \mK'^{-1}
\]
where matrices $\mK'$ and $\mK'^{-1}$ are Toeplitz-like with displacement ranks $\le 3$ and $\le 5$, respectively. In particular, $\mK',\mK'^{-1}$ are not explicitly computed but we return the generators $\mU,\mV,\mU',\mV' \in \F^{n\times 5}$ of $\mK'$ and $\mK'^{-1}$.
The time complexity is $\tilde O(n)$.
\end{restatable}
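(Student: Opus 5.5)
We take $\mK' := [z\,|\,\tilde\mC z\,|\,\dots\,|\,\tilde\mC^{n-1} z]$ for the cyclic vector $z$. By \Cref{def:frobenius} this gives $\tilde\mC=\mK'\mC'\mK'^{-1}$, where $\mC'=\mZ+c'e_n^\top$ and $c'$ holds the negated coefficients of $\det(x\mI-\tilde\mC)$. The plan has three parts: bound the displacement rank of $\mK'$ and write down its generators explicitly in $\tilde O(n)$ time, compute $c'$ in $\tilde O(n)$ time, and obtain generators of $\mK'^{-1}$ from \Cref{lem:generator_inverse} with $k=3$. That lemma costs $\tilde O(n)$ and yields displacement rank $\le 5$, which matches the statement.

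\emph{Displacement structure of $\mK'$.} The key identity is $\tilde\mC\mK' = \mK'\mZ + \tilde\mC^n z\, e_n^\top$: the $i$-th column of $\mK'\mZ$ is $\tilde\mC^{i}z$ for $i<n$, and it is $0$ for $i=n$. Write $\tilde\mC=\mZ+ce_n^\top+uv^\top$. Then
\[
\mZ\mK' - \mK'\mZ = \tilde\mC^n z\,e_n^\top - c\,(e_n^\top\mK') - u\,(v^\top\mK'),
\]
so the Sylvester displacement $\mZ\mK'-\mK'\mZ$ has rank at most $3$. To convert to $\nabla(\mK')=\mK'-\mZ\mK'\mZ^\top$, multiply the identity on the right by $\mZ^\top$ and use $\mZ\mZ^\top=\mI-e_1e_1^\top$:
\[
\mK'-\mZ\mK'\mZ^\top = \mK' e_1e_1^\top - (\mZ\mK'-\mK'\mZ)\mZ^\top = z e_1^\top + c\,(e_n^\top\mK'\mZ^\top) + u\,(v^\top\mK'\mZ^\top),
\]
where the $\tilde\mC^n z\,e_n^\top\mZ^\top$ term vanishes because $\mZ e_n=0$. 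This gives generators of rank $\le 3$ with left factors $z,c,u$. The right factors $e_1$, $\mZ(\mK'^\top e_n)$ and $\mZ(\mK'^\top v)$ require the vectors $(e_n^\top\tilde\mC^\ell z)_{\ell<n}$ and $(v^\top\tilde\mC^\ell z)_{\ell<n}$.

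\emph{Computing these projections in $\tilde O(n)$.} This is the main obstacle: evaluating $w^\top\tilde\mC^\ell z$ for all $\ell<n$ for two fixed vectors $w$. Each product $\tilde\mC x$ costs $O(n)$, so the naive approach takes $O(n^2)$. I would use the resolvent. In the ring of power series, $\sum_\ell w^\top\tilde\mC^\ell z\,x^\ell = w^\top(\mI-x\tilde\mC)^{-1}z$, and Sherman--Morrison gives
\[
w^\top(\mI-x\tilde\mC)^{-1}z = w^\top\mR z + \frac{x\,(w^\top\mR u)(v^\top\mR z)}{1-x\,v^\top\mR u}, \qquad \mR=(\mI-x\mC)^{-1}.
\]
Every bilinear form $a^\top(\mI-x\mC)^{-1}b$ is a power series whose first $n$ coefficients equal $a^\top\mC^\ell b$, and \Cref{lem:companion_powers} computes these in $\tilde O(n)$. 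Truncating everything modulo $x^n$ and inverting the power series with Newton iteration therefore produces all required values in $\tilde O(n)$. The same approach yields $c'$, since
\[
\det(x\mI-\tilde\mC) = \det(x\mI-\mC)\bigl(1 - v^\top(x\mI-\mC)^{-1}u\bigr),
\]
and $\det(x\mI-\mC)$ is read off from $c$. Alternatively, $c'$ can be recovered from Krylov sequences via Berlekamp--Massey. I expect the main technical care to be in correctly handling the truncations and the reversed polynomials.

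\emph{Probability.} Finally, I would apply \Cref{lem:generator_inverse} to the rank-$3$ generators of $\mK'$; it succeeds with high probability, and $\mK'$ is invertible because $z$ is cyclic. Both hold with high probability, so the overall success probability is high and the total time is $\tilde O(n)$.
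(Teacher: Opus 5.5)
Your proposal is correct and follows the paper's proof essentially step for step. It uses the same Krylov matrix $\mK'$ with left generators $[z\,|\,c\,|\,u]$ and right factors built from $e_n^\top\tilde\mC^\ell z$ and $v^\top\tilde\mC^\ell z$, the same Sherman--Morrison expansion modulo $x^n$ combined with \Cref{lem:companion_powers}, the matrix determinant lemma for $c'$, and \Cref{lem:generator_inverse} for $\mK'^{-1}$; the only cosmetic difference is that you derive $\nabla(\mK')$ from the Sylvester identity $\tilde\mC\mK'=\mK'\mZ+\tilde\mC^n z\,e_n^\top$ rather than the paper's column-by-column computation $k_j-\mZ k_{j-1}=(ce_n^\top+uv^\top)k_{j-1}$, and both give identical generators.
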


A full proof of \Cref{lem:companion_update} is given in \Cref{sec:frobenius}. Here we only explain why matrices $\mK',\mK'^{-1}$ have small displacement rank.

\paragraph{Displacement rank of $\mK',\mK'^{-1}$:}
Recall that matrix $\mK'=[z|(\mC+uv^\top)z|...|(\mC+uv^\top)^{n-1}z]$ for some vector $z$ (\Cref{def:frobenius}).
To argue that the matrix has low displacement rank, we need to show that $\nabla(\mK')$ has low rank.
For $\ell\ge 1$, consider the $(\ell+1)$-th column of $\nabla(\mK')=\mK'-\mZ\mK'\mZ^\top$ which, due to $\mZ^\top$ shifting columns one to the right, is
\begin{align}
    \underbrace{(\mC+uv^\top)^\ell z}_{\text{$(\ell+1)$-th column of $\mK'$}} - \underbrace{\mZ (\mC+uv^\top)^{\ell-1} z}_{\text{$(\ell+1)$-th column of $\mZ\mK'\mZ^\top$}}
=&~
(\mZ+ce_n^\top+uv^\top)(\mC+uv^\top)^{\ell-1}z - \mZ (\mC+uv^\top)^{\ell-1} z\notag\\
=&~
c \cdot (e_n^\top (\mC+uv^\top)^{\ell-1}z) + u \cdot (v^\top (\mC+uv^\top)^{\ell-1}z) \label{eq:overview:generator}
\end{align}
where we used that $\mC = \mZ + c e_n^\top$.
Here, $e_n^\top (\mC+uv^\top)^{\ell-1}z$ and $v^\top (\mC+uv^\top)^{\ell-1}z$ are scalars. In particular, \eqref{eq:overview:generator} is a linear combination of vector $c$ and vector $u$.
Thus, every column of $\nabla(\mK')$ is a linear combination of 3 vectors: $c$, $u$, and $z$ (the first column is equal to $z$). Hence, the displacement rank of $\mK'$ is at most 3. The inverse $\mK'^{-1}$ has displacement rank at most 5 by \Cref{lem:generator_inverse}.

\subsubsection{Subquadratic Time Updates}

Observe that in the Frobenius update procedure we outlined above (\Cref{sec:overview:quadratic}), the only quadratic time bottleneck is due to the products $\mK\mK'$ and $\mK'^{-1}\mK^{-1}$.
These products take $\tilde O(n^2)$ time, but all other operations (e.g., computing the new companion matrix $\mC'$) take only $\tilde O(n)$ time.

So instead of computing these expensive products explicitly, we introduce an additional matrix~$\mT$ accumulating the change of basis and maintain the Frobenius form in the form
\begin{align}
\mA=(\mK_\init\mT)~\mC~(\mT^{-1}\mK_\init^{-1}) \label{eq:overview:form}
\end{align}
and $\mK_\init$ is the matrix constructed during preprocessing/initialization.
As we update matrix $\mA$, the companion matrix $\mC$ and transform $\mT$ are updated. We show that after $k$ updates to $\mA$, matrix $\mT$ has displacement rank at most $O(k)$, and we store $\mT$ only via its generators.

\paragraph{Update Procedure}
Assume we have another entry update to matrix $\mA$. Then, as before,
\[
\mA+ e_i e_j^\top 
= \mK_\init\mT (\mC+\mT^{-1}\mK^{-1}_\init e_i e_j^\top \mK_\init\mT) \mT^{-1}\mK_\init^{-1}.
\]
The products $u:=\mT^{-1}\mK^{-1}_\init e_i$ and $v^\top:=e_j^\top \mK_\init\mT$ take $\tilde O(nk)$ time, by first reading a column of $\mK^{-1}_\init$ (and row of $\mK_\init$), then computing the matrix-vector-product with $\mT^{-1}$ (and $\mT$).
The latter takes $\tilde O(nk)$ time by \Cref{cor:overview:product}, as $\mT$ and $\mT^{-1}$ have displacement rank at most $O(k)$. 

Once we have these $u,v$, we compute the Frobenius form of $\mC+uv^\top = \mK'\mC'\mK'^{-1}$ as before in $\tO(n)$ time (\Cref{lem:companion_update}).
At last, we maintain the invariant (i.e., Frobenius form) as in \eqref{eq:overview:form}
\[
\mA + e_i e_j^\top = \mK_\init \mT \mK' \mC' \mK'^{-1} \mT^{-1}\mK_\init^{-1}
=
(\mK_\init \mT') \mC' (\mT'^{-1} \mK_\init^{-1}),
\]
where generators of the matrices $\mT'= \mT\mK'$ and $\mT'^{-1}=\mK'^{-1}\mT^{-1}$ can both be computed in $\tilde O(nk)$ operations via \Cref{thm:toplitz-topelitz-product}, because $\mT,\mT^{-1}$ are of displacement rank $O(k)$ and $\mK',\mK'^{-1}$ are of displacement rank $O(1)$.
The displacement rank of $\mT'$ grows by $O(1)$ by \Cref{thm:toplitz-topelitz-product}.

\paragraph{Reset}
As the displacement rank of $\mT$ grows by $O(1)$ with each update to $\mA$, we must eventually reset our data structure after $k$ updates, so that the displacement rank of $\mT$ is bounded by $O(k)$.
Instead of restarting from scratch to get a new $\mK_\init$, we compute $\mK_\init \gets \mK_\init \mT$ and \linebreak $\mK_\init^{-1} \gets \mT^{-1}\mK_\init^{-1}$ and reset $\mT$ to an identity matrix. These products take $O(\MM(k,k,n)n/k)$ time by \Cref{cor:overview:product}.
As this reset happens every $k$ updates, we obtain amortized update time
\[
\tilde O(kn + \MM(k,k,n)n/k^2).
\]
This can be made worst-case via standard techniques (e.g., \cite{AbrahamCK17,BrandNS19}).

\paragraph{Query}
Let us quickly state the query complexity of our dynamic Frobenius algorithm.
Via the maintained Frobenius form, we can answer queries that for any $s,t$ return $(\mA^\ell)_{s,t}$ for all ${\ell=1,...,n-1}$. (This yields the $st$-distance for adjacency matrix $\mA$.)
We need to compute
\[
(\mA^\ell)_{s,t} = e_s^\top (\mK_\init \mT \mC^\ell \mT^{-1} \mK_\init^{-1}) e_t
=
v^\top \mC^\ell u
\]
for $v^\top := e_s^\top\mK_\init \mT$ and $u := \mT^{-1} \mK_\init^{-1} e_t$. Here $u,v$ are computable in $\tilde O(nk)$ time as we just read a row/column of $\mK_\init,\mK_\init^{-1}$, and then multiply the corresponding vector with a displacement rank $O(k)$ matrix $\mT$ or $\mT^{-1}$, respectively (\Cref{cor:overview:product}).
The values $v^\top \mC^\ell u$ for all $\ell$ can be computed in $\tO(n)$ time via \Cref{lem:companion_powers}. So overall, processing a query takes $\tilde O(nk)$ time.

\section{Toeplitz-like Matrices}
\label{sec:toeplitz-like}

The main result of this section is the complexity of multiplying two Toeplitz-like matrices.
The concept of Toeplitz-like matrices and generators has been defined in \Cref{def:toeplitzlike}.

\thmtoeplitzproduct

A key ingredient for proving \Cref{thm:toplitz-topelitz-product} is the product of a Toeplitz-like and a general matrix.

\begin{theorem}\label{thm:toeplitz-general-matrix-product}
    Given any matrix $\mB \in \F^{n\times m}$ and the generators of a displacement rank-$k$ Toeplitz-like matrix $\mA \in \F^{n\times n}$, we can compute $\mA\mB$ and $\mB^\top \mA$ in $\tilde O(\MM(k,k,m)n/k + \MM(k,m,m)n/m) \le \tilde O(k^{\omega-2}mn+m^{\omega-2}kn)$.
\end{theorem}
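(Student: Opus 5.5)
We will start from the standard explicit reconstruction of a Toeplitz-like matrix from its generators. If $\nabla(\mA)=\mA-\mZ\mA\mZ^\top=\mU\mV^\top$ with $\mU=[u_1|\dots|u_k]$ and $\mV=[v_1|\dots|v_k]$, then unrolling $\mA=\mU\mV^\top+\mZ\mA\mZ^\top$ (which terminates because $\mZ$ is nilpotent) gives
\[
\mA=\sum_{i=0}^{n-1}\mZ^i\mU\mV^\top(\mZ^\top)^i=\sum_{j=1}^k L(u_j)\,L(v_j)^\top .
\]
Here $L(x)$ is the lower triangular Toeplitz matrix with first column $x$. This is the classical $\Sigma LU$ representation. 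Each $L(x)$ acts on a vector as a truncated polynomial product, costing $\tO(n)$. The naive algorithm for $\mA\mB$ is therefore $\tO(kmn)$: for each of the $m$ columns of $\mB$, apply $L(v_j)^\top$, then $L(u_j)$, for every $j$. The whole task is to batch these $km$ convolutions with fast matrix multiplication. Products $\mB^\top\mA$ reduce to the same problem, since $(\mB^\top\mA)^\top=\mA^\top\mB$ and $\mA^\top=\sum_j L(v_j)L(u_j)^\top$ is a sum of the same shape (reversing coordinates turns $L(\cdot)^\top$ into $L(\cdot)$).

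To batch the convolutions, we will cut all vectors into $n/m$ blocks of length $m$, or $n/k$ blocks of length $k$ for the other stage. This turns each $n\times n$ triangular Toeplitz matrix into a block lower-triangular block-Toeplitz matrix, whose blocks are $m\times m$ Toeplitz matrices. Equivalently, we view a length-$n$ vector as a polynomial in $y=x^m$ whose coefficients are polynomials of degree $<m$ in $x$. Then $L(v_j)^\top\mB$ for all $j$ becomes: multiply a $k$-tuple of polynomials (the $v_j$) by an $m$-tuple (the columns of $\mB$), giving all $km$ pairwise products. Using Kronecker substitution, or evaluation at $\tO(n/m)$ points in the outer variable $y$ (passing to an extension field if $\F$ is too small), this becomes $\tO(n/m)$ independent problems. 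Each one forms all pairwise products between $k$ and $m$ polynomials of degree $<2m$ in $x$. Evaluating the inner variable at $O(m)$ points makes each such problem a rank-one-per-point outer product, so we would rather organise it as a matrix product.

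The cleaner route is to group the computation as $\mA\mB=\sum_j L(u_j)\bigl(L(v_j)^\top\mB\bigr)$, i.e. $\mA\mB=[L(u_1)|\dots|L(u_k)]\cdot\mW$, where $\mW$ stacks the $k$ blocks $\mW_j=L(v_j)^\top\mB$. We split the work into two stages.

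Stage one computes $\mW$, a $kn\times m$ matrix of total size $knm$. We will compute $\mW$ only implicitly, in the polynomial/evaluation domain. At each of the $O(n/m)$ outer evaluation points, and each of the $O(m)$ inner points, $\mW$ is the outer product (values of $v_j$) $\times$ (values of columns of $\mB$). We do not form this outer product. Instead we feed it directly into the second stage, contracting over $j$ there: the second stage needs $\sum_j \hat u_j\hat v_j\hat b_\ell$, which equals $\bigl(\sum_j\hat u_j\hat v_j\bigr)\hat b_\ell$ pointwise. This pointwise contraction is too naive because the triangular truncation is not a pointwise product; the $\mod x^n$ reversal mixes the two convolutions. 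We will handle this by blocking. Writing the correlation $L(v)^\top b$ blockwise with block size $k$ produces, for each block-diagonal offset, products of $k\times k$ Toeplitz blocks (from the $u_j,v_j$) with $k\times m$ slices of $\mB$. Summing over $j$ then gives expressions of the form $\sum_j \mT(u_j)\mT(v_j)^\top \mB_{\text{slice}}$.

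After evaluation and interpolation these reduce to $\tO(n/k)$ products of shape $k\times k$ by $k\times m$, costing $\MM(k,k,m)n/k$ in total, together with $\tO(n/m)$ products of shape $k\times m$ by $m\times m$ coming from the $m$-blocking of $\mB$, costing $\MM(k,m,m)n/m$ in total. This matches the claimed bound. The final inequality then follows from $\MM(a,b,b)\le (a/b)\,b^\omega$ for $a\ge b$ (and its symmetric variant), with care over which of $k,m$ is larger.

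The main obstacle is making this blocking rigorous. Correlations ($L(v)^\top$) and convolutions ($L(u)$) have opposite truncation directions, and the cross terms between blocks must be organised so that each is a genuine rectangular matrix product of the stated shape rather than $km$ separate convolutions. I would first try the Bostan–Jeannerod–Schost approach, which reduces the problem to polynomial matrix multiplication of a $k\times m$ polynomial matrix by an $m\times m$ one of degree $n/m$, and verify the resulting counts.
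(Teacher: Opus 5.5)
There is a genuine gap. Your setup is correct and matches the paper. The $\Sigma LU$ expansion $\mA=\sum_j L(u_j)L(v_j)^\top$ is \Cref{lem:toeplitz-via-polynomial}, and the transpose reduction for $\mB^\top\mA$ is the same. You also correctly identify the obstacle: the truncation between the correlation $L(v_j)^\top$ and the convolution $L(u_j)$ forbids contracting over $j$ first, as in $\bigl(\sum_j u_jv_j\bigr)b_\ell$.

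But the proposal stops exactly where the proof has to begin. ``We will handle this by blocking'' is not an argument, and one level of blocking cannot work. Write $L(u_j)$ and $L(v_j)^\top$ as block-triangular block-Toeplitz matrices with $b\times b$ blocks $U^{(j)}_d$ and $(V^{(j)}_d)^\top$. Then block row $p$ of $\mA\mB$ equals $\sum_j\sum_{r\le p}U^{(j)}_{p-r}W^{(j)}_r$ with $W^{(j)}_r=\sum_{q\ge r}(V^{(j)}_{q-r})^\top\mB_q$. This is the same correlation--truncation--convolution pattern one level up. Contracting over $j$ first instead requires all $(n/b)^2$ blocks $\sum_jU^{(j)}_a(V^{(j)}_c)^\top$.

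So your count of $n/k$ products of shape $k\times k\times m$ plus $n/m$ products of shape $k\times m\times m$ is read off from the target bound, not derived. Deferring to Bostan--Jeannerod--Schost does not close this either. Used as a black box with blocking, it gives only $\tO(k^{\omega-2}nm)$ for $m\ge k$, which is weaker than the stated $\MM$-bound. Moreover, the products that actually arise are polynomial matrix products at every dyadic scale, not one product of degree $n/m$.

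The missing ingredient is a multi-scale removal of the modulo, \Cref{lem:remove-modulo}: $pq\bmod x^n=\sum_{s=0}^{\log n}x^{n-2^s}\sum_{\ell}p^{(\ell,2^s)}q^{(n/2^s-1-\ell,2^s)}$. Applied entrywise, it rewrites $u^\top(vb^\top\bmod x^n)$ as $\sum_s x^{n-2^s}u^\top\mV^{(s)}\mB^{(s)}$. This is a sum of $O(\log n)$ \emph{untruncated} products, with $\mV^{(s)}\in\F[x]^{k\times n/2^s}$ and $\mB^{(s)}\in\F[x]^{n/2^s\times m}$ of degree $\le 2^s$. Splitting $u^\top=(1,x^{2^s},\dots,x^{n-2^s})\mN$ with $\mN\in\F[x]^{n/2^s\times k}$ of degree $2^s$ balances the degrees.

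Now associativity is available, and the bound comes from choosing the multiplication order per scale:
\begin{itemize}
\item Right-to-left costs $\tO(\MM(n/2^s,k,m)2^s)$, which is nondecreasing in $s$. Using it while $n/2^s\ge\min(m,k)$ gives $\tO(\MM(\min(m,k),k,m)\,n/\min(m,k))$.
\item Left-to-right costs $\tO(\MM(n/2^s,k,n/2^s)2^s+\MM(n/2^s,n/2^s,m)2^s)$. It is used on the remaining coarse scales $n/2^s<\min(m,k)$. There, convexity of the rectangular exponent $\omega(\cdot,\cdot,\cdot)$ bounds it by $\tO(\MM(k,k,m)n/k)$ if $k\le m$, and by $\tO(\MM(k,m,m)n/m)$ if $m\le k$.
\end{itemize}
Without this decomposition and the per-scale balancing, your outline does not establish the theorem.
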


An algorithm for a product as in \Cref{thm:toeplitz-general-matrix-product} has been previously given by \cite{BostanJS08}, who showed a displacement rank-$k$ Toeplitz-like matrix can be multiplied by an $n\times k$ matrix in $\tO(k^{\omega-1}n)$ operations.
Via blocking, this would imply a $\tO(k^{\omega-2}nm)$ bound (for $m\ge k$) when multiplying a displacement rank-$k$ matrix by an $n\times m$ matrix.
Since the exponent of rectangular matrix multiplication is convex, their algorithm is slower than \Cref{thm:toeplitz-general-matrix-product} for $m \gg k$.
We obtain \Cref{thm:toeplitz-general-matrix-product} by reanalyzing their algorithm using rectangular matrix multiplication and also analyze the case $k < m$.

\subsection{Toeplitz-like Matrix Products}

Toeplitz matrices can be multiplied efficiently using fast Fourier transform, i.e., polynomial multiplication (see, e.g., \cite{golub2013matrix}). 
We state the following two lemmas because we require the precise details about the indexing.

\begin{lemma}\label{lem:upper_toeplitz}
    Let $\mU$ be an upper triangular Toeplitz-matrix for vector $u\in\F^n$, i.e.,
    \[
    \mU = \begin{bmatrix}
        u_1 & u_2 & ... & u_n \\
        & u_1 & ... & u_{n-1}\\
        &&\ddots\\
        &&&u_1
    \end{bmatrix}.
    \]
    Let $v\in\F^n$ be some vector.
    Define polynomials $u(x) = \sum_{i=1}^{n} x^{i-1} u_{i}$ and $v(x) = \sum_{i=1}^{n} x^{n-i} v_{i}$.
    Then \[u(x)\cdot v(x) \mod x^n = \sum_{i=1}^{n} x^{n-i} (\mU v)_{i}.\]
\end{lemma}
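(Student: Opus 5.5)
The statement is a direct coefficient comparison, so the plan is to expand both sides as polynomials in $x$ and match the coefficient of each monomial $x^{n-i}$ for $i=1,\dots,n$.

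First, write down the entries of $\mU v$. From the displayed shape, $\mU_{i,j}=u_{j-i+1}$ for $j\ge i$ and $\mU_{i,j}=0$ otherwise. Hence
\[
(\mU v)_i=\sum_{j=i}^{n} u_{j-i+1}v_j .
\]

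Second, expand the product $u(x)v(x)=\sum_{a=1}^n\sum_{j=1}^n u_a v_j\,x^{a-1+n-j}$. The exponent $a-1+n-j$ ranges over $0,\dots,2n-2$.

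Third, reduce modulo $x^n$, which discards every term with $a-1+n-j\ge n$, i.e.\ with $a-1\ge j$. Fix a target exponent $n-i$ with $1\le i\le n$. The condition $a-1+n-j=n-i$ means $a=j-i+1$. The constraint $1\le a\le n$ becomes $j\ge i$, and $j\le n$ holds automatically. So the coefficient of $x^{n-i}$ in $u(x)v(x)\bmod x^n$ is $\sum_{j=i}^n u_{j-i+1}v_j=(\mU v)_i$.

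Finally, every exponent below $n$ has the form $n-i$ for exactly one $i\in\{1,\dots,n\}$, so the two polynomials agree coefficientwise, which proves the identity.

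There is no real obstacle here. The only care needed is the index bookkeeping: checking that reducing modulo $x^n$ removes exactly the terms with $j<i$, i.e.\ the strictly-lower-triangular contributions, which is what the upper triangularity of $\mU$ requires.
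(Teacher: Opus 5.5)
Your proof is correct and is essentially the paper's: expand $u(x)v(x)=\sum_{a,j}u_a v_j\,x^{a-1+n-j}$ and read off the coefficient of $x^{n-i}$. This forces $a=j-i+1$ with $j\ge i$, giving $\sum_{j=i}^n u_{j-i+1}v_j=(\mU v)_i$. One quibble with your closing remark: the terms removed by reduction modulo $x^n$ are those with $a>j$ (exponent at least $n$), while the absence of contributions with $j<i$ comes from $a\ge 1$, i.e., from $u(x)$ having no negative powers; this does not affect the argument itself.
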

We remark that polynomial $v(x)$ is defined in reverse order compared to vector $v$, and that the output polynomial is also in reverse order of vector $\mU v$.
\begin{proof}
    We have $u(x)v(x) = \sum_{a,l} x^{a-1+n-l} u_{a} v_{l}$. Thus, the coefficient of $x^{n-i}$ collects all terms with $a-1+n-l=n-i$, i.e., it equals
    $
    \sum_{l=i}^n u_{l-i+1} v_{l} = (\mU v)_i
    $.
\end{proof}

\begin{lemma}\label{lem:lower_toeplitz}
Let $\mU$ be a lower triangular Toeplitz-matrix for vector $u\in\F^n$, i.e.,
\[
\mU = \begin{bmatrix}
        u_1 &  \\
        u_2 & u_1 & \\
        \vdots &&\ddots\\
        u_n & u_{n-1}& ... &u_1
    \end{bmatrix}.
\]
Let $v\in\F^n$ be some vector.
Define $ u(x)=\sum_{i=1}^{n} x^{n-i} u_{i} $,
$ v(x)=\sum_{i=1}^{n} x^{n-i} v_{i}$. Then, for $1\le i\le n$,
\[
  (\mU v)_i \;=\; \left[x^{\,2n-1-i}\right]( u(x) v(x)),\]
where $\left[x^{2n-1-i}\right](\cdot)$ means to read the coefficient of $x^{2n-1-i}$.
\end{lemma}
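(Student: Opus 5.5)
We prove \Cref{lem:lower_toeplitz} by expanding the product and comparing coefficients, in the same way as \Cref{lem:upper_toeplitz}. First we write down the entries of $\mU$ explicitly. Since $\mU$ is lower triangular Toeplitz with first column $u$, we have $\mU_{i,l}=u_{i-l+1}$ for $l\le i$ and $\mU_{i,l}=0$ for $l>i$. Hence
\[
(\mU v)_i=\sum_{l=1}^{i} u_{i-l+1}v_l .
\]

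Next we expand the polynomial product using the reversed encodings:
\[
u(x)v(x)=\sum_{a=1}^n\sum_{l=1}^n x^{2n-a-l}\,u_a v_l .
\]
The coefficient of $x^{2n-1-i}$ collects exactly the pairs $(a,l)$ with $a+l=i+1$, that is, $a=i-l+1$. Therefore
\[
\left[x^{2n-1-i}\right](u(x)v(x))=\sum_{l} u_{i-l+1}v_l ,
\]
where $l$ ranges over all indices with both $1\le l\le n$ and $1\le i-l+1\le n$.

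The only point needing care is checking that this range of $l$ is exactly $1\le l\le i$, so that the sum matches $(\mU v)_i$. Since $1\le i\le n$, the condition $i-l+1\ge 1$ gives $l\le i\le n$. The condition $i-l+1\le n$ is automatic because $l\ge 1$ and $i\le n$. So the range is $1\le l\le i$, and the coefficient equals $(\mU v)_i$ as claimed.

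Finally, we note that the exponent $2n-1-i$ lies in $[n-1,2n-2]$, so it is within the degree bound of the product. This means the coefficient can be read off directly, with no reduction modulo $x^n$; we only use the top $n$ coefficients of the full product. Computing $u(x)v(x)$ with one fast polynomial multiplication therefore yields all entries of $\mU v$ at once.
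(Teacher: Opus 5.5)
Your proposal is correct and follows the paper's proof: you expand $u(x)v(x)=\sum_{a,l}u_a v_l x^{2n-a-l}$, pick out the pairs with $a+l=i+1$, and identify the result with $\sum_{l=1}^{i}u_{i+1-l}v_l=(\mU v)_i$. Your explicit check that the admissible range is exactly $1\le l\le i$ fills in a step the paper leaves implicit.
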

We remark that polynomials $u(x),v(x)$ are defined in reverse order compared to vectors $u,v$, and that the output polynomial is also in reverse order of vector $\mU v$.
\begin{proof}
$u(x) v(x)=\sum_{a,l}u_{a} v_{l}\,x^{2n-a-l}$. So the coefficient of
$x^{2n-i-1}$ collects all terms with $a+l=i+1$, i.e.,  it equals $\sum_{l=1}^{i} u_{i+1-l}v_{l}=(\mU v)_i$.
\end{proof}

Just like Toeplitz matrices, Toeplitz-like matrices can also be multiplied efficiently via fast Fourier transform.
The following \Cref{lem:toeplitz-via-polynomial} states that connection formally.  
It shows how, when we are given the generators $\mU,\mV$ of some Toeplitz-like matrix $\mA$, the product $\mA\mB$ can be computed via polynomial multiplication.

\begin{lemma}\label{lem:toeplitz-via-polynomial}
Let $\mB\in\F^{n\times m}$, and let $\mU,\mV\in\F^{n\times k}$ be generators of a
Toeplitz-like matrix $\mathbf{A}\in\F^{n\times n}$, i.e.,
$\mathbf{A}-\mathbf{Z}\mathbf{A}\mathbf{Z}^\top=\mU\mV^\top$, with $\mathbf{Z}$ the down-shift matrix.
Define $u(x)\in\F[x]^k$, $v(x)\in\F[x]^k$, $b(x)\in\F[x]^m$ entrywise by
\[
  u_{i}(x)=\sum_{l=1}^{n}U_{l,i}\,x^{n-l},\qquad
  v_{i}(x)=\sum_{l=1}^{n}V_{l,i}\,x^{l-1},\qquad
  b_{j}(x)=\sum_{l=1}^{n}B_{l,j}\,x^{n-l},
\]
i.e., $u_i(x),v_i(x),b_j(x)$ stem from interpreting the $i$th columns of $\mU,\mV$ and the $j$th column of $\mB$ as polynomials, though the coefficient order is reversed for $u$ and $b$.
Then
\[
  (\mathbf{A}\mB)_{i,j}
  \;=\;\Big[x^{\,2n-1-i}\Big]\;
  \Big(u(x)^\top\big(v(x)\,b(x)^\top \bmod x^n\big)\Big)_j
  \qquad (1\le i\le n,\ 1\le j\le m),
\]
where $\left[x^{2n-1-i}\right](\cdot)$ means to read the coefficient of $x^{2n-1-i}$, and the $\mod x^n$ is applied entrywise to the $k\times m$ polynomial matrix $v(x)b(x)^\top$.
\end{lemma}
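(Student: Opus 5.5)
First I unroll the displacement equation to get an explicit formula for $\mathbf{A}$ in terms of the generators, then I translate each piece into polynomial arithmetic using \Cref{lem:upper_toeplitz} and \Cref{lem:lower_toeplitz}.

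Since $\mZ$ is nilpotent with $\mZ^n=0$, iterating $\mathbf{A}=\mZ\mathbf{A}\mZ^\top+\mU\mV^\top$ gives
\[
\mathbf{A}=\sum_{t=0}^{n-1}\mZ^t\mU\mV^\top(\mZ^\top)^t=\sum_{r=1}^k\sum_{t=0}^{n-1}\mZ^t U_{:,r}V_{:,r}^\top(\mZ^\top)^t .
\]
Writing $\mL(x)$ for the lower triangular Toeplitz matrix with first column $x$, one checks entrywise that $\sum_t \mZ^t a b^\top(\mZ^\top)^t=\mL(a)\mL(b)^\top$, because both sides have $(i,j)$ entry $\sum_{t} a_{i-t}b_{j-t}$. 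Hence
\[
\mathbf{A}=\sum_{r=1}^k \mL(U_{:,r})\,\mL(V_{:,r})^\top ,
\]
and $\mL(V_{:,r})^\top$ is upper triangular Toeplitz with first row $V_{:,r}$.

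Next I fix a column $j$ of $\mB$ and a generator index $r$. The vector $w:=\mL(V_{:,r})^\top B_{:,j}$ is handled by \Cref{lem:upper_toeplitz}, with its $u(x)$ equal to $v_r(x)=\sum_l V_{l,r}x^{l-1}$ and its $v(x)$ equal to $b_j(x)=\sum_l B_{l,j}x^{n-l}$; both match the conventions in the statement. That lemma gives $v_r(x)b_j(x)\bmod x^n=\sum_i x^{n-i}w_i$, i.e.\ the reversed-order polynomial of $w$, which is exactly the encoding \Cref{lem:lower_toeplitz} expects for its second argument. Applying \Cref{lem:lower_toeplitz} to $\mL(U_{:,r})w$ with first argument $u_r(x)=\sum_l U_{l,r}x^{n-l}$ yields
\[
(\mL(U_{:,r})w)_i=[x^{2n-1-i}]\bigl(u_r(x)\,(v_r(x)b_j(x)\bmod x^n)\bigr).
\]
Summing over $r$ and using linearity of coefficient extraction gives $[x^{2n-1-i}]\sum_r u_r(x)(v_r(x)b_j(x)\bmod x^n)$. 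This sum is precisely the $j$-th entry of $u(x)^\top(v(x)b(x)^\top\bmod x^n)$, which proves the claim.

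The only step I expect to need care is the identity $\sum_t\mZ^t ab^\top(\mZ^\top)^t=\mL(a)\mL(b)^\top$, specifically getting the index ranges right when shifts fall off the bottom or right edge. The rest is matching coefficient-order conventions between the two earlier lemmas, and those already agree with the definitions in the statement.
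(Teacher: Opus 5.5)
Your proposal is correct and follows essentially the same route as the paper: unroll $\mathbf{A}=\mathbf{Z}\mathbf{A}\mathbf{Z}^\top+\mathbf{U}\mathbf{V}^\top$ into $\sum_r (\text{lower Toeplitz of } U_{:,r})(\text{upper Toeplitz of } V_{:,r})$, then apply \Cref{lem:upper_toeplitz} and \Cref{lem:lower_toeplitz} in sequence to each column of $\mathbf{B}$ and sum over $r$, with the coefficient-order conventions matching as you describe. Your explicit entrywise check of $\sum_t \mathbf{Z}^t ab^\top(\mathbf{Z}^\top)^t=\mathbf{L}(a)\mathbf{L}(b)^\top$ justifies in detail a step that the paper only states briefly, namely that $\mathbf{Z}^\ell$ shifts vectors down by $\ell$.
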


\begin{proof}
    We have $\nabla(\mA) = \mA - \mZ \mA \mZ^\top = \mU\mV^\top$ and thus $\mA = \mU\mV^\top + \mZ \mA \mZ^\top$. Repeatedly substituting~$\mA$ results in
    \begin{align*}
    \mA 
    =&~ 
    \mU\mV^\top + \mZ \mA \mZ^\top
    =
    \mU\mV^\top + \mZ \mU\mV^\top \mZ^\top + \mZ^2 \mA (\mZ^\top)^2
    =
    ...
    =
    \left(\sum_{\ell=0}^{n-1} \mZ^\ell \mU \mV^\top (\mZ^\top)^\ell\right) +\mZ^n \mA (\mZ^\top)^n \\
    =&~
    \sum_{\ell=0}^{n-1} \mZ^\ell \mU \mV^\top (\mZ^\top)^\ell
    \end{align*}
    because $\mZ^n = 0$.
    Interpreting $\mU\mV^\top = \sum_{i=1}^k \mU e_i (\mV e_i)^\top$ as a sum of $k$ outer products, we can write
    \begin{align*}
    \mA
    =&~
    \sum_{\ell=0}^{n-1} \mZ^\ell \left(\sum_{i=1}^k \mU e_i ( \mV e_i)^\top\right) (\mZ^\top)^\ell
    =
    \sum_{i=1}^k \sum_{\ell=0}^{n-1} (\mZ^\ell \mU e_i) (\mZ^\ell \mV e_i)^\top \\
    =&~
    \sum_{i=1}^k
    \begin{bmatrix}
        \mU_{1,i} \\
        \mU_{2,i} & \mU_{1,i} \\
        \vdots&&\ddots\\
        \mU_{n,i} & \mU_{n-1,i} & ... & \mU_{1,i}
    \end{bmatrix}
    \begin{bmatrix}
        \mV_{1,i} & \mV_{2,i} & ... &\mV_{n,i} \\
        & \mV_{1,i} & ... &\mV_{n-1,i}\\
        &&\ddots\\
        &&&\mV_{1,i}
    \end{bmatrix}.
    \end{align*}
    The last step uses that $\mZ^\ell$ shifts vectors down by $\ell$.
    Now consider multiplying vector $\mB e_j$ from the right, which requires computing for each $i=1,...,k$ the following matrix vector products
    \[
    y :=
    \begin{bmatrix}
        \mV_{1,i} & \mV_{2,i} & ... &\mV_{n,i} \\
        & \mV_{1,i} & ... &\mV_{n-1,i}\\
        &&\ddots\\
        &&&\mV_{1,i}
    \end{bmatrix}
    \mB e_j
    ~\text{ and }~
    z:=\begin{bmatrix}
        \mU_{1,i} \\
        \mU_{2,i} & \mU_{1,i} \\
        \vdots&&\ddots\\
        \mU_{n,i} & \mU_{n-1,i} & ... & \mU_{1,i}
    \end{bmatrix} y.
    \]
    These products $y$ and $z$ can be performed via \Cref{lem:upper_toeplitz} (upper triangular) and \Cref{lem:lower_toeplitz} (lower triangular), respectively.
    Since polynomial $b_j(x)$ represents column $\mB e_j$ in reversed order, by \Cref{lem:upper_toeplitz} the coefficients of $x^0,\ldots,x^{n-1}$ in the product $v_i(x)b_j(x)\mod x^n$ correspond precisely to the vector $y$ in reverse order.
    Since we defined polynomial $u_i(x)$ in reverse order and vector $y$ is given as a polynomial in reverse order, this matches exactly the orders needed to compute $z$ via \Cref{lem:lower_toeplitz}.
    Thus the product $z$ is given by $u_i(x)(v_i(x)b_j(x)\mod x^n)$, i.e., for all $1\le \ell \le n$ we have
    \[
    \left(\begin{bmatrix}
        \mU_{1,i} \\
        \mU_{2,i} & \mU_{1,i} \\
        \vdots&&\ddots\\
        \mU_{n,i} & \mU_{n-1,i} & ... & \mU_{1,i}
    \end{bmatrix}
    \begin{bmatrix}
        \mV_{1,i} & \mV_{2,i} & ... &\mV_{n,i} \\
        & \mV_{1,i} & ... &\mV_{n-1,i}\\
        &&\ddots\\
        &&&\mV_{1,i}
    \end{bmatrix}
    \mB e_j\right)_\ell
    =
    [x^{2n-1-\ell}](u_i(x)(v_i(x)b_j(x)\mod x^n)).
    \]
    Computing the $j$-th column of $\mA\mB$ thus reduces to computing $\sum_{i=1}^k u_i(x)(v_i(x)b_j(x) \mod x^n) = u(x)^\top (v(x)b_j(x) \mod x^n)$.
    Therefore, the full product $\mA\mB$ reduces to computing\linebreak $u(x)^\top (v(x)b(x)^\top \mod x^n)$.
\end{proof}

The main difficulty in computing $u(x)^\top (v(x)b(x)^\top \mod x^n)$ in~\Cref{lem:toeplitz-via-polynomial} is that the modulo operation enforces multiplication from right to left.
To see why that is an issue, consider the case where there is no modulo. Then we could use associativity and compute $(u(x)^\top v(x)) b(x)^\top$ from left to right.

\emph{Left-to-right (if there were no modulo):}
The first product $u(x)^\top v(x)$ is a dot-product and computable in $\tilde O(kn)$ time because $u(x),v(x)\in\F[x]^k$ have degree $n$.
Then multiplying by $b(x)^\top\in\F[x]^m$ takes another $\tilde O(mn)$ time.

\emph{Right-to-left (enforced by modulo):}
Because of the modulo, the multiplication order must be right to left: $u(x)^\top (v(x) b(x)^\top \mod x^n)$, where $v(x) b(x)^\top$ results in a $k\times m$ polynomial matrix of degree at most $2n$. Thus multiplying with $u(x)^\top$ then takes $\tilde O(kmn) \gg \tilde O(kn+km)$ time.

The following \Cref{lem:remove-modulo} allows us to remove the modulo, allowing us to use associativity, at the cost of needing several smaller polynomial products.

\begin{lemma}[{\cite[Lemma 12]{BostanJS08}}]\label{lem:remove-modulo}
    Given polynomials $p,q \in \F[x]$ with $p = p_0+p_1x+p_2x^2+...$, define $p^{(\ell,k)}$ as follows:
    \[
    p^{(\ell,1)} = p_\ell \quad \text{and} \quad p^{(\ell,k)} = \sum_{i=\ell k}^{\ell k+k/2-1}p_ix^{i-\ell k} \text{ for $k\ge 2$}
    \]
    and likewise define $q^{(\ell,k)}$.
    Then for $n$ a power of 2,
    \[
    pq \mod x^n = \sum_{k=0}^{\log n} x^{n-2^k} \sum_{\ell=0}^{n/2^k-1} p^{(\ell,2^k)}q^{(n/2^k-1-\ell,2^k)}.
    \]
\end{lemma}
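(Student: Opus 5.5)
The plan is to expand both sides coefficientwise and reduce the identity to a combinatorial statement: the index pairs $(i,j)$ counted on the right-hand side are exactly the pairs with $i+j<n$, each counted exactly once. The left side is $\sum_{i,j\ge 0,\, i+j<n} p_i q_j x^{i+j}$.

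\textbf{Step 1 (one summand as a rectangle).} Fix $K=2^t$ and set $m:=n/K-1-\ell$, so that $\ell+m=n/K-1$.
\begin{itemize}
\item For $K\ge 2$, $x^{n-K}p^{(\ell,K)}q^{(m,K)}$ is a sum of terms $p_iq_jx^{\,i-\ell K + j - mK + n-K}$. Since $(\ell+m)K=n-K$, the exponent is exactly $i+j$.
\item Hence this summand equals $\sum p_iq_jx^{i+j}$ over the rectangle $i\in[\ell K,\ell K+K/2)$, $j\in[mK,mK+K/2)$.
\item For $K=1$ the summand is $x^{n-1}p_\ell q_{n-1-\ell}$, i.e.\ the single pair $(\ell,n-1-\ell)$.
\end{itemize}
So the right-hand side is $\sum p_iq_jx^{i+j}$ over a multiset of pairs. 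Every pair in a $K\ge2$ rectangle has $i+j\le (\ell+m)K+K-2=n-2$, and the $K=1$ pairs have $i+j=n-1$. It therefore remains to show that every pair with $i+j\le n-1$ is hit exactly once.

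\textbf{Step 2 (exact cover via binary expansions).} This is the main step. Given $(i,j)$ with $0\le i,j<n$, put $d:=n-1-j$. Because $n$ is a power of two, $d$ is the bitwise complement of $j$ in $\log n$ bits, and $i+j\le n-1$ is equivalent to $i\le d$.

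I would translate membership in the $K=2^t$ rectangle for some $\ell$ ($t\ge1$) into bit conditions:
\begin{itemize}
\item $\lfloor i/K\rfloor+\lfloor j/K\rfloor=n/K-1$. Since $\lfloor j/K\rfloor=n/K-1-\lfloor d/K\rfloor$ by complementation, this says $i$ and $d$ agree on all bits $\ge t$.
\item Bit $t-1$ of $i$ is $0$ and bit $t-1$ of $j$ is $0$. The latter means bit $t-1$ of $d$ is $1$.
\end{itemize}
Together these say exactly that the highest bit where $i$ and $d$ differ is bit $t-1$, with $i$ having the $0$. That is, $i<d$ and the leading differing bit is $t-1$. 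The case $K=1$ covers exactly $i=d$.

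Since $i\le d$ falls into exactly one of these mutually exclusive cases (equality, or a unique leading differing bit), each valid pair is counted once. When $i>d$, the leading differing bit has $i=1$, so no case applies and the pair is counted zero times. This is consistent with Step 1's bound.

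\textbf{Step 3 (bookkeeping).} The ranges check out: $\ell$ ranges over $0,\dots,n/K-1$, which covers all possible $\lfloor i/K\rfloor$. Coefficients beyond degree $n-1$ never appear, so nothing is lost by ignoring the degree of $p,q$.

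I expect the only delicate point to be getting the complement identity $\lfloor (n-1-d)/K\rfloor = n/K-1-\lfloor d/K\rfloor$ and the bit $t-1$ conventions exactly right. I would verify these on $n=4$ before writing the argument out.
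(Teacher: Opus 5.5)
Your proof is correct. Step~1 correctly identifies each summand as $\sum p_iq_jx^{i+j}$ over the square $[\ell K,\ell K+K/2)\times[mK,mK+K/2)$. When $K=1$ the summand is a single anti-diagonal point, which is why that case is special-cased in the definition.

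Your bit characterization in Step~2 is also right. The complement identity you flag as delicate follows by writing $d=aK+r$ with $0\le r<K$, so that $n-1-d=(n/K-1-a)K+(K-1-r)$. This gives both $\lfloor j/K\rfloor=n/K-1-\lfloor d/K\rfloor$ and ``bit $t-1$ of $j$ is $0$ iff bit $t-1$ of $d$ is $1$'' at once. For each $t$, the index $\ell=\lfloor i/K\rfloor$ is forced. So the multiplicity of a pair is the number of valid $t$, which is $1$ if $i\le n-1-j$ and $0$ otherwise, as you say.

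The paper does not prove this lemma at all; it is quoted from \cite[Lemma~12]{BostanJS08}, so there is no in-paper argument to match. The other natural proof is by induction on $n$. The triangle $\{(i,j):i+j\le n-1\}$ is the disjoint union of the square $[0,n/2)^2$ (the $K=n$ term) and the translates by $(n/2,0)$ and $(0,n/2)$ of the triangle $\{i+j\le n/2-1\}$. Applying the hypothesis to the two translates, and reindexing $\ell\mapsto\ell+n/(2K)$ for the first one, produces exactly the terms with $K<n$.

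That inductive version mirrors the divide-and-conquer structure and needs no binary arithmetic. Yours instead gives a closed-form rule for which summand contains a given monomial $p_iq_jx^{i+j}$: the leading bit at which $i$ and $n-1-j$ differ. It also makes disjointness immediate rather than inherited through the recursion.
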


Getting rid of the modulo operation allows us to use associativity, leading to the following corollary:

\begin{corollary}\label{cor:no_modulo}
    Given polynomials $u,v,b \in \F[x]$ and $n$ a power of two, we have
    \[
    u(vb\mod x^n) = \sum_{k=0}^{\log n} x^{n-2^k} \sum_{\ell=0}^{n/2^k-1} uv^{(\ell,2^k)}b^{(n/2^k-1-\ell,2^k)}.
    \]
\end{corollary}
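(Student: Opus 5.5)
The corollary follows directly from \Cref{lem:remove-modulo} applied with $p:=v$ and $q:=b$. The plan is to substitute that identity for $vb \bmod x^n$ inside the product with $u$, and then use the fact that multiplication in $\F[x]$ is a commutative ring operation that distributes over finite sums.

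First, \Cref{lem:remove-modulo} gives
\[
vb \bmod x^n = \sum_{k=0}^{\log n} x^{n-2^k} \sum_{\ell=0}^{n/2^k-1} v^{(\ell,2^k)}b^{(n/2^k-1-\ell,2^k)}.
\]
Multiplying both sides by $u$, distributivity pushes $u$ through both finite sums. Commutativity of $\F[x]$ then lets me move $u$ past the monomial $x^{n-2^k}$, so each summand becomes $x^{n-2^k}\, u\, v^{(\ell,2^k)} b^{(n/2^k-1-\ell,2^k)}$. This is exactly the right-hand side of the corollary.

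The only thing to check is that the reduction modulo $x^n$ is confined to the inner product $vb$ and is not applied after multiplying by $u$. Since the lemma is an exact polynomial identity for $vb \bmod x^n$, no further reduction is involved, so there is no real obstacle.

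The point of the rewriting is algorithmic. Each term $u v^{(\ell,2^k)} b^{(\cdot)}$ is a product of three polynomials with no modulo in between, so it can be associated as $(u v^{(\ell,2^k)})\, b^{(\cdot)}$. This left-to-right order is what the vector and matrix versions used later in the paper rely on.
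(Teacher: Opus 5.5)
Your proposal is correct and takes the same approach as the paper, which gives no separate proof. The paper treats the corollary as an immediate consequence of \Cref{lem:remove-modulo} with $p=v$ and $q=b$: multiply both sides by $u$ and distribute over the finite sums, which is exactly your argument.
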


We can now prove \Cref{thm:toeplitz-general-matrix-product}. Recall that it states we can multiply a displacement rank-$k$ matrix by an $n\times m$ matrix in $\tilde O(\MM(k,k,m)n/k + \MM(k,m,m)n/m)$ operations.

\begin{proof}[Proof of \Cref{thm:toeplitz-general-matrix-product}]
    First, observe  $\nabla(\mA^\top) = \mA^\top - \mZ\mA^\top\mZ^\top = (\mA - \mZ \mA \mZ^\top)^\top = \nabla(\mA)^\top = \mV\mU^\top$, so the generators $\mU\mV^\top$ of $\mA$ are also the generators $\mV\mU^\top$ of $\mA^\top$. In particular, by $\mB^\top \mA = (\mA^\top \mB)^\top$, the product $\mB^\top \mA$ can be done in the same time complexity as $\mA^\top \mB$, which is the same time complexity as computing $\mA \mB$. So for the remainder of the proof, let us focus on computing $\mA\mB$.

    Further, let us assume that $n$ is a power of 2, as otherwise we can simply add 0-columns and rows, which increases the complexity by at most a constant factor.
    
    By \Cref{lem:toeplitz-via-polynomial} it suffices to compute $u^\top(vb^\top\mod x^n)$ for $u,v\in\F[x]^k$ and $b\in\F[x]^m$.
    Consider computing the $j$-th column of that product.

    \begin{align*}
        (u^\top(vb^\top\mod x^n))_{j} 
        =&~
        \sum_{t=1}^k u_t (vb^\top\mod x^n)_{t,j}\\
        =&~
        \sum_{t=1}^k u_t (v_t b_j\mod x^n)\\
        \text{(\Cref{cor:no_modulo})}~~=&~
        \sum_{t=1}^k \sum_{s=0}^{\log n} x^{n-2^s} \sum_{\ell=0}^{n/2^s-1} u_t v_t^{(\ell,2^s)}b_j^{(n/2^s-1-\ell,2^s)}\\
        =&~
        \sum_{s=0}^{\log n} x^{n-2^s} \sum_{\ell=0}^{n/2^s-1} \sum_{t=1}^k u_t v_t^{(\ell,2^s)}b_j^{(n/2^s-1-\ell,2^s)}\\
        =&~
        \sum_{s=0}^{\log n} x^{n-2^s} \sum_{\ell=0}^{n/2^s-1} u^\top v^{(\ell,2^s)}b_j^{(n/2^s-1-\ell,2^s)} \\
        =&~
        \sum_{s=0}^{\log n} x^{n-2^s} u^\top \mV^{(s)} \mB^{(s)} e_j
    \end{align*}
    where we define $\mV_{t,\ell}^{(s)}=v^{(\ell,2^s)}_t$ and $\mB^{(s)}_{\ell,j} = b_j^{(n/2^s-1-\ell,2^s)}$ for $t=1,...,k$, $\ell=0,...,n/2^s-1$, $j = 1,...,m$, i.e., $\mV^{(s)} \in \F[x]^{k\times n/2^s}$ and $\mB^{(s)} \in \F[x]^{n/2^s \times m}$, both of degree $\le 2^s$.
    
    So to compute $u^\top(vb^\top\mod x^n)$, we need to compute $u^\top \mV^{(s)}\mB^{(s)}$ for all $s=0,1,2,...,\log n$.
    Here $\mV^{(s)},\mB^{(s)}$ have degrees at most $2^s$, but $u$ has degree at most $n$. To balance that, we express $u^\top$ as \[u^\top = (1, x^{2^s}, x^{2\cdot 2^s}, ..., x^{n-2^s}) \mN,\] 
    i.e., each polynomial in $u$ is split into $n/2^s$ polynomials of degree $2^s$ and then we stack them on top of each other, resulting in $\mN$.
    We must now compute
    \[
    (1, x^{2^s}, x^{2\cdot 2^s}, ..., x^{n-2^s}) \mN \mV^{(s)} \mB^{(s)}.
    \]
    There are two options for computing this product: multiplying right-to-left or left-to-right, which have different time complexities.
    \paragraph{Option 1: right-to-left}
    If we multiply from right to left, we have time complexity 
    \[\tilde O(\MM(k,n/2^s, m)\cdot 2^s + \MM(n/2^s,k, m)\cdot 2^s) = \tilde O(\MM(n/2^s,k, m)\cdot 2^s).\]
    This complexity is increasing for increasing $s$ due to $\MM(n/2^{s},k,m) \le \MM(n/2^{s+1}, k,m)\cdot 2$ (i.e., perform a matrix product by splitting it into 2 smaller products) and thus $\MM(n/2^s,k, m)\cdot 2^s \le \MM(n/2^{s+1},k, m)\cdot 2^{s+1}$.
    Note that the final product by a length-$n/2^s$ vector of monomials is essentially reverting the polynomial split, so it takes negligible $O(nm)$ time.
    
    We use option 1 for all $s$ where $n/2^s \ge \min(m,k)$.
    Since the complexity increases for increasing $s$, the complexity is bounded by the largest $s$, i.e.,
    $$
    \tilde O(\MM(\min(m,k),k,m)\cdot(n/\min(m,k))) \le \tilde O(\MM(m,k,m)\cdot n/m + \MM(k,k,m)\cdot n/k).
    $$
    \paragraph{Option 2: left-to-right}
    If we first multiply $\mN\mV^{(s)}$ and then multiply by $\mB^{(s)}$, then the time complexity is
    \[\tilde O(\MM(n/2^s, k, n/2^s)\cdot 2^s + \MM(n/2^s, n/2^s, m) \cdot 2^s).\]
    Since we can reorder the arguments, only one of the two terms above dominates, depending on whether $m$ or $k$ is larger.
    Assume for now without loss of generality that $m\ge k$ (otherwise switch $m$ and $k$ in the analysis below).
    We use option 2 only when $n/2^s < \min(m,k) = k$, so we can write $n/2^s = k/2^t$ for $0\le t \le \log k$.
    Next, we write
    $$
    \MM(n/2^s, m, n/2^s)\cdot 2^s = \MM(k/2^t, m, k/2^t)\cdot 2^t \cdot n/k = k^{\omega(1-t/\log k,~ \log_k m,~ 1-t/\log k)+t/\log k} \cdot n/k
    $$
    where $\omega(a,b,c)$ is the complexity exponent of multiplying a $k^a\times k^b$ matrix by a $k^b\times k^c$ matrix.
    It is known that this exponent is a convex function \cite{LottiR83}, so we can bound the exponent by
    \begin{align*}
    &~
    \omega\left(1-\frac{t}{\log k},~ \log_k m,~ 1-\frac{t}{\log k}\right) + \frac{t}{\log k}\\
    \le&~
    \left(\omega\left(1, \log_k m, 1\right) \cdot \left(1-\frac{t}{\log k}\right) + \omega\left(0, \log_k m, 0\right) \cdot \left(\frac{t}{\log k}\right)\right) +\frac{t}{\log k}\\
    \le&~
    \omega\left(1, \log_k m, 1\right) \cdot \left(1-\frac{t}{\log k}\right) + (1+\log_k m) \cdot \left(\frac{t}{\log k}\right)\\
    \le&~
    \omega\left(1, \log_k m, 1\right) \cdot \left(1-\frac{t}{\log k}\right) + \omega(1,\log_k m, 1) \cdot \left(\frac{t}{\log k}\right)\\
    =&~
    \omega\left(1, \log_k m, 1\right)
    \end{align*}
    where the first step uses convexity, the second step uses $\omega(a,b,c) \le a+b+c$, and step three uses that $\omega(a,b,c)\ge$ the sum of the two largest arguments.
    Thus when $k\le m$ then the time complexity of option 2 is bounded by
    \[
    \tilde O(k^{\omega(1,\log_k m, 1)} \cdot n/k)=\tilde O( \MM(k,m,k)\cdot n/k).
    \]
    For the case that $m\le k$, the complexity of option 2 is bounded by $\tilde O(\MM(m,k,m)\cdot n/m)$ (switch all $m$ and $k$ in the above analysis).
    \paragraph{Overall complexity}
    By choosing option 1 when $n/2^s \ge \min(m,k)$ and option 2 when $n/2^s < \min(m,k)$, the time complexity for each $s$ is bounded by
    \[
    \tilde O(\MM(k,k,m)n/k + \MM(k,m,m)n/m).
    \]
\end{proof}

\begin{proof}[Proof of \Cref{thm:toplitz-topelitz-product}]
    Let $\nabla(\mA) = \mA - \mZ\mA\mZ^\top = \mU_A\mV_A^\top$ and likewise $\nabla(\mB) = \mU_B\mV_B^\top$ be the generators of $\mA$ and $\mB$.
    Then we have
    \begin{align*}
    \nabla(\mA\mB) 
    = &~
    \mA\mB - \mZ\mA\mB\mZ^\top\\
    =&~
    \underbrace{(\nabla(\mA)+\mZ\mA\mZ^\top)}_{\mA}\mB - \mZ\mA\mB\mZ^\top\\
    =&~
    \nabla(\mA)\mB + \mZ\mA\mZ^\top\mB - \mZ\mA\mB\mZ^\top\\
    =&~
    \nabla(\mA)\mB + \mZ\mA\mZ^\top\underbrace{(\nabla\mB+\mZ\mB\mZ^\top)}_{\mB} - \mZ\mA\mB\mZ^\top \\
    =&~
    \nabla(\mA)\mB + \mZ\mA\mZ^\top\nabla\mB+\mZ\mA\mZ^\top\mZ\mB\mZ^\top - \mZ\mA\mB\mZ^\top \\
    =&~
    \nabla(\mA)\mB + \mZ\mA\mZ^\top\nabla\mB+\mZ\mA(\mZ^\top\mZ-\mI)\mB\mZ^\top
    \end{align*}
    Next, we use that $\mZ^\top\mZ = \mI-e_ne_n^\top$, and plug in $\nabla(\mA)=\mU_A\mV_A^\top$, $\nabla(\mB)=\mU_B\mV_B^\top$ to get
    \begin{align*}
    \nabla(\mA\mB) 
    =&~
    \nabla(\mA)\mB + \mZ\mA\mZ^\top\nabla\mB+\mZ\mA(\mZ^\top\mZ-\mI)\mB\mZ^\top \\
    =&~
    \nabla(\mA)\mB + \mZ\mA\mZ^\top\nabla\mB-\mZ\mA e_ne_n^\top\mB\mZ^\top \\
    =&~
    \mU_A\mV_A^\top\mB + \mZ\mA\mZ^\top\mU_B\mV_B^\top
    -(\mZ\mA e_n) (e_n^\top\mB\mZ^\top) \\
    =&~
    \underbrace{[\mU_A | \mZ\mA\mZ^\top \mU_B | -(\mZ\mA e_n)]}_{=:\mU_{AB}} \underbrace{[\mB^\top\mV_A | \mV_B | \mZ\mB^\top e_n]^\top}_{=:\mV_{AB}^\top}
    \end{align*}
    So we have our new generators $\mU_{AB},\mV_{AB} \in \F^{n \times (k_a+k_b+1)}$ for the matrix product $\mA\mB$.

    \paragraph{Time Complexity}
    A matrix-vector product with $\mA$ or $\mB$ takes $\tilde O(k_a n)$ and $\tilde O(k_b n)$ time, respectively, by \Cref{thm:toeplitz-general-matrix-product}, allowing us to compute $\mZ\mA e_n$ and  $\mZ\mB^\top e_n$.
    
    We can compute
    $\mZ\mA(\mZ^\top\mU_B)$ and $\mB^\top \mV_A = (\mV_A^\top \mB)^\top$ in $\tilde O(\MM(k_a,k_a,k_b)n/k_a + \MM(k_a,k_b,k_b)n/k_b)$ time using \Cref{thm:toeplitz-general-matrix-product}. This is because $\mA$ has displacement rank $k_a$ and $\mZ^\top \mU_B$ has $k_b$ columns. Likewise, $\mB$ has displacement rank $k_b$ and $\mV_A$ has $k_a$ columns.
\end{proof}

\section{Updating the Frobenius Form}
\label{sec:frobenius}

In this section, we present the algorithm for maintaining the Frobenius form under low-rank updates (\Cref{thm:lowrankfrobenius}).
For the prior dynamic algebraic primitive of matrix inverse, this was known via the Woodbury identity.
We now show an equivalent for our new primitive, the Frobenius form.
In the context of dynamic algorithms, we will use this result in \Cref{sec:dynamic} to reset our data structure after every $k$ updates.
\begin{restatable}[\Cref{thm:lowrankfrobenius}, Low-rank update to Frobenius form]{theorem}{lowrankfrobenius}\label{thm:lowrank_frobenius_update}
    Given are the Frobenius form $\mA = \mK \mC \mK^{-1}$ of $\mA$, where $\mK,\mK^{-1}$ are explicitly given, and matrices $\mU,\mV\in\F^{n\times k}$. Assuming all prefix updates $\mA+\sum_{i=1}^\ell\mU e_i e_i^\top\mV^\top$ (i.e., using only the first $\ell$ columns of $\mU,\mV$) for $\ell=1,...,k$ are generic, we can compute w.h.p.~the new Frobenius form
    \[
    (\mA+\mU\mV^\top) = \mK'\mC'\mK'^{-1}
    \]
    in $\tilde O(\MM(n,k,k)n/k)$ operations. Here $\mK',\mK'^{-1}$ are explicitly computed.
\end{restatable}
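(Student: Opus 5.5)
We process the $k$ rank-one updates $\mU e_\ell e_\ell^\top \mV^\top$ one at a time, exactly as in the subquadratic scheme of \Cref{sec:overview:dynamicfrobenius}, but without ever touching $\mK$ after the start. We maintain the invariant
\[
\mA+\sum_{i\le \ell}\mU e_i e_i^\top\mV^\top=(\mK\mT_\ell)\,\mC_\ell\,(\mT_\ell^{-1}\mK^{-1}),
\]
with $\mT_0=\mI$ and $\mC_0=\mC$. The matrices $\mT_\ell,\mT_\ell^{-1}$ are stored only through their generators, and both have displacement rank $O(\ell)$. Only at the very end do we form $\mK'=\mK\mT_k$ and $\mK'^{-1}=\mT_k^{-1}\mK^{-1}$ explicitly, and we set $\mC'=\mC_k$.

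Before the loop, we precompute $\hat\mU:=\mK^{-1}\mU$ and $\hat\mV^\top:=\mV^\top\mK$, both of size $n\times k$. This costs $O(\MM(n,n,k))$, which is within $\tilde O(\MM(n,k,k)n/k)$ because $\MM(n,n,k)\le \MM(n,k,k)\cdot n/k$ by blocking. In step $\ell$, the update inside the conjugation becomes $\mC_{\ell-1}+u v^\top$, where
\[
u=\mT_{\ell-1}^{-1}\hat\mU e_\ell,\qquad v^\top=e_\ell^\top\hat\mV^\top\mT_{\ell-1}.
\]
By \Cref{cor:overview:product}, each of these matrix-vector products costs $\tilde O(\ell n)$.

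Next we apply \Cref{lem:companion_update} to $\mC_{\ell-1}+uv^\top$. Its genericity follows from the prefix-genericity hypothesis, since this matrix is similar to the $\ell$-th prefix matrix. The lemma runs in $\tilde O(n)$ and returns $\mC_\ell$ together with generators of $\mK'_\ell$ and $\mK_\ell'^{-1}$, of displacement rank $O(1)$. We then set $\mT_\ell=\mT_{\ell-1}\mK'_\ell$ and $\mT_\ell^{-1}=\mK_\ell'^{-1}\mT_{\ell-1}^{-1}$ via \Cref{thm:toplitz-topelitz-product}. With ranks $O(\ell)$ and $O(1)$, this costs $\tilde O(\MM(\ell,1,1)n+\MM(1,\ell,\ell)n/\ell)=\tilde O(\ell n)$, and the displacement rank grows by only $O(1)$ per step.

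Summed over all $k$ steps, the loop costs $\tilde O(k^2 n)$, which is at most $\tilde O(\MM(n,k,k)n/k)$. Failure probabilities from \Cref{lem:companion_update} (random cyclic vectors) are bounded by a union bound over the $k$ steps.

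The final products $\mK\mT_k$ and $\mT_k^{-1}\mK^{-1}$ multiply a general $n\times n$ matrix by a displacement-rank-$O(k)$ matrix. By \Cref{thm:toeplitz-general-matrix-product} with $m=n$, this costs $\tilde O(\MM(k,k,n)n/k+\MM(k,n,n))$. The second term is problematic: it is not obviously bounded by the claimed $\MM(n,k,k)n/k$, and this is the main obstacle. Instead of invoking the theorem with $m=n$, I would split $\mK$ into $n/k$ column blocks of width $k$ and apply \Cref{thm:toeplitz-general-matrix-product} with $m=k$ to each block. Each block costs $\tilde O(\MM(k,k,k)n/k)$, giving $\tilde O(\MM(k,k,k)n^2/k^2)\le\tilde O(\MM(n,k,k)n/k)$ in total. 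Concretely, $\mK\mT_k$ is computed as $(\mT_k^\top\mK^\top)^\top$ using these column blocks of $\mK^\top$, and $\mT_k^{-1}\mK^{-1}$ directly using column blocks of $\mK^{-1}$. With that, the overall bound $\tilde O(\MM(n,k,k)n/k)$ follows.
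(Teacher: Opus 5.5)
Your algorithm computes the correct Frobenius form, but the complexity claim fails. The missing ingredient is the divide-and-conquer batching of \Cref{cor:companion_update}. Because you process the $k$ rank-one updates one at a time, each step only performs matrix--vector products with a displacement-rank-$O(\ell)$ matrix, so fast matrix multiplication is never used and the loop costs $\tilde O(k^2n)$. Your assertion that this is at most $\tilde O(\MM(n,k,k)n/k)$ amounts to $k^3\lesssim\MM(n,k,k)$. That holds for $k\le\sqrt n$ (since $\MM(n,k,k)\ge nk$), but it is false for large $k$: at $k=n$ you pay $n^3$ against the claimed $n^\omega$.

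The paper instead splits $\mU=[\mU_1|\mU_2]$ and $\mV=[\mV_1|\mV_2]$, and recursively obtains $\mC+\mU_1\mV_1^\top=\mK_1\mC_1\mK_1^{-1}$. It then computes $\mK_1^{-1}\mU_2$ and $\mV_2^\top\mK_1$ as single batched products of a Toeplitz-like matrix with an $n\times k/2$ matrix (\Cref{thm:toeplitz-general-matrix-product}). Next it recurses on $\mC_1+(\mK_1^{-1}\mU_2)(\mV_2^\top\mK_1)$ and combines the two transforms with \Cref{thm:toplitz-topelitz-product}. This gives $T(k)=2T(k/2)+\tilde O(k^{\omega-1}n)=\tilde O(k^{\omega-1}n)$ with displacement rank at most $6k-1$. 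Moreover, $k^{\omega-1}n=\MM(k,k,k)n/k\le\MM(n,k,k)n/k$ for every $k\le n$.

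Your handling of the final products is also off. The term $\MM(k,n,n)$ you flag is harmless. Splitting the $n\times n$ factor into $n/k$ blocks of $k$ columns gives $\MM(k,n,n)\le\MM(k,n,k)\cdot n/k=\MM(n,k,k)\,n/k$. So \Cref{thm:toeplitz-general-matrix-product} with $m=n$ directly yields the claimed bound, which is how the paper argues.

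Your workaround instead costs $\tilde O(\MM(k,k,k)n^2/k^2)$, and the inequality you state is backwards. Blocking gives $\MM(n,k,k)\le(n/k)\MM(k,k,k)$, and under current rectangular matrix multiplication bounds the reverse fails by a polynomial factor.

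With both corrections, your sequential scheme proves $\tilde O(k^2n+\MM(n,k,k)n/k)$. That would still suffice for the resets in \Cref{thm:dynamic-powers}, since the extra $k^2n$ amortizes to the $nk$ term already present there. It does not suffice for the theorem as stated, nor for the $\tilde O(k^{\omega-1}n)$ batch bound of \Cref{thm:sensitivity-powers}.
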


\paragraph{Proof and Section outline:}
To prove \Cref{thm:lowrank_frobenius_update}, we first reduce the problem of computing the Frobenius form of $\mA+\mU\mV^\top$ to computing the Frobenius form of $\mC+\mU'\mV'^\top$ for some $\mU',\mV' \in \F^{n\times k}$. The latter is much easier to handle.

Let $\mA = \mK \mC \mK^{-1}$
and suppose we have an update $\mA'=\mA+\mU\mV^\top$.
We have
\[
\mA+\mU\mV^\top = \mK \mC \mK^{-1} + \mU\mV^\top = \mK (\mC + \mK^{-1}\mU\mV^\top\mK)\mK^{-1}.
\]
Thus after computing $\mU' = \mK^{-1}\mU$, $\mV'^\top = \mV^\top \mK$,
we now have a rank-$k$ update to the companion matrix $\mC + \mU'\mV'^\top$.
We compute the Frobenius form 
\[\mC + \mU'\mV'^\top = \mK' \mC' \mK'^{-1}.\]
Then we have
\[
\mA+\mU\mV^\top = \mK (\mC + \mK^{-1}\mU\mV^\top\mK)\mK^{-1} = (\mK\mK') \mC' (\mK'^{-1}\mK^{-1}).
\]
So we get our new Frobenius form for $\mA+\mU\mV^\top$ by 
(i) computing $\mU',\mV'$, (ii) computing the Frobenius form of $\mC+\mU'\mV'^\top$, and (iii) multiplying $\mK\mK'$ and $\mK'^{-1}\mK^{-1}$.

Here (i) is done with rectangular matrix multiplication, and (iii) uses the fact that $\mK'$ and $\mK'^{-1}$ have low displacement rank, so we can compute the product via the algorithms from \Cref{sec:toeplitz-like}.

So the only remaining task is (ii), computing the Frobenius form of $\mC+\mU'\mV'^\top$.
We do this via a divide-and-conquer approach.
In \Cref{sec:frobenius:companion-rank1-update} we handle the base case where $\mU',\mV'$ specify a rank-1 update, i.e., $\mU',\mV'$ are a single column vector each.
In \Cref{sec:frobenius:low-rank-update} we extend the result to rank-$k$ updates via recursion. 
There, we also combine everything to prove \Cref{thm:lowrank_frobenius_update}.

\subsection{Rank-1 Update to Companion Matrix}
\label{sec:frobenius:companion-rank1-update}

We start by giving an algorithm for computing the Frobenius form of a matrix $\mC + uv^\top$ where $\mC$ is a companion matrix.

\lemcompanionupdate*

\begin{proof}
    Let us start by computing $\mC'$, the companion matrix of $\tilde \mC$. To compute it, we need the characteristic polynomial of $\tilde \mC$.
    \paragraph{Computing characteristic polynomial}
    Applying the determinant lemma to 
    ${\det((x\mI-\mC)-uv^\top)}$
    leads to
    \[
    \det(x\mI-\mC)(1-v^\top(x\mI-\mC)^{-1}u).
    \]
    Note that the characteristic polynomial $\det(x\mI-\mC)$ of $\mC$ is given by the last column of $\mC$, since it is a generic companion matrix.
    
    Now the main computational task is to compute $(1-v^\top(x\mI-\mC)^{-1}u)$.
    Here
    \[
    (x\mI-\mC)^{-1}
    =
    x^{-1}(\mI-x^{-1}\mC)^{-1}
    =
    x^{-1} \sum_{\ell \ge 0}\mC^\ell x^{-\ell}
    =
    \sum_{\ell \ge 0}\mC^\ell x^{-\ell-1}.
    \]
    Further, we know that $\det(x\mI-\tilde\mC)$ and $\det(x\mI-\mC)$ are degree $n$ polynomials, so it suffices to compute $(x\mI-\mC)^{-1}$ up to degree $-n$.
    Next,
    \[
    v^\top(x\mI-\mC)^{-1}u
    =
    \sum_{\ell \ge 0}v^\top\mC^\ell u x^{-\ell-1}
    \]
    where $v^\top\mC^\ell u$ for all $\ell=0...n$ can be computed in $\tilde O(n)$ time since $\mC$ is a companion matrix (\Cref{lem:companion_powers}).
    Thus we can compute the new characteristic polynomial $\det(x\mI-\tilde\mC)$ in $\tilde O(n)$ time.
    
    \paragraph{Computing generators of $\mK$}
    Matrix $\mK$ is given by $[z|\tilde\mC z|...|\tilde\mC^{n-1}z]$.
    
    Write $k_j=\tilde\mC^{j-1}z$ (i.e., the $j$th column of $\mK$)
    and using $\tilde\mC=\mZ+ce_n^\top+uv^\top$ we get for $j\ge 2$
    \[
    (\mK - \mZ \mK \mZ^\top ) e_j 
    =
    k_j-\mZ k_{j-1}
    =
    (\tilde \mC - \mZ) k_{j-1}
    =
    (ce_n^\top+uv^\top) k_{j-1}
    =
    c(e_n^\top k_{j-1})+u(v^\top k_{j-1}).\]
    So every column $k_j$ of $\mK - \mZ \mK \mZ^\top$ for $j\ge 2$ is a linear combination of $c$ and $u$ (and $k_1=z$ for the first column).
    Thus $\mK - \mZ \mK \mZ^\top$ is a rank $\le3$ matrix, meaning $\mK$ has displacement rank $\le3$.
    Further, generators of $\mK$ are given by $\mU = [z|c|u]$ and 
    \[
    \mV^\top = \begin{bmatrix}
        1 & 0 & 0 &... & 0 \\
        0 & e_n^\top k_{1} & e_n^\top k_{2} & ... & e_n^\top k_{n-1} \\
        0 & v^\top k_{1} & v^\top k_{2} & ... & v^\top k_{n-1}
    \end{bmatrix}
    = 
    \begin{bmatrix}
        1 & 0 & 0 &... & 0 \\
        0 & e_n^\top z & e_n^\top \tilde\mC z & ... & e_n^\top \tilde\mC^{n-2} z \\
        0 & v^\top z & v^\top \tilde\mC z & ... & v^\top \tilde\mC^{n-2} z
    \end{bmatrix}.
    \]
    Here $e_n^\top\tilde\mC^{j} z$ and $v^\top \tilde\mC^j z$ can be computed for all $j=0...n-1$ 
    by computing the polynomials 
    \[
    v^\top (\mI - x \tilde\mC)^{-1} z \mod x^n 
    \text{ and }
    e_n^\top (\mI - x \tilde\mC)^{-1} z \mod x^n,
    \] 
    since $(\mI - x \tilde\mC)^{-1} = \sum_{j\ge0} x^j\cdot \tilde\mC^j$.
    To compute this efficiently in $\tilde O(n)$ 
    operations, we use the Sherman-Morrison identity, focusing on the pair $v,z$ (the pair $e_n, z$ works analogously).
    \[
    v^\top(\mI-x\tilde\mC)^{-1} z
    =
    v^\top(\mI-x\mC-x\cdot uv^\top)^{-1} z
    =
    v^\top(\mI-x\mC)^{-1}z+\frac{x\cdot v^\top(\mI-x\mC)^{-1}uv^\top(\mI-x\mC)^{-1}z}{1-x\cdot v^\top(\mI-x\mC)^{-1}u}.
    \]
    As all computations are performed modulo $x^n$, the products with $(\mI-x\mC)^{-1}$ are computable in $\tilde O(n)$ operations via \Cref{lem:companion_powers} because $(\mI-x\mC)^{-1} = \sum_j x^j \mC^j$ and $\mC$ is a companion matrix.
    
    \paragraph{Computing generators of $\mK^{-1}$}
    Given the generators of $\mK$, we can get the generators of the inverse $\mK^{-1}$ in $\tilde O(n)$ operations via \Cref{lem:generator_inverse}. The size of the inverse generators is at most $5$, since the generators of $\mK$ have size $3$ and \Cref{lem:generator_inverse} increases it by at most $2$.
\end{proof}

The above proof for \Cref{lem:companion_update} made use of \Cref{lem:companion_powers}. We now give its proof.

\lemcompanionpowers*

The proof is based on techniques from \cite{BostanM21}.

\begin{proof}%
Observe that $(\mI-x\cdot \mC)^{-1} = \sum_{i=0}^n x^i \mC^i \mod x^{n+1}$. So to compute $u^\top \mC^k v$ for all $k=1...n$ we can compute $u^\top (\mI - x\mC)^{-1} v \mod x^{n+1}$ and read off the coefficients of that polynomial.
All remaining calculations in this proof are done modulo $x^{n+1}$, i.e., over the ring $\F[x]/\langle x^{n+1}\rangle$.

We define $y=(\mI - x\mC)^{-1} v$; thus, $v = (\mI - x\mC)y = y - x (\mZ y + c e_n^\top y)$, implying for any $i\in[n]$
\[y_i - x (y_{i-1} + c_i y_n) = v_i
~~\Longrightarrow~~
y_i = v_i + x (y_{i-1} + c_i y_n).\]
Here, we assume $y_0=0$. Below, for convenience, we also set $c_j=0$ and $v_j=0$ for $j\notin[n]$.
Unrolling the recursion yields
\[y_i= \sum_{j=1}^i x^{i-j} v_j + y_n \sum_{j=1}^i x^{i-j+1} c_j\]
which for $i=n$ gives
\[y_n = \sum_{j=1}^n x^{n-j} v_j + y_n \sum_{j=1}^n x^{n-j+1} c_j
\Longrightarrow
y_n  = \frac{\sum_{j=1}^n x^{n-j} v_j}{1 - \sum_{j=1}^n x^{n-j+1} c_j }\]
with the division being performed modulo $x^{n+1}$, allowing us to compute this $y_n$ in $\tO(n)$ time.

Plugging this $y = (\mI - x \mC)^{-1} v$ into our desired calculation yields
\begin{align*}
u^\top (\mI - x \mC)^{-1} v
=&~
u^\top y
=
\sum_{i=1}^n u_i (v_i + x (y_{i-1} + c_i y_n)) \\
=&~
u^\top v + y_n x \cdot u^\top c + x \sum_{i=1}^n u_i y_{i-1}\\
=&~
u^\top v + y_nx\cdot u^\top c 
+ x \sum_{i=1}^n u_i (v_{i-1} + x (y_{i-2} + c_{i-1} y_n))\\
=&~
\left(u^\top v + y_nx\cdot u^\top c  \right)
+ x \left(\sum_{i=1}^n u_i v_{i-1} + y_nx\sum_{i=1}^n u_i c_{i-1}\right) + x^2 \sum_{i=1}^n u_i y_{i-2}\\
=&~
\sum_{k=0}^{n} x^k \left(\sum_{i=1}^n u_i v_{i-k}\right) + y_n\sum_{k=0}^n x^{k+1} \left( \sum_{i=1}^n u_i c_{i-k}\right).
\end{align*}
Above, in the last step, we unrolled the recurrence.
Note that the inner sums are convolutions of $u,v$ and $u,c$ and can be computed for all $k=0...n$ at once in $\tO(n)$ via fast Fourier transform~\cite{CK91}.
Multiplying by $y_n$ and adding the polynomials is also bounded by $\tO(n)$ operations.
\end{proof}

\subsection{Low-Rank Update}
\label{sec:frobenius:low-rank-update}

\Cref{sec:frobenius:companion-rank1-update} considered computing the Frobenius form of $\mC + uv^\top$. We now generalize it to $\mC + \mU\mV^\top$.
The following \Cref{cor:companion_update} is proven via divide-and-conquer, where \Cref{lem:companion_update} from \Cref{sec:frobenius:companion-rank1-update} forms the base case.

\begin{restatable}{lemma}{companionupdate}\label{cor:companion_update}
    Let $\mC = \mZ + ce_n^\top$ be a companion matrix 
    and let $\mU,\mV\in\F^{n\times k}$ such that all prefix updates $\mC+\sum_{i=1}^\ell\mU e_i e_i^\top\mV^\top$ (i.e., using only the first $\ell$ columns of $\mU,\mV$) for $\ell=1,...,k$ are generic.
    Then we can compute w.h.p.~the Frobenius form
    \[
    \tilde \mC = \mK \mC' \mK^{-1}
    \]
    where matrices $\mK,\mK^{-1}$ are Toeplitz-like with displacement rank $\le 6k-1$. 
    In particular, $\mK,\mK^{-1}$ are not explicitly computed but we return the generators of $\mK$ and $\mK^{-1}$.
    The time complexity is $\tilde O(k^{\omega-1}n)$.
\end{restatable}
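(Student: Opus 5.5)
Recursion on $k$, splitting the columns of $\mU,\mV$ into a first half $\mU_1,\mV_1$ (first $\lceil k/2\rceil$ columns) and a second half $\mU_2,\mV_2$. The base case $k=1$ is exactly \Cref{lem:companion_update}. It returns generators of $\mK$ and $\mK^{-1}$ of sizes $3\le 6\cdot1-1$ and $5\le 6\cdot 1-1$ in $\tO(n)$ operations.

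For the recursive step, first apply the procedure recursively to $\mC+\mU_1\mV_1^\top$. Its prefix updates are prefixes of the original ones, hence generic. This yields $\mC+\mU_1\mV_1^\top=\mK_1\mC_1\mK_1^{-1}$, with $\mK_1,\mK_1^{-1}$ given by generators of displacement rank $\le 6(k/2)-1$. Then
\[
\mC+\mU\mV^\top=\mK_1\bigl(\mC_1+\mK_1^{-1}\mU_2\mV_2^\top\mK_1\bigr)\mK_1^{-1}.
\]
We form $\mU_2'=\mK_1^{-1}\mU_2$ and $\mV_2'^\top=\mV_2^\top\mK_1$. These are products of a displacement rank $O(k)$ matrix with an $n\times k/2$ matrix, costing $\tO(k^{\omega-1}n)$ by \Cref{thm:toeplitz-general-matrix-product}.

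Next, recurse on $\mC_1+\mU_2'\mV_2'^\top$. Its prefix updates are similar (via $\mK_1$) to prefixes $\mC+\mU_1\mV_1^\top+\sum_{i\le \ell}(\cdot)$ of the original sequence, so genericity is preserved. This gives $\mK_2\mC'\mK_2^{-1}$. Finally, set $\mK=\mK_1\mK_2$ and $\mK^{-1}=\mK_2^{-1}\mK_1^{-1}$, computing generators via \Cref{thm:toplitz-topelitz-product} in $\tO(k^{\omega-1}n)$ operations.

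The recurrence $T(k)=2T(k/2)+\tO(k^{\omega-1}n)$ solves to $\tO(k^{\omega-1}n)$ since $\omega-1>1$; for $\omega=2$ it gives an extra log factor, which is hidden anyway. The randomness enters through the cyclic vectors and \Cref{lem:generator_inverse}. There are $O(k)$ calls in total, so a union bound keeps everything w.h.p.

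The main obstacle is the displacement-rank bookkeeping. Naively composing via \Cref{thm:toplitz-topelitz-product} gives rank $r(k)\le 2r(k/2)+1$, which is $\le 6k-1$ from $r(1)\le5$. Indeed, $2(6(k/2)-1)+1=6k-1$, so the bound is exactly maintained.

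Two issues need care. First, $k$ might not be a power of two. I would either pad, or check that $r(a)+r(b)+1\le 6(a+b)-1$ holds when $r(a)\le 6a-1$ and $r(b)\le 6b-1$, which it does. Second, the generators of $\mK^{-1}$ must also obey the bound. I would compose the inverse generators directly by the product theorem, rather than reinvert at the top, so that the same recurrence applies to them.

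If this direct bound turned out to be too loose, I would instead argue structurally. Every column of $\nabla(\mK)$ for $\mK=[z|\tilde\mC z|\dots]$ lies in the span of $z,c,\mU$, giving rank $\le k+2$. I would then use \Cref{lem:generator_inverse} for $k+4$, but that requires computing the generators of $\mK$ directly, so the composition route above is the primary plan.
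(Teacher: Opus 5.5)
Your proposal is correct and follows essentially the same divide-and-conquer proof as the paper. You recurse on the first half, conjugate the second half by $\mK_1$, recurse again, and compose $\mK=\mK_1\mK_2$ and $\mK^{-1}=\mK_2^{-1}\mK_1^{-1}$ via the Toeplitz-like product theorem, with the recurrences $r(k)\le 2r(k/2)+1$ from $r(1)\le 5$ and $T(k)=2T(k/2)+\tO(k^{\omega-1}n)$; your extra remarks on non-power-of-two splits and the union bound are sound but not needed.
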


\begin{proof}
    We solve this via divide-and-conquer, splitting the rank-$k$ update in half. We may assume $k$ is a power of two by appending 0 columns to $\mU$ and $\mV$. This increases the time complexity by only a constant factor.
    \paragraph{Base Case:}
    When $k=1$, we simply use \Cref{lem:companion_update}. It runs in $\tilde O(n)$ time and returns generators for $\mK,\mK^{-1}$ with displacement rank $\le 5 = 6k-1$.
    \paragraph{Recursion:}
    Let $\mU=[\mU_1|\mU_2],\mV=[\mV_1|\mV_2]$, so a rank-$k$ update $\mU\mV^\top = \mU_1\mV_1^\top+\mU_2\mV_2^\top$ becomes two rank-$k/2$ updates.
    We can compute via recursion:
    \begin{align*}
    \mC + \mU \mV^\top
    =&~
    (\mC + \mU_1\mV_1^\top) + \mU_2\mV_2^\top \\
    \begin{tabular}{c}
    \text{Perform rank-$k/2$ update to}\\
    \text{compute $\mC+\mU_1\mV_1^\top=\mK_1\mC_1\mK_1^{-1}$.}
    \end{tabular}\quad=&~
    \mK_1 \mC_1 \mK_1^{-1}  + \mU_2\mV_2^\top \\
    =&~
    \mK_1 (\mC_1+\underbrace{\mK_1^{-1}\mU_2}_{=:\mU'}\underbrace{\mV_2^\top\mK_1}_{=:\mV'^\top}) \mK_1^{-1} \\
    =&~
    \mK_1 (\mC_1+\mU'\mV'^\top) \mK_1^{-1} \\
    \begin{tabular}{c}
    \text{Perform rank-$k/2$ update to}\\
    \text{compute $\mC_1+\mU'\mV'^\top=\mK_2\mC_2\mK_2^{-1}$.}
    \end{tabular}\quad=&~
    \mK_1 \mK_2 \mC_2 \mK_2^{-1} \mK_1^{-1}\\
    \begin{tabular}{c}
    \text{Compute $\mK=\mK_1\mK_2$}\\
    \text{and $\mK^{-1}=\mK_2^{-1}\mK_1^{-1}$.}
    \end{tabular}\quad=&~
    \mK \mC_2 \mK^{-1}.
    \end{align*}
    Note that the prefix genericity property is guaranteed in the recursive calls by the fact that genericity is invariant under similarity.
    Here computing $\mU',\mV'$ takes $\tilde O(k^{\omega-1}n)$ operations via \Cref{thm:toeplitz-general-matrix-product} because by the inductive hypothesis $\mK_1,\mK_1^{-1}$ have displacement rank $\le 6(k/2)-1$, and $\mU_2,\mV_2$ have $k/2$ columns.
    Computing $\mK,\mK^{-1}$ takes $\tilde O(k^{\omega-1}n)$ operations via \Cref{thm:toplitz-topelitz-product} because $\mK_1,\mK_1^{-1},\mK_2,\mK_2^{-1}$ have displacement rank at most $6(k/2)-1$.
    Further, $\mK,\mK^{-1}$ have displacement rank $\le \left(6(k/2)-1\right)+\left(6(k/2)-1\right)+1 = 6k-1$, so the claimed bound in \Cref{cor:companion_update} holds.

    \paragraph{Time Complexity}
    The time complexity $T(k)$ of this recursive algorithm is described by \linebreak $T(k) = \tilde O(k^{\omega-1}n) + 2T(k/2)$ and $T(1)=\tilde O(n)$, leading to $T(k)=\tilde O(k^{\omega-1}n)$ as ${\omega - 1 \ge 1 = \log_2(2)}$.
\end{proof}

We now have all the tools available to prove \Cref{thm:lowrank_frobenius_update}, i.e., the low-rank update to Frobenius forms.

\lowrankfrobenius*

\begin{proof}
    First, observe that
    \[
    (\mA+\mU\mV^\top) = \mK(\mC+\underbrace{\mK^{-1}\mU}_{=:\mU'}\underbrace{\mV^\top\mK}_{=:\mV'^\top})\mK^{-1}.
    \]
    Computing $\mU',\mV'$ takes $O(\MM(n,n,k)) = O(\MM(n,k,k)n/k)$ operations.
    Since genericity is invariant under similarity, the prefix update property is preserved for $\mC$ and the rank-$k$ update~$\mU'\mV'^\top$.
    So we can apply \Cref{cor:companion_update} to compute the Frobenius form of $\mC+\mU'\mV'^\top$ in $\tO(k^{\omega-1}n)$ operations and obtain
    \[
    (\mA+\mU\mV^\top) = \underbrace{\mK\mK'}_{=:\mK''}\mC'\underbrace{\mK'^{-1}\mK^{-1}}_{\mK''^{-1}} = \mK'' \mC' \mK''^{-1}.
    \]
    We can compute $\mK'':=\mK\mK'$ and $\mK''^{-1}=\mK'^{-1}\mK^{-1}$ in $\tilde O(\MM(n,k,k)n/k)$ operations via \Cref{thm:toeplitz-general-matrix-product}. $\mK',\mK'^{-1}$ have displacement rank $O(k)$ and we have their generators by \Cref{cor:companion_update}.
\end{proof}

\section{Dynamic Algorithms}
\label{sec:dynamic}

In this section, we prove our main results: \Cref{thm:dynamic-distances} (dynamic distances) and the algebraic equivalent \Cref{thm:dynamic-powers} (dynamic matrix powers). We also prove the other results from \Cref{sec:intro:results} (\Cref{thm:sensitivity-distances,thm:dynamiccharpoly}).

\dynamicdistances*

We first prove the algebraic result in \Cref{sec:dynamic:algebraic}.
Then in \Cref{sec:dynamic:graph} we prove our graph results by reducing them to our algebraic data structures.

\subsection{Dynamic Matrix Powers and Frobenius Form}
\label{sec:dynamic:algebraic}
We start by proving our dynamic algorithm for maintaining matrix powers via dynamic Frobenius form.

\thmdynamicmatrixpower*

This is $O(n^{1.58})$ update and query time for $k = n^{0.58}$ and bounds from \cite{AlmanDWXXZ25}.

\noindent
Calculation: We write $\MM(n,n^{0.58},n^{0.58})=n^{\omega(1,0.58,0.58)}=n^{\omega(1/0.58, 1, 1)\cdot 0.58}$.
Via convexity of $\omega(\cdot)$ we can bound
$\omega(1/0.58, 1, 1)\le\omega(1.725,1,1) \le 0.45\cdot\omega(2,1,1) + 0.55\cdot\omega(1.5,1,1) \le 0.45\cdot 3.250035 + 0.55\cdot2.794633 \le 3$ where the bounds come from \cite{AlmanDWXXZ25}.
Thus, the update time is $\MM(n,n^{0.58},n^{0.58})n^{1-2~\cdot~0.58} \le n^{3~\cdot~ 0.58+1-2\cdot 0.58} = n^{1.58}$.

\begin{proof}
    We compute the Frobenius form during initialization in $\tO(n^\omega)$ operations~\cite{Keller-Gehrig85}. We maintain it in implicit form
    \[
    \mA = \mK\mT\mC\mT^{-1}\mK^{-1}
    \]
    where $\mK,\mK^{-1}$ are the transforms from the initialization (or reset), $\mT,\mT^{-1}$ are matrices with displacement rank $O(k)$, and we have their generators.
    \paragraph{Update}
    For an entry update that adds $\delta\in\F$ to position $(i,j)$, we have
    \[
    (\mA+e_i e_j^\top\cdot \delta)
    =
    \mK\mT(\mC+\mT^{-1}\mK^{-1}e_i ~\delta~ e_j^\top \mK\mT)\mT^{-1}\mK^{-1}.
    \]
    We can compute $\mT^{-1}\mK^{-1}e_i$ in $\tilde O(nk)$ time because we just take a column of $\mK^{-1}$ and multiply it with a displacement rank $O(k)$ matrix via \Cref{thm:toeplitz-general-matrix-product}.
    The same holds for $e_j^\top \mK\mT$ but taking a row of $\mK$.

    We can now apply \Cref{lem:companion_update} to get $\mC'$ and $\mK',\mK'^{-1}$ in $\tilde O(n)$ operations with
    \[
    (\mA+e_i e_j^\top\cdot \delta)
    =
    \mK\mT\mK'\mC'\mK'^{-1}\mT^{-1}\mK^{-1}.
    \]
    We multiply $\mT\gets\mT\mK'$ and $\mT^{-1}\gets\mK'^{-1}\mT^{-1}$ in $\tilde O(nk)$ via \Cref{thm:toplitz-topelitz-product} because they have displacement rank $O(1)$ and $O(k)$. Then set $\mC \gets \mC'$.
    \paragraph{Query}
    Given $i,j$, we want to compute for all $\ell=1,...,n$
    \[
    \mA^\ell_{i,j}
    =
    e_i^\top \mK \mT \mC^\ell \mT^{-1} \mK^{-1} e_j.
    \]
    Here $e_i^\top \mK$ and $\mK^{-1} e_j$ are a row and a column, respectively, and thus they can be read from memory in $O(n)$ time.
    The product of these vectors with $\mT$ and $\mT^{-1}$ takes $\tilde O(nk)$ time because $\mT,\mT^{-1}$ have displacement rank $O(k)$ and their generators are stored.
    At last, we can compute the product of the vectors with $\mC^\ell$ for all $\ell$ in $\tilde O(n)$ time using~\Cref{lem:companion_powers}.
    In total, the query time is thus $\tilde O(nk)$.

    \paragraph{Reset}
    Every $k$ updates, we reset $\mT\gets\mI$ and use \Cref{thm:lowrank_frobenius_update} to obtain a new Frobenius form, interpreting the past $k$ updates as a rank $k$ update. The genericity requirement of \Cref{thm:lowrank_frobenius_update} is given since those are precisely the states of matrix $\mA$ over the past $k$ updates.
    This takes $\tilde O(\MM(n,k,k)n/k)$ operations.
    The amortized update time is thus
    \[
    \tilde O(\MM(n,k,k)n/k^2 + nk).
    \]
    This can be made worst-case via standard techniques (e.g., \cite{AbrahamCK17,BrandNS19}).
    \paragraph{Space}
    Throughout, the algorithm stored $\mA, \mK, \mK^{-1}, \mC$ and the generators of $\mT,\mT^{-1}$. The explicit matrices need $O(n^2)$ space and the generators use $O(kn)\le O(n^2)$ space.
\end{proof}

\begin{proof}[Proof of \Cref{thm:dynamiccharpoly}]
    Observe that by \Cref{def:frobenius}, the matrix $\mC$ of the Frobenius form corresponds to the characteristic polynomial.
    Thus, the proof above for \Cref{thm:dynamic-powers}, which maintains the Frobenius form, directly implies that we also maintain the characteristic polynomial.
\end{proof}

If we want to support a single batch update rather than a sequence of updates, the following theorem is more efficient. This result will be used for sensitivity distance oracles.

\begin{theorem}\label{thm:sensitivity-powers}
    We can preprocess a generic matrix $\mA\in\F^{n\times n}$ in $\tO(n^\omega)$ operations.
    Then, given a batch of $k\leq n$ entry updates such that applying each prefix of these updates keeps the matrix $\mA$ generic, we can process it in $\tilde O(k^{\omega-1}n)$ operations.
    Finally, we can support queries that w.h.p.~for any $i,j$ in $\tO(nk)$ operations return all $(\mA^\ell)_{i,j}$ for $\ell=1,...,n$.
    The data structure uses $O(n^2)$ space.
\end{theorem}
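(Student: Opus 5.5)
The plan is to keep the Frobenius form implicit, as in the proof of \Cref{thm:dynamic-powers}, but to apply the whole batch at once through \Cref{cor:companion_update} instead of one update at a time.

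\textbf{Preprocessing.} We compute the Frobenius form $\mA=\mK\mC\mK^{-1}$ with $\mK,\mK^{-1}$ explicit in $\tO(n^\omega)$ operations~\cite{Keller-Gehrig85}. A uniformly random $v$ is cyclic w.h.p.\ by \Cref{lem:sample_cyclic_vector}. We store $\mK$, $\mK^{-1}$ and $c$, which takes $O(n^2)$ space.

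\textbf{Batch processing.} We write the $k$ entry updates as a rank-$k$ update $\mU\mV^\top$ with $\mU=[\delta_1e_{i_1}|\dots|\delta_ke_{i_k}]$ and $\mV=[e_{j_1}|\dots|e_{j_k}]$. Then
\[
\mA+\mU\mV^\top=\mK(\mC+\mU'\mV'^\top)\mK^{-1},\qquad \mU'=\mK^{-1}\mU,\quad \mV'^\top=\mV^\top\mK .
\]
This is the key point where we beat the general rank-$k$ update of \Cref{thm:lowrank_frobenius_update}. Since $\mU,\mV$ consist of scaled unit vectors, $\mU'$ is just $k$ scaled columns of $\mK^{-1}$ and $\mV'$ is $k$ rows of $\mK$. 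So they can be read in $O(nk)$ time, with no $\MM(n,n,k)$ product.

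Genericity is invariant under similarity, so the prefix-genericity hypothesis transfers to $\mC$ with the update $\mU'\mV'^\top$. We then apply \Cref{cor:companion_update}. In $\tO(k^{\omega-1}n)$ operations it returns $\mC'$ together with generators of $\mK'$ and $\mK'^{-1}$, both of displacement rank $\le 6k-1$, such that
\[
\mA+\mU\mV^\top=\mK\mK'\mC'\mK'^{-1}\mK^{-1}.
\]
We store only these generators and $\mC'$. The dense products $\mK\mK'$ and $\mK'^{-1}\mK^{-1}$ are deliberately never formed, because that would cost $\MM(n,k,k)n/k$.

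\textbf{Query.} For $i,j$, we read the row $e_i^\top\mK$ and the column $\mK^{-1}e_j$ in $O(n)$ time. We multiply them by $\mK'$ and $\mK'^{-1}$ using the generators in $\tO(nk)$ operations via \Cref{thm:toeplitz-general-matrix-product} with $m=1$. This gives vectors $x,y$ with
\[
(\mA'^\ell)_{i,j}=x^\top\mC'^\ell y .
\]
\Cref{lem:companion_powers} then returns all these values for $\ell=1,\dots,n$ in $\tO(n)$. The query total is $\tO(nk)$, and the space is $O(n^2+nk)=O(n^2)$.

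\textbf{Correctness and the main obstacle.} Correctness w.h.p.\ comes from a union bound over the randomized steps: the random cyclic vectors chosen in \Cref{lem:companion_update} at each of the $O(k)$ base cases, and the inversion of generators in \Cref{lem:generator_inverse}. Each step fails with probability $1/\poly(n)$ given $|\F|\ge n^\gamma$.

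I expect no deep difficulty here, since the heavy lifting is in \Cref{cor:companion_update}. The step needing care is the accounting in the batch phase: confirming that forming $\mU',\mV'$ is only $O(nk)$ because the updates are entry updates, and that the query can use the implicit factorization without ever materializing $\mK\mK'$.
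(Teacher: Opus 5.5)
Your proposal is correct and follows essentially the same route as the paper. You compute the Frobenius form once, and you turn the batch into a rank-$k$ update of $\mC$ whose factors are just $k$ columns of $\mK^{-1}$ and $k$ rows of $\mK$, read in $O(nk)$. You then apply \Cref{cor:companion_update} to obtain $\mC'$ and generators of $\mK',\mK'^{-1}$ in $\tO(k^{\omega-1}n)$ without ever forming $\mK\mK'$ or $\mK'^{-1}\mK^{-1}$. Finally, you answer queries by multiplying a row of $\mK$ and a column of $\mK^{-1}$ by these displacement-rank-$O(k)$ matrices and then invoking \Cref{lem:companion_powers}.
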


\begin{proof}
    During initialization we compute the Frobenius form \[\mA=\mK\mC\mK^{-1}.\]
    \paragraph{Batch Update}
    Then, when a batch of $k$ entry updates is given, we encode it using matrices $\mU,\mV\in\F^{n\times k}$ that have only one non-zero entry per column. That is, $\mA+\mU\mV^\top$ represents the batch of $k$ entry updates.

    Next, we have
    \[
    \mA+\mU\mV^\top = \mK(\mC + \mK^{-1}\mU\mV^\top\mK)\mK^{-1}
    \]
    where we can compute $\mK^{-1}\mU$ and $\mV^\top\mK$ in $O(nk)$ time, since they are just $k$ columns of $\mK^{-1}$ and~$k$ rows of $\mK$.
    We compute the Frobenius form of $(\mC + \mK^{-1}\mU\mV^\top\mK)$
    \[
    (\mC + \mK^{-1}\mU\mV^\top\mK)
    =
    \mK' \mC' \mK'^{-1}
    \]
    in $\tilde O(k^{\omega-1}n)$ via \Cref{cor:companion_update}.
    Here, we use the fact that all prefix updates of $\mK^{-1}\mU\mV^\top\mK$ preserve genericity.
    
    \paragraph{Query}
    When receiving $i,j$, we have
    \[
    (\mA+\mU\mV^\top)^\ell_{i,j}
    =
    e_i^\top \mK \mK' \mC'^\ell \mK'^{-1} \mK^{-1} e_j
    \]
    where $e_i^\top \mK \mK'$ and $\mK'^{-1} \mK^{-1} e_j$ are computable in $\tilde O(nk)$ operations, as we just read a row/column of $\mK/\mK^{-1}$ and then multiply with a displacement rank-$k$ matrix via \Cref{thm:toeplitz-general-matrix-product}.
    Then computing the product with $\mC'^\ell$ for all $\ell$ takes $\tO(n)$ operations via \Cref{lem:companion_powers}.
    Thus we obtain $(\mA+\mU\mV^\top)^\ell_{i,j}$ for all $\ell=1,...,n$ in $\tilde O(nk)$ operations.
\end{proof}

\subsection{Dynamic and Sensitivity Distances via Dynamic Matrix Powers}
\label{sec:dynamic:graph}

In this section, we explain how Theorems~\ref{thm:dynamic-distances} and~\ref{thm:sensitivity-distances} follow from Theorems~\ref{thm:dynamic-powers} and~\ref{thm:sensitivity-powers}, respectively.
It is well-known that the powers of a standard adjacency matrix representation of a directed graph encode distances. However, an adjacency matrix is not necessarily generic. To make use of the algebraic algorithms and data structures developed in this paper in distance problems, we refer to the following result of~\cite{KarczmarzS23}.
\begin{theorem}[{\cite[Theorem~7.1]{KarczmarzS23}}]\label{thm:weighted-adjacency-matrix}
Let $G=(V,E)$ be an unweighted directed graph with $n$ vertices.
Let $\mA(G)\in\F^{n\times n}$ be a \emph{weighted adjacency matrix} obtained as follows.
Let \linebreak $\{x_{u,v}:u,v\in V\}\cup \{y_v:v\in V\}$ be a collection of random elements of $\F$, sampled uniformly and independently.
Define:
\[
\mA(G)_{u,v}=\begin{cases}
    x_{u,v}\cdot y_v & \text{ if }u=v\text{ or }uv\in E,\\
    0 & \text{ otherwise.}
\end{cases}
\]
Then, with probability $1-O(n^4/|\F|)$:
\begin{enumerate}[label=(\arabic*)]
\item $\mA(G)$ is generic,
\item For all $u,v\in V$ and $k=0,\ldots,n-1$, $(\mA(G)^k)_{u,v}\neq 0$ if and only if there exists a $u\to v$ path in $G$ of length at most $k$.
\end{enumerate}
\end{theorem}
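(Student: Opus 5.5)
We prove the two properties separately and finish with a union bound. Both rely on the Schwartz--Zippel lemma applied to suitable nonzero polynomials in the variables $x_{u,v},y_v$, each of degree $O(n^2)$ or so. The resulting failure probability is $O(n^4/|\F|)$ after a union bound over $O(n^3)$ triples $(u,v,k)$.

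\textbf{Property (2).} Fix $u,v$ and $k\le n-1$. Treat $(\mA(G)^k)_{u,v}$ as a polynomial in the formal variables. It is the sum over all $k$-step walks $u=w_0,w_1,\dots,w_k=v$ in the graph $G$ augmented with self-loops. Each walk contributes the monomial $\prod_{i=1}^{k} x_{w_{i-1},w_i}y_{w_i}$.

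The self-loops turn "a walk of length exactly $k$ in $G$ with loops" into "a walk of length at most $k$ in $G$". So if no $u\to v$ path of length $\le k$ exists, the polynomial is identically zero, and the entry is $0$ for every assignment.

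Conversely, take a shortest path $P$ of length $d\le k$ and pad it with $k-d$ loops at $v$. I must show its monomial is not cancelled by another walk. A monomial determines the multiset of traversed edges (with loops). I claim that among walks of length $k$, this padded shortest path has a multiset of $x$-variables that no other walk shares. Any walk using exactly the edges of $P$ once plus $k-d$ copies of loop $(v,v)$ must be this walk: since $P$ is a simple path, the only Eulerian ordering starting at $u$ traverses $P$ and then loops at $v$. Hence the coefficient of that monomial is $1$, so the polynomial is nonzero of degree $2k\le 2n$, and Schwartz--Zippel gives failure probability $\le 2n/|\F|$. A union bound over $n^3$ triples gives $O(n^4/|\F|)$.

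\textbf{Property (1).} Genericity is equivalent to $\det[w|\mA w|\dots|\mA^{n-1}w]\neq 0$ for some $w$. Equivalently, it suffices that $\mA(G)$ has $n$ distinct eigenvalues; instead I would use the cleaner route of exhibiting one assignment that makes $\mA(G)$ generic. Set all off-diagonal $x_{u,v}=0$ (allowed, since these are just evaluations), $x_{v,v}=1$, and $y_v$ distinct. Then $\mA=\mathrm{diag}(y)$ is generic, witnessed by $w=\mathbf{1}$ via the Vandermonde determinant.

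So the polynomial $f(x,y,w)=\det[w|\mA w|\dots|\mA^{n-1}w]$ is nonzero, of degree $O(n^2)$ in $x,y$ jointly with $w$. To avoid random $w$, consider $f$ as a polynomial in $x,y$ with $w$ fixed to $\mathbf{1}$. Its degree is at most $\sum_{i<n} 2i=O(n^2)$, and it is nonzero by the evaluation above. Schwartz--Zippel then gives failure probability $O(n^2/|\F|)$.

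\textbf{Main obstacle.} The subtle step is the non-cancellation argument in (2), namely that the padded shortest path's monomial is unique among all $k$-walks. The variables $y$ are redundant given the $x$'s (each $y_{w_i}$ is determined by the edge), so everything hinges on showing that the edge multiset pins down the walk. I would verify this carefully via the Eulerian-trail uniqueness for a simple path with loops attached only at its endpoint.
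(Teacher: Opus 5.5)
The paper does not prove this statement; it imports it from \cite{KarczmarzS23}, so there is no in-paper argument to match against. Your proof is correct and complete on its own.

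Part (2) is the standard walk-expansion argument, and you handle the non-cancellation step properly. Distinct pairs $(a,b)$ carry distinct variables $x_{a,b}$, so a monomial determines the multiset of traversed edges (loops included). A simple path padded with loops at its endpoint is the unique $k$-step walk with that multiset, so its coefficient is $1$. The union bound over $n^3$ triples, each polynomial of degree $2k\le 2n$, is exactly where the $O(n^4/|\F|)$ comes from.

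For (1), your route likely differs from the source, judging by the shape of the construction. Write $\mA(G)=\mathbf{X}\,\mathrm{diag}(y)$. The factors $y_v$ are what a random-diagonal-preconditioner argument needs: a matrix with nonzero leading principal minors becomes cyclic after multiplication by a random diagonal matrix. Here the random diagonal entries $x_{v,v}$ would make the leading principal minors of $\mathbf{X}$ nonzero.

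You instead apply Schwartz--Zippel directly to the Krylov determinant $\det[\mathbf{1}\,|\,\mA\mathbf{1}\,|\cdots|\,\mA^{n-1}\mathbf{1}]$, of degree $n(n-1)$. You show it is a nonzero polynomial via one legitimate evaluation: off-diagonal $x$'s set to $0$, giving a diagonal matrix and a Vandermonde determinant. This is more elementary, gives the explicit bound $n(n-1)/|\F|$, and yields the cyclic vector $\mathbf{1}$ for free. It also shows the $y_v$ factors are dispensable: distinct $x_{v,v}$ with $y\equiv 1$ would serve as the witness equally well. The preconditioner route buys something different, namely genericity for every fixed $\mathbf{X}$ with nonzero leading principal minors, using only the randomness in $y$.

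One minor slip: having $n$ distinct eigenvalues is sufficient for genericity, not equivalent to it. You never use that claim, so it does not affect the proof.
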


We can now prove our main dynamic graph results.

\dynamicdistances*

\newcommand{\ds}{\mathcal{D}}

\begin{proof}
Suppose we want to support $q=\poly(n)$ updates to~$G$ so that the produced query answers are all correct with high probability, that is, with probability at least $1-n^{-c}$ for some desired constant $c>0$. 
Let $G_1,\ldots,G_q$ denote the subsequent versions of $G$.

Let the field $\F$ be of $\poly(n)$ size, where the polynomial will be chosen later.
Let a collection of field elements ${X=\{x_{u,v}:u,v\in V\}\cup \{y_v:v\in V\}}$ be sampled as described in~\Cref{thm:weighted-adjacency-matrix}.
Note that as long as the performed updates are independent of the sampled values, the same collection~$X$ guarantees the desired properties~(1)~and~(2) of~\Cref{thm:weighted-adjacency-matrix} for weighted adjacency matrices $\mA(G_i)$ of all graph versions~$G_i$.
That is, the matrices $\mA(G_i)$ obtained from $X$ are all generic, and their powers encode distances in the evolving graph $G$ with probability at least $1-O(qn^4/|\F|)$.
If we pick $\F=\mathbb{Z}/p\mathbb{Z}$ for a prime number $p=\Omega(qn^{4+c})$, we get the desired success probability.

The required independence of updates and sampled values holds for any output-adaptive adversary, since we return the distance which is a deterministically defined property.

The data structure starts by sampling a collection $X$ of~\Cref{thm:weighted-adjacency-matrix}. 
The sampled values are fixed for the entire operation of the data structure.
Then, the weighted adjacency matrix $\mA(G)$ of~\Cref{thm:weighted-adjacency-matrix} is set up using $X$,
and a dynamic matrix powers data structure $\ds$ of~\Cref{thm:dynamic-powers} is initialized.
Observe that updating the matrix $\mA(G)$ upon insertion or deletion of an edge $uv$ requires only changing a single element $\mA(G)_{uv}$.
As a result, updating $\ds$ costs $\tO(nk+\MM(n,k,k)n/k^2)$ time.
By~\Cref{thm:weighted-adjacency-matrix}, an $st$-distance query about the current graph $G$ can be answered, with high probability, by inspecting the values $(\mA(G)^1)_{st},\ldots,(\mA(G)^{n-1})_{st}$.
By~\Cref{thm:dynamic-powers}, these values can be computed using $\ds$ in $\tO(nk)$ time.
Note that the required independence of $X$ and the updates is guaranteed since our graph data structures only reveal distance information encoded in a specific graph version $G_i$ itself, excluding the randomness behind the choice of~$X$, except perhaps in the low-probability event of failure.
\Cref{thm:dynamic-distances} follows.
\end{proof}

Our result for sensitivity distance oracles follows the same overall method, but rather than a sequence of updates, we perform a single batch update.

\thmsensitivity*

\begin{proof}
To obtain~\Cref{thm:sensitivity-distances}, we proceed similarly to the dynamic setting in~\Cref{thm:dynamic-distances}, exploiting the fact that the sample $X$ can be reused for many different graph versions.
We set up a data structure $\ds$ of~\Cref{thm:sensitivity-powers} for the initial matrix~$\mA(G)$.
To enable efficient distance access in the graph $G'$ obtained by applying a batch of $k$ edge updates to $G$, we issue the corresponding 
$k$-element
update to $\ds$.
Note that the prefix-genericity property required by~\Cref{thm:sensitivity-powers} can be guaranteed with high probability analogously to the proof of~\Cref{thm:dynamic-distances}: if we impose an arbitrary order on the $k$ edge updates, we can view the prefix updates as $k$ intermediate graph versions.
Hence, at the cost of $\tO(k^{\omega-1}n)$ batch-preprocessing time, we can access the elements $(\mA(G')^1)_{st},\ldots,(\mA(G')^{n-1})_{st}$ in $\tO(nk)$ time for any $s,t\in V$.
Again,~\Cref{thm:weighted-adjacency-matrix} implies that an $st$-distance query in $G'$ can be answered in $\tO(nk)$ time with high probability.
\end{proof}

\section*{AI Disclosure}

When asked for a rank-$k$ update for the Frobenius form (i.e., as our \Cref{thm:lowrankfrobenius}), GPT-5.2 Pro (February 2026) produced a proof leading to $\tilde O(k^{\omega-2}n^2 + k^\omega n)$ time, making the connection to displacement rank evident.
After we became aware of this connection, we proved the more efficient \Cref{thm:lowrankfrobenius} from scratch and developed the dynamic algorithms (e.g., implicit maintenance of Frobenius form).

\clearpage

\bibliography{ref}
\bibliographystyle{alpha}

\end{document}